\documentclass[12pt]{article}
\usepackage{Sloczynski_style}

\renewcommand{\thesection}{\Roman{section}}
\renewcommand{\thesubsection}{\Alph{subsection}}

\renewcommand\footnotemark{}

\begin{document}
\doublespacing

\title{\textsc{\Large{Interpreting OLS Estimands When Treatment Effects Are Heterogeneous: Smaller Groups Get Larger Weights}}\thanks{\footnotesize{Tymon Słoczyński is an Assistant Professor at the Department of Economics and International Business School, Brandeis University. E-mail:~tslocz@brandeis.edu.}}
}
\author{\textsc{\large{Tymon Słoczyński}}\thanks{This version: May 18, 2020. This paper is based on portions of my previous working paper, \cite{Sloczynski2018}. I thank the editor and two anonymous referees for their helpful comments. I am very grateful to Alberto Abadie, Josh Goodman, Max Kasy, Pedro Sant'Anna, and Jeff Wooldridge for many comments and discussions. I also thank Arun Advani, Isaiah Andrews, Josh Angrist, Orley Ashenfelter, Richard Blundell, Stéphane Bonhomme, Carol Caetano, Marco Caliendo, Matias Cattaneo, Gary Chamberlain, Todd Elder, Alfonso Flores-Lagunes, Brigham Frandsen, Florian Gunsilius, Andreas Hagemann, James Heckman, Kei Hirano, Peter Hull, Macartan Humphreys, Guido Imbens, Krzysztof Karbownik, Shakeeb Khan, Toru Kitagawa, Pat Kline, Paweł Królikowski, Nicholas Longford, James MacKinnon, Łukasz Marć, Doug Miller, Michał Myck, Mateusz Myśliwski, Gary Solon, Jann Spiess, Michela Tincani, Alex Torgovitsky, Joanna Tyrowicz, Takuya Ura, Rudolf Winter-Ebmer, seminar participants at BC, Brandeis, Harvard--MIT, Holy Cross, IHS Vienna, Lehigh, MSU, Potsdam, SDU Odense, SGH, Temple, UCL, Upjohn, and WZB Berlin, and many conference participants for useful feedback. I thank Mark McAvoy for his excellent assistance in developing the R package hettreatreg that implements the results in this paper. I also thank David Card, Jochen Kluve, and Andrea Weber for providing me with supplementary data on the articles surveyed in \cite{CKW2018}. I acknowledge financial support from the National Science Centre (grant DEC-2012/05/N/HS4/00395), the Foundation for Polish Science (a ``Start'' scholarship), the ``Weź stypendium---dla rozwoju'' scholarship program, and the Theodore and Jane Norman Fund.}}
\date{}

\begin{titlepage}
\maketitle
\renewcommand{\abstractname}{}
\vspace{-40pt}
\begin{abstract}
\begin{small}
\noindent
Applied work often studies the effect of a binary variable (``treatment'') using linear models with additive effects. I study the interpretation of the OLS estimands in such models when treatment effects are heterogeneous. I show that the treatment coefficient is a convex combination of two parameters, which under certain conditions can be interpreted as the average treatment effects on the treated and untreated. The weights on these parameters are inversely related to the proportion of observations in each group. Reliance on these implicit weights can have serious consequences for applied work, as I illustrate with two well-known applications. I develop simple diagnostic tools that empirical researchers can use to avoid potential biases. Software for implementing these methods is available in R and Stata. In an important special case, my diagnostics only require the knowledge of the proportion of treated units.
\end{small}
\end{abstract}
\thispagestyle{empty}
\end{titlepage}

\setcounter{page}{2}

\setlength\abovedisplayskip{5pt}
\setlength\belowdisplayskip{5pt}
\setlength\abovedisplayshortskip{0pt}
\setlength\belowdisplayshortskip{5pt}

\section{Introduction}
\label{sec:intro}

Many applied researchers study the effect of a binary variable (``treatment'') on the expected value of an outcome of interest, holding fixed a vector of control variables. As noted by \cite{Imbens2015}, despite the availability of a large number of semi- and nonparametric estimators for average treatment effects, applied researchers often continue to use conventional regression methods. In particular, numerous studies use ordinary least squares (OLS) to estimate
\begin{equation}
\label{ols}
y = \alpha + \tau d + X \beta + u,
\end{equation}
where $y$ denotes the outcome, $d$ denotes the treatment, and $X$ denotes the row vector of control variables, $\left( x_1, \ldots, x_K \right)$. Usually, $\tau$ is interpreted as the average treatment effect (ATE)\@. This estimation strategy is used in many influential papers in economics \citep[e.g.,][]{VV2012, AGN2013, AEFLM2016}, as well as in other disciplines.

The great appeal of the model in (\ref{ols}) comes from its simplicity \citep{AP2009}. At the same time, however, a large body of evidence demonstrates the importance of heterogeneity in effects \citep[see, e.g.,][]{Heckman2001, BGH2006}, which is explicitly ruled out by this same model. In this paper I contribute to the recent literature on interpreting $\tau$, the OLS estimand, when treatment effects are heterogeneous \citep{Angrist1998, Humphreys2009, AS2016}. I demonstrate that $\tau$ is a convex combination of two parameters, which under certain conditions can be interpreted as the average treatment effects on the treated (ATT) and untreated (ATU)\@. Surprisingly, the weight that is placed by OLS on the average effect for each group is inversely related to the proportion of observations in this group. The more units are treated, the less weight is placed on ATT\@. One interpretation of this result is that OLS estimation of the model in (\ref{ols}) is generally inappropriate when treatment effects are heterogeneous.

It is also possible, however, to present a more pragmatic view of my main result. I derive a number of corollaries of this result which suggest several diagnostic methods that I recommend to applied researchers. These diagnostics are applicable whenever the researcher is: \textit{(i)} studying the effects of a binary treatment, \textit{(ii)} using OLS, and \textit{(iii)} unwilling to maintain that ATT is exactly equal to ATU\@. Typically, such a homogeneity assumption would be undesirably strong, because those choosing or chosen for treatment may have unusually high or low returns from that treatment, which would directly contradict the equality of ATT and ATU\@.

In deriving my diagnostics, I assume that the researcher is ultimately interested in ATE, ATT, or both, and that she wishes to estimate the model in (\ref{ols}) using OLS but is concerned about treatment effect heterogeneity. In this case, my diagnostics are able to detect deviations of the OLS weights from the pattern that would be necessary to consistently estimate a given parameter. These diagnostics are easy to implement and interpret; they are bounded between zero and one in absolute value and they give the proportion of the difference between ATU and ATT (or between ATT and ATU) that contributes to bias. Thus, if a given diagnostic is close to zero, OLS is likely a reasonable choice; but if a diagnostic is far from zero, other methods should be used.

In an important special case, these diagnostics become particularly simple and immediate to report. If we wish to estimate ATT, this ``rule of thumb'' variant of my diagnostic is equal to the proportion of treated units, $\pr \left( d=1 \right)$; if our goal is to estimate ATE, the diagnostic is equal to $2 \cdot \pr \left( d=1 \right) - 1$, twice the deviation of $\pr \left( d=1 \right)$ from 50\%. In short, OLS is expected to provide a reasonable approximation to ATE if both groups, treated and untreated, are of similar size. If we wish to estimate ATT, it is necessary that the proportion of treated units is very small.

It follows that OLS might often be substantially biased for ATE, ATT, or both. How common are these biases in practice? In a subset of 37 estimates from \cite{CKW2018}, a recent survey of evaluations of active labor market programs, the mean proportion of treated units is 17.7\%.\footnote{This sample is restricted to studies that \cite{CKW2018} coded as ``selection on observables'' and ``regression.''} Using the ``rule of thumb'' variants of my diagnostics, I establish that on average the difference between the OLS estimand and ATE is expected to correspond to 64.6\% of the difference between ATT and ATU\@. Similarly, the expected difference between OLS and ATT is on average equal to 17.7\% of the difference between ATU and ATT\@. In other words, these biases might often be large.

The remainder of the paper is organized as follows. Section \ref{sec:theory} presents a leading example and the main theoretical results. Section \ref{sec:empirical} discusses two empirical applications. In a study of the effects of a training program \citep{LaLonde1986}, OLS estimates are very similar to $\widehat{\mathrm{ATT}}$\@. On the other hand, in a study of the effects of cash transfers \citep{AEFLM2016}, OLS estimates are similar to $\widehat{\mathrm{ATU}}$\@. Section \ref{sec:conclusion} concludes. Proofs and several extensions are provided in the online appendices. The main results are implemented in newly developed R and Stata packages, \texttt{\small{hettreatreg}}.

\section{A Weighted Average Interpretation of OLS}
\label{sec:theory}

\subsection{Leading Example}
\label{sec:example}

To illustrate the problem with OLS weights, consider the classic example of the National Supported Work (NSW) program. Because this program originally involved a social experiment, the difference in mean outcomes between the treated and control units provides an unbiased estimate of the effect of treatment. \cite{LaLonde1986} studies the performance of various estimators at reproducing this experimental benchmark when the experimental controls are replaced by an artificial comparison group from the Current Population Survey (CPS) or the Panel Study of Income Dynamics (PSID)\@. \cite{AP2009} reanalyze the NSW--CPS data and conclude that OLS estimates of the effect of NSW program on earnings in 1978 are similar to the experimental benchmark of \$1,794.\footnote{Subsequently to \cite{LaLonde1986}, these data were studied by \cite{DW1999}, \cite{ST2005}, and many others. \cite{AP2009} analyze the subsample of the experimental treated units constructed by \cite{DW1999}, combined with ``CPS-1'' or ``CPS-3,'' i.e.~two of the nonexperimental comparison groups from CPS, constructed by \cite{LaLonde1986}. In this replication, I focus on ``CPS-1.''} In particular, their richest specification delivers an estimate of \$794. As I will show, this conclusion is driven by the small proportion of treated units in these data.

In this example, ATT and ATU are likely to be substantially different. This is because the treated group, unlike the CPS comparison (untreated) group, was highly economically disadvantaged. It is plausible that ATU might be zero or, due to the opportunity cost of program participation, even negative. Also, only 1.1\% of the sample was treated, so ATE and ATU will be similar.

To demonstrate this, I modify the model in (\ref{ols}) to include all interactions between $d$ and $X$\@. Estimation of this expanded model, again using OLS, allows us to separately compute $\widehat{\mathrm{ATE}}$, $\widehat{\mathrm{ATT}}$, and $\widehat{\mathrm{ATU}}$. This method is usually referred to as ``regression adjustment'' \citep{Wooldridge2010} or ``Oaxaca--Blinder'' \citep{Kline2011, GP2018}. Using the control variables that deliver the estimate of \$794, we obtain $\widehat{\mathrm{ATE}} = -\$4 \mathrm{,} 930$, $\widehat{\mathrm{ATT}} = \$796$, and $\widehat{\mathrm{ATU}} = -\$4 \mathrm{,} 996$. It turns out that, since $\widehat{\mathrm{ATE}}$ and $\widehat{\mathrm{ATU}}$ are indeed negative, the OLS estimate and $\widehat{\mathrm{ATE}}$ have different signs. Moreover, if we represent the OLS estimate as a weighted average of $\widehat{\mathrm{ATT}}$ and $\widehat{\mathrm{ATU}}$ with weights that sum to unity, we can write $\$794 = \hat{w}_{ATT} \cdot \$796 + \left( 1-\hat{w}_{ATT} \right) \cdot \left( -\$4 \mathrm{,} 996 \right)$, where $\hat{w}_{ATT}$ is the weight on $\widehat{\mathrm{ATT}}$. Solving for $\hat{w}_{ATT}$ yields $\hat{w}_{ATT} = 99.96\%$. In other words, the hypothetical OLS weight on the effect on the treated is similar to the proportion of untreated units, 98.9\%.

This ``weight reversal'' is not a coincidence. As I demonstrate below, the intuition from this example holds more generally, even though the OLS estimand is not necessarily a convex combination of two parameters from a procedure that controls for the full vector $X$\@.

\subsection{Main Result}

This section presents my main result, which focuses on the algebra of OLS and ``descriptive'' estimands that I define below. A causal interpretation of OLS also requires introducing the notion of potential outcomes as well as certain conditions that I discuss in section \ref{sec:theory}\ref{sec:causal}, including an ignorability assumption. However, this is not needed for my main result.

If $\lp \left( \cdot \mid \cdot \right)$ denotes the linear projection, we are interested in the interpretation of $\tau$ in the linear projection of $y$ on $d$ and $X$,
\begin{equation}
\label{lp_y}
\lp \left( y \mid 1, d, X \right) = \alpha + \tau d + X \beta,
\end{equation}
when this linear projection does not correspond to the (structural) conditional mean. Let
\begin{equation}
\rho = \pr \left( d=1 \right)
\end{equation}
be the unconditional probability of treatment and let
\begin{equation}
\label{lp_d}
p \left( X \right) = \lp \left( d \mid 1, X \right) = \alpha_p + X \beta_p
\end{equation}
be the ``propensity score'' from the linear probability model or, equivalently, the best linear approximation to the true propensity score. Generally, the specification in (\ref{lp_y}) and (\ref{lp_d}) can be arbitrarily flexible, so this approximation can be made very accurate; in fact, we can think of equation (\ref{lp_y}) as partially linear, where we may include powers and cross-products of original control variables.

After defining $p \left( X \right)$, it is helpful to introduce two linear projections of $y$ on $p \left( X \right)$, separately for $d=1$ and $d=0$, namely
\begin{equation}
\label{lp_y1}
\lp \left[ y \mid 1, p \left( X \right), d=1 \right] = \alpha_1 + \gamma_1 \cdot p \left( X \right)
\end{equation}
and also
\begin{equation}
\label{lp_y0}
\lp \left[ y \mid 1, p \left( X \right), d=0 \right] = \alpha_0 + \gamma_0 \cdot p \left( X \right).
\end{equation}
Note that equations (\ref{lp_d}), (\ref{lp_y1}), and (\ref{lp_y0}) are definitional. It is sufficient for my main result that the linear projections introduced so far exist and are unique.

\begin{assumption}
\label{ass:ols}
(i) $\e ( y^2 )$ and $\e ( \| X \| ^2 )$ are finite. (ii) The covariance matrix of $\left( d,X \right)$ is nonsingular.
\end{assumption}

\begin{assumption}
\label{ass:px}
$\var \left[ p \left( X \right) \mid d=1 \right]$ and $\var \left[ p \left( X \right) \mid d=0 \right]$ are nonzero, where $\var \left( \cdot \mid \cdot \right)$ denotes the conditional variance (with respect to $\e \left[ p \left( X \right) \mid d=j \right]$, $j=0,1$).
\end{assumption}

\noindent
Assumption \ref{ass:ols} guarantees the existence and uniqueness of the linear projections in (\ref{lp_y}) and (\ref{lp_d}). Similarly, Assumption \ref{ass:px} ensures that the linear projections in (\ref{lp_y1}) and (\ref{lp_y0}) exist and are unique.\footnote{Both assumptions are generally innocuous, although Assumption \ref{ass:px} rules out a small number of interesting applications, such as regression adjustments in Bernoulli trials and completely randomized experiments. In these cases, however, OLS is consistent for the average treatment effect under general conditions \citep{IR2015}.}

The next step is to use the linear projections in (\ref{lp_y1}) and (\ref{lp_y0}) to define the average partial linear effect of $d$ as
\begin{equation}
\label{tau_ape}
\tau_{APLE} = \left( \alpha_1 - \alpha_0 \right) + \left( \gamma_1 - \gamma_0 \right) \cdot \e \left[ p \left( X \right) \right]
\end{equation}
as well as the average partial linear effect of $d$ on group $j$ ($j=0,1$) as
\begin{equation}
\label{tau_apej}
\tau_{APLE, j} = \left( \alpha_1 - \alpha_0 \right) + \left( \gamma_1 - \gamma_0 \right) \cdot \e \left[ p \left( X \right) \mid d=j \right].
\end{equation}
These estimands are well defined under Assumptions \ref{ass:ols} and \ref{ass:px}, and have a causal interpretation under additional assumptions, as discussed in section \ref{sec:theory}\ref{sec:causal} below.\footnote{Moreover, $\tau_{APLE}$ is similar to the ``average regression coefficient'' or ``average slope coefficient'' in \cite{GP2018}, which is also a descriptive estimand in the sense of \cite{AAIW2020}.} When the linear projections in equations (\ref{lp_y1}) and (\ref{lp_y0}) represent the conditional mean of $y$, the average partial linear effects of $d$ overlap with its average partial effects. It should be stressed, however, that Theorem \ref{the:ols}, the main result of this paper, is more general and only requires Assumptions \ref{ass:ols} and \ref{ass:px}.

\begin{theorem}[Weighted Average Interpretation of OLS]
\label{the:ols}
Under Assumptions \ref{ass:ols} and \ref{ass:px},
\begin{eqnarray}
\tau &=& w_1 \cdot \tau_{APLE, 1} + w_0 \cdot \tau_{APLE, 0},
\nonumber
\end{eqnarray}
where $w_1 = \frac{\left( 1 - \rho \right) \cdot \var \left[ p \left( X \right) \mid d=0 \right]}{\rho \cdot \var \left[ p \left( X \right) \mid d=1 \right] + \left( 1 - \rho \right) \cdot \var \left[ p \left( X \right) \mid d=0 \right]}$ and $w_0 = 1 - w_1 = \frac{\rho \cdot \var \left[ p \left( X \right) \mid d=1 \right]}{\rho \cdot \var \left[ p \left( X \right) \mid d=1 \right] + \left( 1 - \rho \right) \cdot \var \left[ p \left( X \right) \mid d=0 \right]}$.
\end{theorem}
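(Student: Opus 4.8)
The plan is to partial out $X$, replace $y$ by the fitted values of the within-group projections, and then read off the stated weights as the unique choice that eliminates a nuisance term. By the Frisch--Waugh--Lovell theorem (partialling out), which applies under Assumption~\ref{ass:ols}, $\tau = \e(y \tilde d)/\e(\tilde d^2)$, where $\tilde d = d - p(X)$ is the population residual from the linear projection in~(\ref{lp_d}); here I have used $\e(\tilde d) = \e(d) - \e[p(X)] = 0$, which holds because~(\ref{lp_d}) includes an intercept. The other property I need is that $\tilde d$ is orthogonal to every element of the span of $\{1, X\}$, hence $\e[\tilde d \, p(X)] = 0$, so that $\e[p(X)^2] = \e[d \, p(X)]$. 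Writing $\mu_j = \e[p(X) \mid d=j]$ and $m_j = \e[p(X)^2 \mid d=j]$ and conditioning on the binary $d$, these facts together with $\e[p(X)] = \rho$ give the three identities I will use: $(1-\rho)\mu_0 = \rho(1-\mu_1)$; $(1-\rho)m_0 = \rho(\mu_1 - m_1)$; and $\e(\tilde d^2) = \var(\tilde d) = \var(d) - \var[p(X)] = \rho(1-\mu_1)$ (the second equality because $p(X)$ and $\tilde d$ are uncorrelated and $d = p(X) + \tilde d$), which is nonzero by Assumption~\ref{ass:ols}(ii).

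Next, I would decompose $y = \hat y + \varepsilon$ with $\hat y = d(\alpha_1 + \gamma_1 p(X)) + (1-d)(\alpha_0 + \gamma_0 p(X))$ collecting the fitted values of~(\ref{lp_y1})--(\ref{lp_y0}) and $\varepsilon$ the residual. The normal equations defining those projections give $\e(\varepsilon \mid d) = \e[\varepsilon \, p(X) \mid d] = 0$, so $\e(\varepsilon \tilde d) = \e(\varepsilon d) - \e[\varepsilon p(X)] = 0$ and hence $\e(y \tilde d) = \e(\hat y \tilde d)$. Conditioning on $d$ (note $\tilde d = 1 - p(X)$ on $\{d=1\}$ and $\tilde d = -p(X)$ on $\{d=0\}$) gives $\e(\hat y \tilde d) = \rho[\alpha_1(1-\mu_1) + \gamma_1(\mu_1 - m_1)] - (1-\rho)[\alpha_0 \mu_0 + \gamma_0 m_0]$. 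Using the first two identities to substitute for the $d=0$ moments, and alternatively for the $d=1$ moments, this single quantity takes two equivalent forms, $\rho[(1-\mu_1)(\alpha_1-\alpha_0) + (\mu_1-m_1)(\gamma_1-\gamma_0)]$ and $(1-\rho)[\mu_0(\alpha_1-\alpha_0) + m_0(\gamma_1-\gamma_0)]$. Substituting $m_j = \mu_j^2 + v_j$ with $v_j = \var[p(X) \mid d=j]$ and collecting the $\tau_{APLE, j}$ terms, these become $\e(\hat y \tilde d) = \rho(1-\mu_1)\,\tau_{APLE, 1} - \rho v_1 (\gamma_1 - \gamma_0)$ and $\e(\hat y \tilde d) = (1-\rho)\mu_0 \, \tau_{APLE, 0} + (1-\rho)v_0 (\gamma_1 - \gamma_0)$.

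Then I would take the convex combination of these two expressions with weight $w_1$ on the first and $1-w_1$ on the second. The coefficient multiplying $(\gamma_1 - \gamma_0)$ is then $-w_1 \rho v_1 + (1-w_1)(1-\rho)v_0$, which vanishes exactly when $w_1 = (1-\rho)v_0 / [\rho v_1 + (1-\rho)v_0]$ and $w_0 = 1 - w_1 = \rho v_1 / [\rho v_1 + (1-\rho)v_0]$ --- the weights in the statement, well defined because Assumption~\ref{ass:px} makes the common denominator positive. With this choice the nuisance term drops out, leaving $\e(\hat y \tilde d) = w_1 \rho(1-\mu_1)\,\tau_{APLE, 1} + w_0 (1-\rho)\mu_0 \, \tau_{APLE, 0}$; dividing by $\e(\tilde d^2) = \rho(1-\mu_1) = (1-\rho)\mu_0$ yields $\tau = w_1 \tau_{APLE, 1} + w_0 \tau_{APLE, 0}$, as claimed.

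The main obstacle I anticipate is conceptual rather than computational: a direct evaluation of $\e(\hat y \tilde d)$ leaves a stubborn term proportional to $(\gamma_1 - \gamma_0)$ that cannot be absorbed into either $\tau_{APLE, j}$ by itself, and the resolution is to average two algebraically equivalent expressions for the same covariance and require that this term cancel, which is precisely what determines the weights. That the resulting weights involve only $\rho$ and the two conditional variances of $p(X)$ --- and not the slope coefficients $\gamma_j$ --- is the compatibility that makes the theorem true, and it ultimately rests on the identities assembled in the first step.
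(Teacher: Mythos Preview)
Your proof is correct and self-contained, but it proceeds differently from the paper's main argument. The paper first uses Frisch--Waugh only to reduce to the projection of $y$ on $d$ and $p(X)$, then invokes a ready-made decomposition lemma due to \cite{EGH2010} that expresses the coefficient on $d$ as a convex combination (with exactly the weights $w_0,w_1$) of two Oaxaca--Blinder--type quantities $\theta_0^*,\theta_1^*$; the remaining work is to identify $\theta_1^*=\tau_{APLE,0}$ and $\theta_0^*=\tau_{APLE,1}$ via standard Oaxaca--Blinder identities. Your route instead partials $X$ out completely, replaces $y$ by the fitted values of the group-specific projections, and then uses the device of writing the single covariance $\e(\hat y\tilde d)$ in two algebraically equivalent forms and averaging them so that the $(\gamma_1-\gamma_0)$ nuisance term cancels---this cancellation condition is what \emph{determines} the weights. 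What the paper's approach buys is a direct link to the decomposition literature (the weights and the Oaxaca--Blinder structure come from a cited lemma); what your approach buys is an elementary derivation from first principles with no external lemma, and a transparent explanation for why the weights depend only on $\rho$ and the conditional variances of $p(X)$. The paper's alternative proof in its online appendix is closer in spirit to yours---it also works through partial residualization---but still leans on a Deaton/Solon--Haider--Wooldridge lemma to decompose $\gamma_a$, whereas you do the whole computation by hand.
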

\begin{proof}
See online appendix \ref{app:proofs}\@.
\end{proof}

\noindent
Theorem \ref{the:ols} shows that $\tau$, the OLS estimand, is a convex combination of $\tau_{APLE, 1}$ and $\tau_{APLE, 0}$. The definition of $\tau_{APLE, j}$ makes it clear that $\tau$ is equivalent to the outcome of a particular three-step procedure. In the first step, we obtain $p \left( X \right)$, i.e.~the ``propensity score.'' Next, in the second step, we obtain $\tau_{APLE, 1}$ and $\tau_{APLE, 0}$, as in (\ref{tau_apej}), from two linear projections of $y$ on $p \left( X \right)$, separately for $d=1$ and $d=0$. This is analogous to the ``regression adjustment'' procedure in section \ref{sec:theory}\ref{sec:example}, although now we control for $p \left( X \right)$ rather than the full vector $X$\@. Finally, in the third step, we calculate a weighted average of $\tau_{APLE, 1}$ and $\tau_{APLE, 0}$. The weight on $\tau_{APLE, 1}$, $w_1$, is decreasing in $\frac{\var \left[ p \left( X \right) \mid d=1 \right]}{\var \left[ p \left( X \right) \mid d=0 \right]}$ and $\rho$ and the weight on $\tau_{APLE, 0}$, $w_0$, is increasing in $\frac{\var \left[ p \left( X \right) \mid d=1 \right]}{\var \left[ p \left( X \right) \mid d=0 \right]}$ and $\rho$.\footnote{A formal proof that the relationship between $\rho$ and $w_1$ ($w_0$) is indeed always negative (positive) is provided in online appendix \ref{app:monotonic}\@. This proof additionally assumes that the conditional mean of $d$ is linear in $X$\@.} This is clearly undesirable, since $\tau_{APLE} = \rho \cdot \tau_{APLE, 1} + \left( 1 - \rho \right) \cdot \tau_{APLE, 0}$.

This weighting scheme is also surprising: the more units belong to group $j$, the less weight is placed on $\tau_{APLE, j}$, i.e.~the effect \textit{for this group}. There are several ways to provide intuition for this result. One is provided in the next section. Another intuition follows from an alternative proof of Theorem \ref{the:ols}, which is provided with discussion in online appendix \ref{app:resid}\@. It parallels the intuition in \cite{Angrist1998} and \cite{AP2009} that OLS gives more weight to treatment effects that are better estimated in finite samples.\footnote{This proof uses a result from \cite{Deaton1997} and \cite{SHW2015} as a lemma. The main proof of Theorem \ref{the:ols} uses a result on decomposition methods from \cite{EGH2010}. See online appendix \ref{app:proofs} for more details.}

\subsection{Causal Interpretation}
\label{sec:causal}

The fact that Theorem \ref{the:ols} only requires the existence and uniqueness of several linear projections makes this result very general. On the other hand, one concern about this result might be that $\tau_{APLE, 1}$ and $\tau_{APLE, 0}$ do not necessarily correspond to the usual (causal) objects of interest. To define these objects, we need two potential outcomes, $y(1)$ and $y(0)$, only one of which is observed for each unit, $y = y(d) = y(1) \cdot d + y(0) \cdot \left( 1-d \right)$. The parameters of interest, ATE, ATT, and ATU, are defined as $\tau_{ATE} = \e \left[ y(1) - y(0) \right]$, $\tau_{ATT} = \e \left[ y(1) - y(0) \mid d=1 \right]$, and $\tau_{ATU} = \e \left[ y(1) - y(0) \mid d=0 \right]$. A causal interpretation of OLS also entails the following assumptions.

\begin{assumption}[Ignorability in Mean]
\label{ass:uncon}
(i) $\e \left[ y(1) \mid X,d \right] = \e \left[ y(1) \mid X \right]$; and (ii) $\e \left[ y(0) \mid X,d \right] = \e \left[ y(0) \mid X \right]$.
\end{assumption}

\begin{assumption}
\label{ass:lin}
(i) $\e \left[ y(1) \mid X \right] = \alpha_1 + \gamma_1 \cdot p \left( X \right)$; and (ii) $\e \left[ y(0) \mid X \right] = \alpha_0 + \gamma_0 \cdot p \left( X \right)$.
\end{assumption}

\noindent
Assumptions \ref{ass:uncon} and \ref{ass:lin} ensure that $\tau$ admits a causal interpretation. Assumption \ref{ass:uncon} is standard in the program evaluation literature \citep{Wooldridge2010}. Assumption \ref{ass:lin} is not commonly used. Sufficient for this assumption, but not necessary, is that the conditional mean of $d$ is linear in $X$ and the conditional means of $y(1)$ and $y(0)$ are linear in the true propensity score, which is now equal to $p \left( X \right)$. Linearity of $\e \left( d \mid X \right)$ is assumed in \cite{AS2016} and \cite{AAIW2020}. This assumption is not necessarily strong, since $X$ might include powers and cross-products of original control variables. It is also satisfied automatically in saturated models, as in \cite{Angrist1998} and \cite{Humphreys2009}. The linearity assumption for $\e \left[ y(1) \mid p \left( X \right) \right]$ and $\e \left[ y(0) \mid p \left( X \right) \right]$ dates back to \cite{RR1983} but is restrictive. See also \cite{IW2009} and \cite{Wooldridge2010} for a discussion.

\begin{corollary}[Causal Interpretation of OLS]
\label{cor:causal}
Under Assumptions \ref{ass:ols}, \ref{ass:px}, \ref{ass:uncon}, and \ref{ass:lin},
\begin{eqnarray}
\tau & = & w_1 \cdot \tau_{ATT} + w_0 \cdot \tau_{ATU}.
\nonumber
\end{eqnarray}
\end{corollary}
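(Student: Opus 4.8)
The plan is to invoke Theorem~\ref{the:ols} directly: under Assumptions~\ref{ass:ols} and~\ref{ass:px} it already gives $\tau = w_1 \cdot \tau_{APLE,1} + w_0 \cdot \tau_{APLE,0}$, and the weights are untouched by the new hypotheses, so the corollary is immediate once I show that Assumptions~\ref{ass:uncon} and~\ref{ass:lin} imply $\tau_{APLE,1} = \tau_{ATT}$ and $\tau_{APLE,0} = \tau_{ATU}$. The whole argument is thus an identification step for the two group-specific average partial linear effects.

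First I would verify that the coefficients appearing in Assumption~\ref{ass:lin} really are the coefficients of the linear projections~(\ref{lp_y1}) and~(\ref{lp_y0}). For the treated group, Assumption~\ref{ass:uncon}(i) gives $\e[y \mid X, d=1] = \e[y(1) \mid X, d=1] = \e[y(1) \mid X]$, which by Assumption~\ref{ass:lin}(i) equals $\alpha_1 + \gamma_1 \cdot p(X)$; iterating expectations over $X$ within $\{d=1\}$ then yields $\e[y \mid p(X), d=1] = \alpha_1 + \gamma_1 \cdot p(X)$. This conditional mean is affine in $p(X)$, hence it is its own linear projection, and since that projection is unique under Assumption~\ref{ass:px} the constants $\alpha_1, \gamma_1$ coincide with those in~(\ref{lp_y1}). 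The same reasoning with Assumptions~\ref{ass:uncon}(ii) and~\ref{ass:lin}(ii) identifies $\alpha_0, \gamma_0$ with the coefficients in~(\ref{lp_y0}).

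Next I would compute the four conditional means of potential outcomes by iterated expectations. Within $\{d=1\}$: using Assumptions~\ref{ass:uncon}(i) and~\ref{ass:lin}(i), $\e[y(1) \mid d=1] = \e \big[ \e[y(1) \mid X] \mid d=1 \big] = \alpha_1 + \gamma_1 \cdot \e[p(X) \mid d=1]$; and, crucially for the counterfactual term, Assumptions~\ref{ass:uncon}(ii) and~\ref{ass:lin}(ii) give $\e[y(0) \mid d=1] = \e \big[ \e[y(0) \mid X] \mid d=1 \big] = \alpha_0 + \gamma_0 \cdot \e[p(X) \mid d=1]$. Subtracting and comparing with~(\ref{tau_apej}), $\tau_{ATT} = (\alpha_1 - \alpha_0) + (\gamma_1 - \gamma_0) \cdot \e[p(X) \mid d=1] = \tau_{APLE,1}$. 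Running the identical computation within $\{d=0\}$ (where Assumption~\ref{ass:uncon}(i) now handles the counterfactual $y(1)$ term) yields $\tau_{ATU} = (\alpha_1 - \alpha_0) + (\gamma_1 - \gamma_0) \cdot \e[p(X) \mid d=0] = \tau_{APLE,0}$. Substituting both equalities into Theorem~\ref{the:ols} gives $\tau = w_1 \cdot \tau_{ATT} + w_0 \cdot \tau_{ATU}$, as claimed.

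I do not expect a genuine obstacle here: the statement is a corollary in the literal sense, and everything reduces to law-of-iterated-expectations bookkeeping together with the two assumptions. The only point needing care is the hand-off between the descriptive objects of Theorem~\ref{the:ols} and the causal objects of interest — namely, making explicit that Assumption~\ref{ass:lin} does not merely posit \emph{some} linear representation of $\e[y(j) \mid X]$ but, through uniqueness of the projections under Assumption~\ref{ass:px}, pins down precisely the $\alpha_j, \gamma_j$ that enter the definition~(\ref{tau_apej}) of $\tau_{APLE,j}$, so that the term-by-term comparison with $\tau_{ATT}$ and $\tau_{ATU}$ is legitimate.
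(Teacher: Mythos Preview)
Your proposal is correct and follows essentially the same route as the paper: use Assumptions~\ref{ass:uncon} and~\ref{ass:lin} to show $\tau_{APLE,1}=\tau_{ATT}$ and $\tau_{APLE,0}=\tau_{ATU}$, then invoke Theorem~\ref{the:ols}. The paper's proof is terser---it writes $\e[y(1)-y(0)\mid X]=(\alpha_1-\alpha_0)+(\gamma_1-\gamma_0)\cdot p(X)$ and conditions on $d=j$---while you additionally spell out why the $\alpha_j,\gamma_j$ postulated in Assumption~\ref{ass:lin} coincide with the projection coefficients in~(\ref{lp_y1})--(\ref{lp_y0}); that extra step is a useful clarification the paper leaves implicit in its reuse of notation.
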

\begin{proof}
Assumption \ref{ass:uncon} implies that $\e \left[ y(1)-y(0) \mid X \right] = \e \left( y \mid X,~d=1 \right) - \e \left( y \mid X,~d=0 \right)$. Then, Assumption \ref{ass:lin} implies that $\e \left[ y(1)-y(0) \mid X \right] = \left( \alpha_1 - \alpha_0 \right) + \left( \gamma_1 - \gamma_0 \right) \cdot p \left( X \right)$, which in turn implies that $\tau_{ATT} = \tau_{APLE, 1}$ and $\tau_{ATU} = \tau_{APLE, 0}$. This, together with Theorem \ref{the:ols}, completes the proof.
\end{proof}

\noindent
Corollary \ref{cor:causal} states that, under Assumptions \ref{ass:ols}, \ref{ass:px}, \ref{ass:uncon}, and \ref{ass:lin}, the OLS weights from Theorem \ref{the:ols} apply to the causal objects of interest, $\tau_{ATT}$ and $\tau_{ATU}$. Hence, $\tau$ has a causal interpretation. The greater the proportion of treated units, the smaller is the OLS weight on $\tau_{ATT}$. Again, this is undesirable, since $\tau_{ATE} = \rho \cdot \tau_{ATT} + \left( 1 - \rho \right) \cdot \tau_{ATU}$.

To aid intuition for this surprising result, recall that an important motivation for using the model in (\ref{ols}) and OLS is that the linear projection of $y$ on $d$ and $X$ provides the best linear predictor of $y$ given $d$ and $X$ \citep{AP2009}. However, if our goal is to conduct causal inference, then this is not, in fact, a good reason to use this method. Ordinary least squares is ``best'' in predicting actual outcomes but causal inference is about predicting missing outcomes, defined as $y_m = y(1) \cdot \left( 1-d \right) + y(0) \cdot d$. In other words, the OLS weights are optimal for predicting ``what is.'' Instead, we are interested in predicting ``what would be'' if treatment were assigned differently.

Intuition suggests that if our goal were to predict ``what is'' and, without loss of generality, group one were substantially larger than group zero, we would like to place a large weight on the linear projection coefficients of group one ($\alpha_1$ and $\gamma_1$), because these coefficients can be used to predict actual outcomes of this group. As noted by \cite{Deaton1997} and \cite{SHW2015}, the OLS weights are consistent with this idea. Indeed, Theorem \ref{the:ols} also implies that
\begin{equation}
\label{tau_decomp}
\tau = \left[ \e \left( y \mid d=1 \right) - \e \left( y \mid d=0 \right) \right] - \left( w_0 \gamma_1 + w_1 \gamma_0 \right) \cdot \left\lbrace \e \left[ p \left( X \right) \mid d=1 \right] - \e \left[ p \left( X \right) \mid d=0 \right] \right\rbrace.
\end{equation}
Namely, the OLS estimand is equal to the simple difference in means of $y$ plus an adjustment term that depends on the difference in means of $p \left( X \right)$ and a weighted average of $\gamma_1$ and $\gamma_0$. When group one is ``large,'' $w_0$, the weight on $\gamma_1$, is large as well.

Conversely, if group one is ``large'' but our goal is to predict missing outcomes, we need to place a large weight on $\alpha_0$ and $\gamma_0$, because these coefficients can be used to predict counterfactual outcomes of group one. To see this point, note that it follows from the discussion in \cite{IW2009} that when the conditional means of $y(1)$ and $y(0)$ are linear in $X$, we can write
\begin{equation}
\label{ate_decomp}
\tau_{ATE} = \left[ \e \left( y \mid d=1 \right) - \e \left( y \mid d=0 \right) \right] - \left[ \left( 1 - \rho \right) \beta_1 + \rho \beta_0 \right] \cdot \left[ \e \left( X \mid d=1 \right) - \e \left( X \mid d=0 \right) \right],
\end{equation}
where $\beta_1$ and $\beta_0$ are the coefficients on $X$ in the conditional means of $y(1)$ and $y(0)$, respectively. Equations (\ref{tau_decomp}) and (\ref{ate_decomp}) reiterate the point of Corollary \ref{cor:causal} that $\tau$ and $\tau_{ATE}$ have a very similar structure but they differ substantially in how they assign weights. Indeed, in the case of $\tau_{ATE}$, when group one is ``large,'' the weight on $\beta_1$ is small, the opposite of what we have seen for OLS\@.\footnote{Note that the (infeasible) linear projection of the missing outcome, $y_m$, on $d$ and $X$ would solve our problem of ``weight reversal.'' The weights on $\tau_{ATT}$ and $\tau_{ATU}$ would still be different than $\rho$ and $1 - \rho$ if $\var \left[ p \left( X \right) \mid d=1 \right]$ and $\var \left[ p \left( X \right) \mid d=0 \right]$ were different; but, at least, the weight on $\tau_{ATT}$ ($\tau_{ATU}$) would be increasing (decreasing) in $\rho$.}

\subsection{Implications of Theorem \ref{the:ols}}
\label{sec:corollaries}

There are several practical implications of my main result. Throughout this section, I assume that the researcher is interested in estimating $\tau_{ATE}$, $\tau_{ATT}$, or both, and that she wishes to use OLS to estimate the model in (\ref{ols}) but is concerned about the implications of Theorem \ref{the:ols} and Corollary \ref{cor:causal}. In Corollaries \ref{cor:biasate} and \ref{cor:biasatt}, I show how to decompose the difference between $\tau$ and $\tau_{ATE}$ or $\tau$ and $\tau_{ATT}$ into components attributable to \textit{(i)} the difference between $\tau_{APLE, 1}$ and $\tau_{ATT}$, \textit{(ii)} the difference between $\tau_{APLE, 0}$ and $\tau_{ATU}$ (jointly referred to as ``bias from nonlinearity''), and \textit{(iii)} the OLS weights on $\tau_{ATT}$ and $\tau_{ATU}$ (``bias from heterogeneity'').\footnote{Because ``bias from nonlinearity'' arises when Assumptions \ref{ass:uncon} and/or \ref{ass:lin} are violated, it might be more accurate to refer to this component as ``bias from endogeneity and nonlinearity.'' Yet, I use the former term for brevity.} Because this paper generally focuses on what I now term ``bias from heterogeneity,'' my discussion below is restricted to this source of bias, which is equivalent to implicitly making Assumptions \ref{ass:uncon} and \ref{ass:lin}.

\begin{corollary}
\label{cor:biasate}
Under Assumptions \ref{ass:ols} and \ref{ass:px},
\begin{eqnarray}
\tau - \tau_{ATE} & = & \underbrace{w_0 \cdot \left( \tau_{APLE, 0} - \tau_{ATU} \right) + w_1 \cdot \left( \tau_{APLE, 1} - \tau_{ATT} \right)}_{\textrm{bias from nonlinearity}} \; + \; \underbrace{\delta \cdot \left( \tau_{ATU} - \tau_{ATT} \right)}_{\textrm{bias from heterogeneity}},
\nonumber
\end{eqnarray}
where $\delta = \rho - w_1 = \frac{\rho ^2 \cdot \var \left[ p \left( X \right) \mid d=1 \right] - \left( 1 - \rho \right) ^2 \cdot \var \left[ p \left( X \right) \mid d=0 \right]}{\rho \cdot \var \left[ p \left( X \right) \mid d=1 \right] + \left( 1 - \rho \right) \cdot \var \left[ p \left( X \right) \mid d=0 \right]}$. Also, under Assumptions \ref{ass:ols}, \ref{ass:px}, \ref{ass:uncon}, and \ref{ass:lin},
\begin{eqnarray}
\tau - \tau_{ATE} & = & \delta \cdot \left( \tau_{ATU} - \tau_{ATT} \right).
\nonumber
\end{eqnarray}
\end{corollary}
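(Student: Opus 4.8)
The plan is to derive Corollary \ref{cor:biasate} directly from Theorem \ref{the:ols} by adding and subtracting the relevant quantities. First I would start from the identity $\tau = w_1 \tau_{APLE,1} + w_0 \tau_{APLE,0}$ and subtract $\tau_{ATE}$. The key algebraic trick is to rewrite $\tau_{ATE}$ using the fact that $\tau_{ATE} = \rho \cdot \tau_{ATT} + (1-\rho) \cdot \tau_{ATU}$ together with $w_1 + w_0 = 1$ and $\rho = w_1 + \delta$ (equivalently $1-\rho = w_0 - \delta$). Then
\begin{eqnarray}
\tau - \tau_{ATE} &=& w_1 \tau_{APLE,1} + w_0 \tau_{APLE,0} - \rho \, \tau_{ATT} - (1-\rho) \, \tau_{ATU} \nonumber \\
&=& w_1 (\tau_{APLE,1} - \tau_{ATT}) + w_0 (\tau_{APLE,0} - \tau_{ATU}) + (w_1 - \rho) \tau_{ATT} + (w_0 - (1-\rho)) \tau_{ATU}. \nonumber
\end{eqnarray}
Since $w_1 - \rho = -\delta$ and $w_0 - (1-\rho) = \delta$, the last two terms collapse to $\delta (\tau_{ATU} - \tau_{ATT})$, which is exactly the ``bias from heterogeneity'' term, and the first two terms are the ``bias from nonlinearity'' term. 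This establishes the first display.

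Next I would verify the stated closed form for $\delta$. Using the expression for $w_1$ from Theorem \ref{the:ols} and writing $D = \rho \cdot \var[p(X) \mid d=1] + (1-\rho) \cdot \var[p(X) \mid d=0]$ for the common denominator, I would compute
\begin{eqnarray}
\delta = \rho - w_1 = \frac{\rho D - (1-\rho) \var[p(X) \mid d=0]}{D} = \frac{\rho^2 \var[p(X) \mid d=1] + \rho(1-\rho)\var[p(X)\mid d=0] - (1-\rho)\var[p(X)\mid d=0]}{D}, \nonumber
\end{eqnarray}
and the numerator simplifies to $\rho^2 \var[p(X)\mid d=1] - (1-\rho)^2 \var[p(X)\mid d=0]$ after factoring $(1-\rho)\var[p(X)\mid d=0]$ out of the last two terms, giving the claimed formula. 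This step is purely mechanical.

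For the second display, I would invoke Corollary \ref{cor:causal}: under the additional Assumptions \ref{ass:uncon} and \ref{ass:lin}, the argument in the proof of that corollary shows $\tau_{APLE,1} = \tau_{ATT}$ and $\tau_{APLE,0} = \tau_{ATU}$, so both differences in the ``bias from nonlinearity'' term vanish, leaving only $\tau - \tau_{ATE} = \delta (\tau_{ATU} - \tau_{ATT})$. I do not anticipate any genuine obstacle here; the only point requiring a little care is keeping track of signs and the two roles of $\rho$ versus $w_1$ in the decomposition, and making sure the add-and-subtract step allocates the weights so that the nonlinearity term uses the OLS weights $w_1, w_0$ (not $\rho, 1-\rho$), which is what matches the stated form.
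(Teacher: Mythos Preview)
Your proposal is correct and matches the paper's approach: the paper states that the proof ``follows from simple algebra'' and omits it, and your add-and-subtract argument starting from Theorem \ref{the:ols} together with $\tau_{ATE} = \rho\,\tau_{ATT} + (1-\rho)\,\tau_{ATU}$ is precisely that algebra. Your verification of the closed form for $\delta$ and the use of Corollary \ref{cor:causal} to kill the nonlinearity term under Assumptions \ref{ass:uncon} and \ref{ass:lin} are likewise exactly what is needed.
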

\begin{corollary}
\label{cor:biasatt}
Under Assumptions \ref{ass:ols} and \ref{ass:px},
\begin{eqnarray}
\tau - \tau_{ATT} & = & \underbrace{w_0 \cdot \left( \tau_{APLE, 0} - \tau_{ATU} \right) + w_1 \cdot \left( \tau_{APLE, 1} - \tau_{ATT} \right)}_{\textrm{bias from nonlinearity}} \; + \; \underbrace{w_0 \cdot \left( \tau_{ATU} - \tau_{ATT} \right)}_{\textrm{bias from heterogeneity}}.
\nonumber
\end{eqnarray}
Also, under Assumptions \ref{ass:ols}, \ref{ass:px}, \ref{ass:uncon}, and \ref{ass:lin},
\begin{eqnarray}
\tau - \tau_{ATT} & = & w_0 \cdot \left( \tau_{ATU} - \tau_{ATT} \right).
\nonumber
\end{eqnarray}
\end{corollary}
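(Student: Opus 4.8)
The plan is to obtain Corollary \ref{cor:biasatt} directly from Theorem \ref{the:ols} by an elementary add-and-subtract argument, using only the fact that the weights sum to one. First I would invoke Theorem \ref{the:ols}, valid under Assumptions \ref{ass:ols} and \ref{ass:px} (which in particular make the denominator of $w_0$ strictly positive, so $w_0$ and $w_1$ are genuine convex weights with $w_0 + w_1 = 1$), to write $\tau = w_1 \cdot \tau_{APLE, 1} + w_0 \cdot \tau_{APLE, 0}$. Subtracting $\tau_{ATT} = (w_1 + w_0)\,\tau_{ATT}$ from both sides gives $\tau - \tau_{ATT} = w_1 (\tau_{APLE, 1} - \tau_{ATT}) + w_0 (\tau_{APLE, 0} - \tau_{ATT})$.

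Second, I would split the last term by inserting $\tau_{ATU}$: since $\tau_{APLE, 0} - \tau_{ATT} = (\tau_{APLE, 0} - \tau_{ATU}) + (\tau_{ATU} - \tau_{ATT})$, regrouping yields
\[
\tau - \tau_{ATT} = \bigl[\, w_1 (\tau_{APLE, 1} - \tau_{ATT}) + w_0 (\tau_{APLE, 0} - \tau_{ATU}) \,\bigr] + w_0 (\tau_{ATU} - \tau_{ATT}),
\]
which is exactly the asserted decomposition, the bracketed sum being the ``bias from nonlinearity'' and the remaining term the ``bias from heterogeneity.'' This step is purely mechanical.

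Third, for the second display I would reuse the argument from the proof of Corollary \ref{cor:causal}: Assumption \ref{ass:uncon} implies $\e[y(1) - y(0) \mid X] = \e(y \mid X, d=1) - \e(y \mid X, d=0)$, and Assumption \ref{ass:lin} then forces $\e[y(1) - y(0) \mid X] = (\alpha_1 - \alpha_0) + (\gamma_1 - \gamma_0) \cdot p(X)$, whence $\tau_{ATT} = \tau_{APLE, 1}$ and $\tau_{ATU} = \tau_{APLE, 0}$. Substituting these equalities into the decomposition annihilates the bias-from-nonlinearity bracket and leaves $\tau - \tau_{ATT} = w_0 (\tau_{ATU} - \tau_{ATT})$.

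I do not expect any real obstacle here: once Theorem \ref{the:ols} is available, the first part is a one-line computation and the second part is a verbatim application of the Corollary \ref{cor:causal} argument. The only points worth making explicit are the use of $w_0 + w_1 = 1$ and the fact that Assumptions \ref{ass:ols} and \ref{ass:px} keep the weights well defined. (Incidentally, the same computation, combined with $\tau_{ATE} = \rho \cdot \tau_{ATT} + (1 - \rho) \cdot \tau_{ATU}$, yields Corollary \ref{cor:biasate} with $\delta = \rho - w_1$, so both corollaries could be proved together.)
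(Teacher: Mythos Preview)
Your proposal is correct and matches the paper's approach: the paper states that the proofs of Corollaries \ref{cor:biasate} and \ref{cor:biasatt} ``follow from simple algebra and are omitted,'' and your add-and-subtract argument from Theorem \ref{the:ols} together with $w_0 + w_1 = 1$ and the Corollary \ref{cor:causal} identification $\tau_{ATT} = \tau_{APLE,1}$, $\tau_{ATU} = \tau_{APLE,0}$ is exactly that algebra.
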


\noindent
The proofs of Corollaries \ref{cor:biasate} and \ref{cor:biasatt} follow from simple algebra and are omitted. These results show that, regardless of whether we focus on $\tau_{ATE}$ or $\tau_{ATT}$, the bias from heterogeneity is equal to the product of a particular measure of heterogeneity, namely the difference between $\tau_{ATU}$ and $\tau_{ATT}$, and an additional parameter that is easy to estimate, $\delta$ for $\tau_{ATE}$ and $w_0$ for $\tau_{ATT}$. While $w_0$ is guaranteed to be positive under Assumptions \ref{ass:ols} and \ref{ass:px}, $\delta$ may be positive or negative. Both $w_0$ and $\delta$, however, are bounded between zero and one in absolute value. Thus, $w_0$ and $\vert \delta \vert$ can be interpreted as the percentage of our measure of heterogeneity, $\tau_{ATU} - \tau_{ATT}$, which contributes to bias.\footnote{To be precise, $\vert \delta \vert$ can be interpreted as the percentage of $\sgn( \delta ) \cdot \left( \tau_{ATU} - \tau_{ATT} \right)$ that contributes to bias when focusing on $\tau_{ATE}$. Both $\delta$ and $w_0$ also have an intuitive interpretation as the difference between \textit{(i)} the weight that we should place on $\tau_{ATT}$ when focusing on $\tau_{ATE}$ or $\tau_{ATT}$ and \textit{(ii)} the weight that OLS actually places on this parameter. Indeed, $\delta$ is equal to the difference between $\rho$ and $w_1$. Similarly, $w_0 = 1-w_1$.} It might be useful to report estimates of $w_0$ and $\delta$ in studies that use OLS to estimate the model in (\ref{ols}).

As an example, consider the empirical application in section \ref{sec:theory}\ref{sec:example}\@. In this case, $\hat{w}_0 = 0.017$ and $\hat{\delta} = -0.971$. The interpretation of these estimates is as follows: if our goal is to estimate $\tau_{ATT}$, using the model in (\ref{ols}) and OLS is expected to bias our estimates by only 1.7\% of the difference between $\tau_{ATU}$ and $\tau_{ATT}$. If instead we wanted to interpret $\tau$ as $\tau_{ATE}$, our estimates would be biased by an estimated 97.1\% of the difference between $\tau_{ATT}$ and $\tau_{ATU}$. Thus, in this application, it might perhaps be acceptable to interpret $\tau$ as $\tau_{ATT}$ but clearly not as $\tau_{ATE}$.

\begin{assumption}
\label{ass:var}
$\var \left[ p \left( X \right) \mid d=1 \right] = \var \left[ p \left( X \right) \mid d=0 \right]$.
\end{assumption}

\noindent
The calculation of $\delta$ and $w_0$ is further simplified under Assumption \ref{ass:var}. If we use $\delta^*$ and $w_0^*$ to denote the values of $\delta$ and $w_0$ in this special case, we can write $\delta^* = 2 \rho - 1$ and $w_0^* = \rho$. In this setting, the knowledge of $\delta$ and $w_0$ only requires information on $\rho$, the proportion of units with $d=1$. Of course, the special case where $\var \left[ p \left( X \right) \mid d=1 \right] = \var \left[ p \left( X \right) \mid d=0 \right]$ is hardly to be expected in practice. Still, $\delta^* = 2 \rho - 1$ and $w_0^* = \rho$ can potentially serve as a rule of thumb.

The practical implications of Assumption \ref{ass:var} are particularly clear when $\rho$ is close to 0\%, 50\%, or 100\%. When few units are treated, $\tau \simeq \tau_{ATT}$. When most of the units are treated, $\tau \simeq \tau_{ATU}$. Finally, when both groups are of similar size, $\tau \simeq \tau_{ATE}$. This can also be seen from Corollary \ref{cor:reverse}.

\begin{corollary}
\label{cor:reverse}
Under Assumptions \ref{ass:ols}, \ref{ass:px}, and \ref{ass:var},
\begin{eqnarray}
\tau &=& \left( 1 - \rho \right) \cdot \tau_{APLE, 1} + \rho \cdot \tau_{APLE, 0}.
\nonumber
\end{eqnarray}
Also, under Assumptions \ref{ass:ols}, \ref{ass:px}, \ref{ass:uncon}, \ref{ass:lin}, and \ref{ass:var},
\begin{eqnarray}
\tau &=& \left( 1 - \rho \right) \cdot \tau_{ATT} + \rho \cdot \tau_{ATU}.
\nonumber
\end{eqnarray}
\end{corollary}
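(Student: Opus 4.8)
The plan is to obtain Corollary \ref{cor:reverse} as an immediate specialization of Theorem \ref{the:ols} (for the first display) and of Corollary \ref{cor:causal} (for the second). First I would invoke Theorem \ref{the:ols}, valid under Assumptions \ref{ass:ols} and \ref{ass:px}, which writes $\tau = w_1 \cdot \tau_{APLE,1} + w_0 \cdot \tau_{APLE,0}$ with the weights given there. I would then impose Assumption \ref{ass:var} and set $v := \var\left[ p\left(X\right) \mid d=1 \right] = \var\left[ p\left(X\right) \mid d=0 \right]$. By Assumption \ref{ass:px} we have $v \neq 0$, so $v$ may be cancelled from the numerator and denominator of each weight, giving $w_1 = \frac{\left(1-\rho\right) v}{\rho v + \left(1-\rho\right) v} = 1-\rho$ and, symmetrically, $w_0 = \frac{\rho v}{\rho v + \left(1-\rho\right) v} = \rho$. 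Substituting these into the expression from Theorem \ref{the:ols} yields the first display.

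For the second display I would additionally impose Assumptions \ref{ass:uncon} and \ref{ass:lin}. As established in the proof of Corollary \ref{cor:causal}, these two assumptions together imply $\e\left[ y(1)-y(0) \mid X \right] = \left( \alpha_1 - \alpha_0 \right) + \left( \gamma_1 - \gamma_0 \right) \cdot p\left(X\right)$, and hence $\tau_{APLE,1} = \tau_{ATT}$ and $\tau_{APLE,0} = \tau_{ATU}$. Plugging these identities into the first display gives $\tau = \left(1-\rho\right) \cdot \tau_{ATT} + \rho \cdot \tau_{ATU}$, completing the argument.

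There is no real obstacle here: the substantive content lives entirely in Theorem \ref{the:ols} and Corollary \ref{cor:causal}, and what remains is a one-line simplification of the weights. The only point that deserves explicit mention is that the cancellation of the common conditional variance is licensed precisely by Assumption \ref{ass:px}, which rules out the degenerate $0/0$ case and guarantees that the weights are well defined and collapse to $1-\rho$ and $\rho$. I would therefore keep the write-up to a couple of sentences, flagging this role of Assumption \ref{ass:px}, and otherwise simply cite the two earlier results.
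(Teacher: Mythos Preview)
Your proposal is correct and matches the paper's own approach exactly: the paper states that ``the proof follows immediately from simple algebra,'' and what you have written is precisely that algebra---substituting Assumption \ref{ass:var} into the weights from Theorem \ref{the:ols} to obtain $w_1=1-\rho$ and $w_0=\rho$, and then invoking Corollary \ref{cor:causal} for the causal version. Your explicit note that Assumption \ref{ass:px} licenses the cancellation is a nice touch the paper leaves implicit.
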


\noindent
The proof follows immediately from simple algebra. Corollary \ref{cor:reverse} provides conditions under which OLS reverses the ``natural'' weights on $\tau_{APLE, 1}$ and $\tau_{APLE, 0}$ (or $\tau_{ATT}$ and $\tau_{ATU}$). Indeed, under Assumption \ref{ass:var}, $\tau$ is a convex combination of group-specific average effects, with ``reversed'' weights attached to these parameters. Namely, the proportion of units with $d=1$ is used to weight the average effect of $d$ on group zero, and vice versa.

The results in this section allow empirical researchers to interpret the OLS estimand when treatment effects are heterogeneous. Alternatively, it might be sensible to use any of the standard estimators for average treatment effects under ignorability, such as regression adjustment (see section \ref{sec:theory}\ref{sec:example}), weighting, matching, and various combinations of these approaches.\footnote{For recent reviews, see \cite{IW2009}, \cite{Wooldridge2010}, and \cite{AC2018}.} It might also help to estimate a model with homogeneous effects using weighted least squares (WLS)\@. Indeed, in online appendix \ref{app:wls}, I demonstrate that when we regress $y$ on $d$ and $p \left( X \right)$, with weights of $\frac{1 - \rho}{w_0}$ for units with $d=1$ and $\frac{\rho}{w_1}$ for units with $d=0$, the WLS estimand is equal to $\tau_{APLE}$. In practice, of course, $\tau_{APLE}$ can also be obtained directly from equation (\ref{tau_ape}).

\subsection{Related Work}

This section discusses the relationship between my main result and those in \cite{Angrist1998} and \cite{Humphreys2009}. These papers focus on saturated models with discrete covariates, in which the estimating equation includes an indicator for each combination of covariate values (``stratum''). In particular, \cite{Angrist1998} provides a representation of $\tau_n$ in
\begin{equation}
\lp \left( y \mid d, x_1, \ldots, x_{S} \right) = \tau_n d + \sum_{s=1}^{S} \beta_{n,s} x_s,
\label{angrist_reg}
\end{equation}
where $x_1, \ldots, x_S$ are stratum indicators. More precisely, \cite{Angrist1998} demonstrates that
\begin{equation}
\tau_n = \sum_{s=1}^S \frac{\pr \left( x_s=1 \right) \cdot \pr \left( d=1 \mid x_s=1 \right) \cdot \pr \left( d=0 \mid x_s=1 \right)}{\sum_{t=1}^S \pr \left( x_t=1 \right) \cdot \pr \left( d=1 \mid x_t=1 \right) \cdot \pr \left( d=0 \mid x_t=1 \right)} \cdot \tau_s,
\label{angrist_S}
\end{equation}
where $\tau_s = \e \left( y \mid d=1, x_s=1 \right) - \e \left( y \mid d=0, x_s=1 \right)$. In online appendix \ref{app:angrist}, I demonstrate that this result follows from Corollary \ref{cor:causal} when the model for $y$ is saturated.\footnote{Also, note that \cite{AS2016} show that this result in \cite{Angrist1998} is not specific to saturated models; instead, it is sufficient to assume that the model for $d$ is linear in $X$\@. My analysis in online appendix \ref{app:angrist} covers the results in both \cite{Angrist1998} and \cite{AS2016}.} At the same time, the interpretation of OLS in \cite{Angrist1998} is different from Theorem \ref{the:ols} and Corollary \ref{cor:causal}. On the one hand, unlike Corollary \ref{cor:causal} and \cite{Humphreys2009}, \cite{Angrist1998} does not restrict the relationship between $\tau_s$ and $\pr \left( d=1 \mid x_s=1 \right)$ in any way. On the other hand, Theorem \ref{the:ols} and Corollary \ref{cor:causal} make it arguably easier to identify whether in a given application the OLS estimand will be close to any of the parameters of interest (cf.~Corollaries \ref{cor:biasate} to \ref{cor:reverse}). In particular, \cite{Angrist1998} does not recover a pattern of ``weight reversal,'' which is discussed in detail in this paper.

Unlike \cite{Angrist1998}, \cite{Humphreys2009} does not derive a new representation of $\tau_n$, but instead presents further analysis of the result in equation (\ref{angrist_S}). In particular, \cite{Humphreys2009} notes that $\tau_n$ can take any value between $\min ( \tau_s )$ and $\max ( \tau_s )$. Then, he demonstrates that $\tau_n$ is also bounded by $\tau_{ATT}$ and $\tau_{ATU}$ if we restrict the relationship between $\tau_s$ and $\pr \left( d=1 \mid x_s=1 \right)$ to be monotonic. According to Corollary \ref{cor:causal}, $\tau$ is a convex combination of $\tau_{ATT}$ and $\tau_{ATU}$ if, among other things, both potential outcomes are linear in $p \left( X \right)$, which also implies a linear relationship between $\tau_s$ and $\pr \left( d=1 \mid x_s=1 \right)$ when the model for $y$ is saturated. Of course, this linearity assumption is stronger than the monotonicity assumption in \cite{Humphreys2009}. However, in return, we are able to derive a closed-form expression for $\tau$ in terms of $\tau_{ATT}$ and $\tau_{ATU}$, which is a major advantage over the earlier literature, such as \cite{Angrist1998} and \cite{Humphreys2009}.\footnote{\cite{Humphreys2009} also provides a brief informal remark that the OLS estimand, as represented in \cite{Angrist1998}, is similar to $\tau_{ATT}$ ($\tau_{ATU}$) if propensity scores are ``small'' (``large'') in \textit{every} stratum. This is a special case of the rule of thumb derived from Corollaries \ref{cor:biasatt} and \ref{cor:reverse}. My rule of thumb does not impose any such restrictions on the propensity score other than the requirement that the \textit{unconditional} probability of treatment is close to zero or one.}

\section{Empirical Applications}
\label{sec:empirical}

This section discusses two empirical illustrations of Theorem \ref{the:ols} and its corollaries.\footnote{In a follow-up paper, I apply these results in the study of racial gaps in test scores and wages \citep{Sloczynski_ILRR}.} In online appendices \ref{app:stata} and \ref{app:r}, I discuss the implementation of these results in Stata and R\@. Throughout the current section $\tau_{APLE}$, $\tau_{APLE, 1}$, and $\tau_{APLE, 0}$ are implicitly treated as equivalent to $\tau_{ATE}$, $\tau_{ATT}$, and $\tau_{ATU}$, respectively. Although this might be restrictive, I also demonstrate that in both applications sample analogues of $\tau_{APLE}$, $\tau_{APLE, 1}$, and $\tau_{APLE, 0}$, reported in the body of the paper, are similar to other estimates of $\tau_{ATE}$, $\tau_{ATT}$, and $\tau_{ATU}$, reported in online appendix \ref{app:robustness}\@.

\subsection{The Effects of a Training Program on Earnings}
\label{sec:nsw}

I first consider the example from section \ref{sec:theory}\ref{sec:example} in more detail. This replication of the study of the effects of NSW program in \cite{AP2009} constitutes an optimistic scenario for OLS\@. In this application, as I explained in section \ref{sec:theory}\ref{sec:example}, the effect for the treated group (ATT) is likely to be substantially larger than the effect for the CPS comparison group (ATU)\@. Moreover, since the experimental benchmark of \$1,794 corresponds to $\widehat{\mathrm{ATT}}$ and not to $\widehat{\mathrm{ATU}}$, the researcher should also focus on ATT\@. It turns out that my diagnostic for estimating ATT, $\hat{w}_0$, indicates that this parameter should approximately be recovered by OLS, even if treatment effects are heterogeneous.\footnote{It is well known that, in the NSW--CPS data, there is limited overlap in terms of covariate values between the treated and untreated units \citep[see, e.g.,][]{DW1999, ST2005}. Thus, it is important to note that my theoretical results in section \ref{sec:theory} do not impose the overlap assumption.}

\begin{table}[!tb]
\begin{adjustwidth}{-1in}{-1in}
\centering
\begin{threeparttable}
\caption{\normalsize{The Effects of a Training Program on Earnings\label{tab:nsw1}}}
\begin{normalsize}
\begin{tabular}{l >{\centering\arraybackslash}m{3.5cm} >{\centering\arraybackslash}m{3.5cm} >{\centering\arraybackslash}m{3.5cm} >{\centering\arraybackslash}m{3.5cm}}
\hline\hline
    \multicolumn{1}{c}{} & (1) & (2) & (3) & (4) \\
\hline
    \multicolumn{1}{c}{} & \multicolumn{4}{c}{Original estimates}  \\
\cline{2-5}
    \multicolumn{1}{l}{OLS} & --3,437*** & --78  & 623   & 794 \\
    \multicolumn{1}{c}{} & (612) & (596) & (610) & (619) \\
    \multicolumn{1}{l}{} & & & & \\
    \multicolumn{1}{c}{} & \multicolumn{4}{c}{Diagnostics}    \\
\cline{2-5}
    \multicolumn{1}{l}{$\hat{w}_0$} & 0.019 & 0.001 & 0.017 & 0.017 \\
    \multicolumn{1}{l}{$\hat{w}_0^* = \hat{\rho}$} & 0.011 & 0.011 & 0.011 & 0.011 \\
    \multicolumn{1}{l}{$\hat{\delta}$} & --0.970 & --0.987 & --0.971 & --0.971 \\
    \multicolumn{1}{l}{$\hat{\delta}^* = 2 \hat{\rho} - 1$} & --0.977 & --0.977 & --0.977 & --0.977 \\
    \multicolumn{1}{l}{} & & & & \\
    \multicolumn{1}{c}{} & \multicolumn{4}{c}{Decomposition}    \\
\cline{2-5}
    \multicolumn{1}{l}{$\widehat{\mathrm{ATT}}$} & --3,373*** & --69  & 754   & 928 \\
    \multicolumn{1}{l}{} & (620) & (595) & (619) & (630) \\
    \multicolumn{1}{l}{$\hat{w}_1$} & 0.981 & 0.999 & 0.983 & 0.983 \\
    \multicolumn{1}{l}{} & & & & \\
    \multicolumn{1}{l}{$\widehat{\mathrm{ATU}}$} & --6,753*** & --6,289** & --6,841*** & --6,840*** \\
    \multicolumn{1}{l}{} & (1,219) & (2,807) & (1,294) & (1,319) \\
    \multicolumn{1}{l}{$\hat{w}_0$} & 0.019 & 0.001 & 0.017 & 0.017 \\
    \multicolumn{1}{l}{} & & & & \\
    \multicolumn{1}{l}{$\widehat{\mathrm{ATE}}$} & --6,714*** & --6,218** & --6,754*** & --6,751*** \\
    \multicolumn{1}{l}{} & (1,206) & (2,777) & (1,281) & (1,305) \\
    \multicolumn{1}{l}{} & & & & \\
    \multicolumn{1}{l}{Demographic controls} & \checkmark & & \checkmark & \checkmark \\
    \multicolumn{1}{l}{Earnings in 1974} & & & & \checkmark \\
    \multicolumn{1}{l}{Earnings in 1975} & & \checkmark & \checkmark & \checkmark \\
    \multicolumn{1}{l}{} & & & & \\
    \multicolumn{1}{l}{$\hat{\rho} = \hat{\pr} \left( d=1 \right)$} & 0.011 & 0.011 & 0.011 & 0.011 \\
    \multicolumn{1}{l}{Observations} & 16,177 & 16,177 & 16,177 & 16,177 \\
\hline
\end{tabular}
\end{normalsize}
\begin{footnotesize}
\begin{tablenotes}[flushleft]
\item \textit{Notes:} The estimates in the top panel correspond to column 2 in Table 3.3.3 in \citet[][p.~89]{AP2009}. The dependent variable is earnings in 1978. Demographic controls include age, age squared, years of schooling, and indicators for married, high school dropout, black, and Hispanic. For treated individuals, earnings in 1974 correspond to real earnings in months 13--24 prior to randomization, which overlaps with calendar year 1974 for a number of individuals. Formulas for $w_0$, $w_1$, and $\delta$ are given in Theorem \ref{the:ols} and Corollary \ref{cor:biasate}. Following these results, $\mathrm{OLS} = \hat{w}_1 \cdot \widehat{\mathrm{ATT}} + \hat{w}_0 \cdot \widehat{\mathrm{ATU}}$\@. Estimates of ATE, ATT, and ATU are sample analogues of $\tau_{APLE}$, $\tau_{APLE, 1}$, and $\tau_{APLE, 0}$, respectively. Also, $\widehat{\mathrm{ATE}} = \hat{\rho} \cdot \widehat{\mathrm{ATT}} + \left( 1 - \hat{\rho} \right) \cdot \widehat{\mathrm{ATU}}$\@. Huber--White standard errors (OLS) and bootstrap standard errors ($\widehat{\mathrm{ATE}}$, $\widehat{\mathrm{ATT}}$, and $\widehat{\mathrm{ATU}}$) are in parentheses.
\item *Statistically significant at the 10\% level; **at the 5\% level; ***at the 1\% level.
\end{tablenotes}
\end{footnotesize}
\end{threeparttable}
\end{adjustwidth}
\end{table}

The top and middle panels of Table \ref{tab:nsw1} reproduce the estimates from \cite{AP2009} and report my diagnostics. The specification in column 4 was discussed in section \ref{sec:theory}\ref{sec:example}\@. It turns out that $\hat{w}_0$ is between 0.1\% and 1.9\% for all specifications; similarly, the ``rule of thumb'' value of this diagnostic, $\hat{w}_0^*$, is, as always, equal to the proportion of treated units (only 1.1\% in this sample). These results are very simple to interpret. Namely, as in section \ref{sec:theory}\ref{sec:corollaries}, we estimate that the difference between the OLS estimand and ATT is less than 2\% of the difference between ATU and ATT\@. In this case, it might indeed be sensible to rely on the OLS estimates of the effect of treatment.

The bottom panel of Table \ref{tab:nsw1} provides an application of Corollary \ref{cor:causal} to these results. In other words, the estimates from \cite{AP2009} are now decomposed into two components, $\widehat{\mathrm{ATT}}$ and $\widehat{\mathrm{ATU}}$\@. The difference between these estimates is substantial. In column 4, while the estimate of ATT is \$928, ATU is estimated to be --\$6,840. In other words, the OLS estimate of \$794, reported in \cite{AP2009} and discussed in section \ref{sec:theory}\ref{sec:example}, is actually a weighted average of these two estimates. The fact that it is close to \$928, and not to --\$6,840, is a consequence of the small proportion of treated units in this sample, 1.1\%. The weight on \$928, $\hat{w}_1$, is 98.3\% and the weight on --\$6,840, $\hat{w}_0$, is only 1.7\%.

We might expect that if the proportion of treated units was larger, the weight on $\widehat{\mathrm{ATT}}$ would be smaller and the ``performance'' of OLS in replicating the experimental benchmark would deteriorate. I confirm this conjecture in online appendix \ref{app:mhe} by quasi-discarding ``random'' subsamples of untreated units over a range of sample sizes. In particular, I reestimate the model in (\ref{ols}) using WLS, with weights of 1 for treated and $\frac{1}{k}$ for untreated units. Figures \ref{fig:nswwls1} to \ref{fig:nswwls4} show that in this application WLS estimates become more negative as $k$ increases. This is because larger values of $k$ correspond to greater proportions of untreated units being ``discarded,'' and hence \textit{larger} weights on $\widehat{\mathrm{ATU}}$, which is substantially more negative than $\widehat{\mathrm{ATT}}$\@.

Additional extensions of my analysis are also presented in online appendix \ref{app:mhe}\@. For each specification in Table \ref{tab:nsw1}, I provide both a linear and a nonparametric estimate of the conditional mean of the outcome given $p \left( X \right)$, separately for treated and untreated units (Figures \ref{fig:nswypx1} to \ref{fig:nswypx4})\@. A visual comparison of both estimates provides an informal test of Assumption \ref{ass:lin}, which is necessary for a causal interpretation of $\tau_{APLE}$, $\tau_{APLE, 1}$, and $\tau_{APLE, 0}$. The linearity assumption appears to be approximately satisfied for the treated but usually not for the untreated units.

Thus, as a robustness check, I also report a number of alternative estimates of the effects of NSW program in Table \ref{tab:nsw3}\@. I consider regression adjustment, as in section \ref{sec:theory}\ref{sec:example}, as well as matching on $p \left( X \right)$ and on the logit propensity score.\footnote{In particular, the estimates discussed in section \ref{sec:theory}\ref{sec:example} are reported in column 4 of the bottom panel of Table \ref{tab:nsw3}\@.} In each case, I separately estimate ATE, ATT, and ATU\@. These estimates are consistent with the claim that the general pattern of results in Table \ref{tab:nsw1} is driven by the OLS weights. The estimates of ATE and ATU are always negative and large in magnitude; the estimates of ATT are much closer to the experimental benchmark.

Finally, I repeat the following exercise from section \ref{sec:theory}\ref{sec:example}\@. When we match the OLS estimates in Table \ref{tab:nsw1} with the corresponding estimates of ATT and ATU in Table \ref{tab:nsw3}, we can write $\hat{\tau} = \hat{w}_{ATT} \cdot \hat{\tau}_{ATT} + \left( 1-\hat{w}_{ATT} \right) \cdot \hat{\tau}_{ATU}$. Unless $\hat{\tau}_{ATT}$ and $\hat{\tau}_{ATU}$ are sample analogues of $\tau_{APLE, 1}$ and $\tau_{APLE, 0}$, $\hat{w}_{ATT}$ does not need to be bounded between zero and one. Yet, we can solve for $\hat{w}_{ATT}$ for each set of estimates. The mean of $\hat{w}_{ATT}$ across all sets of estimates in Table \ref{tab:nsw3} is 98.3\%, which is nearly identical to the sample proportion of untreated units, 98.9\%. This is reassuring for my claims.

\subsection{The Effects of Cash Transfers on Longevity}
\label{sec:aizer}

In my second application, I replicate a recent paper by \cite{AEFLM2016} and study the effects of cash transfers on longevity of the children of their beneficiaries, as measured by their log age at death. In particular, \cite{AEFLM2016} analyze the administrative records of applicants to the Mothers' Pension (MP) program, which supported poor mothers with dependent children in pre-WWII United States. In this study, the untreated group consists only of children of mothers who applied for a transfer, were initially deemed eligible, but were ultimately rejected. This strategy is used to ensure that treated and untreated individuals are broadly comparable, and hence an ignorability assumption might be plausible. Nevertheless, rejected mothers were slightly older and came from slightly smaller and richer families than accepted mothers. Thus, as before, there is no reason to believe that ATT and ATU are equal, although it is perhaps less clear a priori which is larger. Unlike in section \ref{sec:empirical}\ref{sec:nsw}, it seems plausible that the researcher might be interested either in the average effect of cash transfers, ATE, or in their average effect for accepted applicants, ATT\@.

\begin{table}[!tb]
\begin{adjustwidth}{-1in}{-1in}
\centering
\begin{threeparttable}
\caption{\normalsize{The Effects of Cash Transfers on Longevity\label{tab:mp1}}}
\begin{normalsize}
\begin{tabular}{l >{\centering\arraybackslash}m{3.5cm} >{\centering\arraybackslash}m{3.5cm} >{\centering\arraybackslash}m{3.5cm} >{\centering\arraybackslash}m{3.5cm}}
\hline\hline
    \multicolumn{1}{c}{} & (1) & (2) & (3) & (4) \\
\hline
    \multicolumn{1}{c}{} & \multicolumn{4}{c}{Original estimates}  \\
\cline{2-5}
    \multicolumn{1}{l}{OLS} & 0.0157*** & 0.0158*** & 0.0182*** & 0.0167*** \\
    \multicolumn{1}{c}{} & (0.0058) & (0.0059) & (0.0062) & (0.0061) \\
    \multicolumn{1}{l}{} & & & & \\
    \multicolumn{1}{c}{} & \multicolumn{4}{c}{Diagnostics}    \\
\cline{2-5}
    \multicolumn{1}{l}{$\hat{w}_0$} &     0.861 & 0.870 & 0.784 & 0.784 \\
    \multicolumn{1}{l}{$\hat{w}_0^* = \hat{\rho}$} &     0.875 & 0.875 & 0.875 & 0.875 \\
    \multicolumn{1}{l}{$\hat{\delta}$} &     0.736 & 0.745 & 0.659 & 0.659 \\
    \multicolumn{1}{l}{$\hat{\delta}^* = 2 \hat{\rho} - 1$} &     0.750 & 0.750 & 0.750 & 0.750 \\
    \multicolumn{1}{l}{} & & & & \\
    \multicolumn{1}{c}{} & \multicolumn{4}{c}{Decomposition}    \\
\cline{2-5}
    \multicolumn{1}{l}{$\widehat{\mathrm{ATT}}$} & 0.0129** & 0.0149** & 0.0097 & 0.0089 \\
    \multicolumn{1}{l}{} & (0.0064) & (0.0071) & (0.0078) & (0.0079) \\
    \multicolumn{1}{l}{$\hat{w}_1$} & 0.139 & 0.130 & 0.216 & 0.216 \\
    \multicolumn{1}{l}{} & & & & \\
    \multicolumn{1}{l}{$\widehat{\mathrm{ATU}}$} & 0.0162*** & 0.0160*** & 0.0206*** & 0.0188*** \\
    \multicolumn{1}{l}{} & (0.0057) & (0.0059) & (0.0063) & (0.0064) \\
    \multicolumn{1}{l}{$\hat{w}_0$} & 0.861 & 0.870 & 0.784 & 0.784 \\
    \multicolumn{1}{l}{} & & & & \\
    \multicolumn{1}{l}{$\widehat{\mathrm{ATE}}$} & 0.0133** & 0.0150** & 0.0110 & 0.0102 \\
    \multicolumn{1}{l}{} & (0.0063) & (0.0068) & (0.0073) & (0.0074) \\
    \multicolumn{1}{l}{} & & & & \\
    \multicolumn{1}{l}{State fixed effects} & \checkmark & & & \\
    \multicolumn{1}{l}{County fixed effects} & & & \checkmark & \checkmark \\
    \multicolumn{1}{l}{Cohort fixed effects} & \checkmark & \checkmark & \checkmark & \checkmark \\
    \multicolumn{1}{l}{State characteristics} & & \checkmark & \checkmark & \checkmark \\
    \multicolumn{1}{l}{County characteristics} & & \checkmark & & \\
    \multicolumn{1}{l}{Individual characteristics} & & \checkmark & \checkmark & \checkmark \\
    \multicolumn{1}{l}{} & & & & \\
    \multicolumn{1}{l}{$\hat{\rho} = \hat{\pr} \left( d=1 \right)$} & 0.875 & 0.875 & 0.875 & 0.875 \\
    \multicolumn{1}{l}{Observations} & 7,860 & 7,859 & 7,859 & 7,857 \\
\hline
\end{tabular}
\end{normalsize}
\begin{footnotesize}
\begin{tablenotes}[flushleft]
\item \textit{Notes:} The estimates in the top panel correspond to columns 1 to 4 in panel A of Table 4 in \citet[][p.~952]{AEFLM2016}. The dependent variable is log age at death, as reported in the MP records (columns 1 to 3) or on the death certificate (column 4). State, county, and individual characteristics are listed in Table \ref{tab:mp3} in online appendix \ref{app:aizer}\@. Formulas for $w_0$, $w_1$, and $\delta$ are given in Theorem \ref{the:ols} and Corollary \ref{cor:biasate}. Following these results, $\mathrm{OLS} = \hat{w}_1 \cdot \widehat{\mathrm{ATT}} + \hat{w}_0 \cdot \widehat{\mathrm{ATU}}$\@. Estimates of ATE, ATT, and ATU are sample analogues of $\tau_{APLE}$, $\tau_{APLE, 1}$, and $\tau_{APLE, 0}$, respectively. Also, $\widehat{\mathrm{ATE}} = \hat{\rho} \cdot \widehat{\mathrm{ATT}} + \left( 1 - \hat{\rho} \right) \cdot \widehat{\mathrm{ATU}}$\@. Huber--White standard errors (OLS) and bootstrap standard errors ($\widehat{\mathrm{ATE}}$, $\widehat{\mathrm{ATT}}$, and $\widehat{\mathrm{ATU}}$) are in parentheses.
\item *Statistically significant at the 10\% level; **at the 5\% level; ***at the 1\% level.
\end{tablenotes}
\end{footnotesize}
\end{threeparttable}
\end{adjustwidth}
\end{table}

The top and middle panels of Table \ref{tab:mp1} reproduce the baseline estimates from \cite{AEFLM2016} and report my diagnostics. While the OLS estimates are positive and statistically significant, my diagnostics indicate that these results should be approached with caution. Namely, treated units constitute the vast majority (or 87.5\%) of the sample. It follows that OLS is expected to place a disproportionately large weight on $\widehat{\mathrm{ATU}}$, in which case the OLS estimates might be very biased for both ATE and ATT (cf.~Corollaries \ref{cor:biasate} and \ref{cor:biasatt}). Indeed, my estimates of $\delta$ suggest that the difference between the OLS estimand and ATE is equal to 65.9--74.5\% of the difference between ATU and ATT\@. Also, the estimates of $w_0$ suggest that the difference between OLS and ATT corresponds to 78.4--87.0\% of this measure of heterogeneity. The estimates of $\delta^*$ and $w_0^*$ are similar. It turns out that in this application the OLS estimates might be substantially biased for both of our parameters of interest. This would be a pessimistic scenario for OLS\@.

The results in the bottom panel of Table \ref{tab:mp1} suggest that these biases are indeed substantial. In this panel, following Corollary \ref{cor:causal}, each OLS estimate from \cite{AEFLM2016} is represented as a weighted average of estimates of two effects, on accepted (ATT) and rejected (ATU) applicants. The estimates of ATU are consistently larger than those of ATT\@. Thus, OLS overestimates both ATE (since $\hat{\delta} > 0$) and ATT\@. While the implicit OLS estimates of these parameters remain statistically significant in columns 1 and 2, this is no longer the case in columns 3 and 4, following the inclusion of county fixed effects. Perhaps more importantly, these estimates of ATT are half smaller than the corresponding OLS estimates. Clearly, this difference is economically quite meaningful.

To assess the robustness of these findings, I present several extensions of my analysis in online appendix \ref{app:aizer}\@. The informal test of Assumption \ref{ass:lin}, as discussed in section \ref{sec:empirical}\ref{sec:nsw}, appears to suggest that the conditional mean of the outcome given $p \left( X \right)$ is approximately linear for both the treated and untreated units (see Figures \ref{fig:mpypx1} to \ref{fig:mpypx4})\@. I also report a number of alternative estimates of the effects of cash transfers in Table \ref{tab:mp3}\@. These additional results support my conclusion. Only one in twelve estimates of ATT is statistically different from zero, and four of the insignificant estimates are negative. While it is possible that cash transfers increase longevity, the OLS estimates reported in \cite{AEFLM2016} are almost certainly too large. Interestingly, this bias appears to be driven by the implicit OLS weights on ATT and ATU, which were the focus of this paper.\footnote{I also repeat two further exercises from section \ref{sec:empirical}\ref{sec:nsw}\@. First, after I reestimate the model in (\ref{ols}) using WLS, with weights of 1 for treated and $\frac{1}{k}$ for untreated units, I demonstrate in Figures \ref{fig:mpwls1} to \ref{fig:mpwls4} that these estimates become more positive as $k$ increases. As before, larger values of $k$ translate into larger weights on $\widehat{\mathrm{ATU}}$, which is now greater than $\widehat{\mathrm{ATT}}$\@. Second, when I use the estimates of ATT and ATU in Table \ref{tab:mp3} to recover the hypothetical OLS weights, I obtain 22.8\% as the mean of $\hat{w}_{ATT}$. This is reasonably similar to the proportion of untreated units, 12.5\%.}

\section{Conclusion}
\label{sec:conclusion}

This paper proposed a new interpretation of the OLS estimand for the effect of a binary treatment in the standard linear model with additive effects. According to the main result of this paper, the OLS estimand is a convex combination of two parameters, which under certain conditions are equivalent to the average treatment effects on the treated (ATT) and untreated (ATU)\@. Surprisingly, the weights on these parameters are inversely related to the proportion of observations in each group, which can lead to substantial biases when interpreting the OLS estimand as ATE or ATT\@.

One lesson from this result is that it might be preferable, as suggested by a body of work in econometrics, to use any of the standard estimators of average treatment effects under ignorability, such as regression adjustment, weighting, matching, and various combinations of these approaches. Empirical researchers with a preference for OLS might instead want to use the diagnostic tools that this paper also provided. These diagnostics, which are implemented in the \texttt{\small{hettreatreg}} package in R and Stata, are applicable whenever the researcher is: \textit{(i)} studying the effects of a binary treatment, \textit{(ii)} using OLS, and \textit{(iii)} unwilling to maintain that ATT is exactly equal to ATU\@. In an important special case, these diagnostics only require the knowledge of the proportion of treated units.

\pagebreak

\begin{appendices}

\renewcommand{\thesection}{\Alph{section}}
\renewcommand{\thesubsection}{\Alph{section}\arabic{subsection}}

\onehalfspacing

\section*{Online Appendix}
\section{Proof of Theorem \ref{the:ols}}
\label{app:proofs}

\setcounter{equation}{0}
\renewcommand{\theequation}{\ref{app:proofs}\arabic{equation}}
\setcounter{lemma}{0}
\renewcommand{\thelemma}{\ref{app:proofs}\arabic{lemma}}

First, consider equation (\ref{lp_y}) in the main text, $\lp \left( y \mid 1, d, X \right) = \alpha + \tau d + X \beta$. By the Frisch--Waugh theorem, $\tau = \tau_a$, where $\tau_a$ is defined by
\begin{equation}
\label{aux}
\lp \left[ y \mid 1, d, p \left( X \right) \right] = \alpha_a + \tau_a d + \gamma_a \cdot p \left( X \right).
\end{equation}
Second, note that (\ref{aux}) is a linear projection of $y$ on two variables: one binary, $d$, and one arbitrarily discrete or continuous, $p \left( X \right)$. Thus, we can use the following result from \cite{EGH2010}.

\bigskip
\begin{lemma}[\citealp{EGH2010}]
\label{lem:egh}
Let $\lp \left( y \mid 1, d, x \right) = \alpha_e + \tau_e d + \beta_e x$ denote the linear projection of $y$ on $d$ (a binary variable) and $x$ (a single, possibly continuous variable). Then, assuming all objects are well defined,
\begingroup
\jot=10pt
\begin{eqnarray}
\tau_e & = & \frac{\rho \cdot \var \left( x \mid d=1 \right)}{\rho \cdot \var \left( x \mid d=1 \right) + \left( 1 - \rho \right) \cdot \var \left( x \mid d=0 \right)} \cdot \theta_1
\nonumber\\
& + & \frac{\left( 1 - \rho \right) \cdot \var \left( x \mid d=0 \right)}{\rho \cdot \var \left( x \mid d=1 \right) + \left( 1 - \rho \right) \cdot \var \left( x \mid d=0 \right)} \cdot \theta_0,
\nonumber
\end{eqnarray}
\endgroup
where
\begin{displaymath}
\theta_1 = \frac{\cov \left( d, y \right)}{\var \left( d \right)} - \frac{\cov \left( d, x \right)}{\var \left( d \right)} \cdot \frac{\cov \left( x, y \mid d=1 \right)}{\var \left( x \mid d=1 \right)}
\end{displaymath}
and
\begin{displaymath}
\theta_0 = \frac{\cov \left( d, y \right)}{\var \left( d \right)} - \frac{\cov \left( d, x \right)}{\var \left( d \right)} \cdot \frac{\cov \left( x, y \mid d=0 \right)}{\var \left( x \mid d=0 \right)}.
\end{displaymath}
\end{lemma}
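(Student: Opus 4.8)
The plan is to obtain $\tau_e$ in closed form from the normal equations of the projection and then rewrite every second moment in terms of within-group and between-group components, exploiting that $d$ is binary; the stated weighted-average form will then drop out. \textbf{Step 1 (closed form for $\tau_e$).} The normal equations for $\lp(y\mid 1,d,x) = \alpha_e + \tau_e d + \beta_e x$ are $\cov(d,y) = \tau_e\var(d) + \beta_e\cov(d,x)$ and $\cov(x,y) = \tau_e\cov(d,x) + \beta_e\var(x)$; solving this $2\times 2$ system (equivalently, applying the Frisch--Waugh theorem as at the start of this appendix) gives
\[
\tau_e = \frac{\cov(d,y)\,\var(x) - \cov(x,y)\,\cov(d,x)}{\var(d)\,\var(x) - \cov(d,x)^2},
\]
where ``all objects well defined'' ensures the denominator is nonzero and the within-group slopes used below exist.

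\textbf{Step 2 (binary decomposition and cancellation).} Write $\rho = \pr(d=1)$ and, for a generic random variable $z$, set $\Delta z = \e(z\mid d=1) - \e(z\mid d=0)$; abbreviate $W = \rho\var(x\mid d=1) + (1-\rho)\var(x\mid d=0)$ and $C = \rho\cov(x,y\mid d=1) + (1-\rho)\cov(x,y\mid d=0)$. Using the elementary identities for a binary regressor, $\var(d) = \rho(1-\rho)$ and $\cov(d,z) = \rho(1-\rho)\,\Delta z$, together with the laws of total variance and total covariance conditional on $d$, namely $\var(x) = W + \rho(1-\rho)(\Delta x)^2$ and $\cov(x,y) = C + \rho(1-\rho)\,\Delta x\,\Delta y$, I would substitute into the formula from Step 1. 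The numerator then equals $\rho(1-\rho)\big[\Delta y\,W - \Delta x\,C\big]$, since the two terms proportional to $[\rho(1-\rho)]^2(\Delta x)^2\,\Delta y$ cancel, and the denominator collapses to $\rho(1-\rho)\,W$; hence $\tau_e = \Delta y - \Delta x\cdot(C/W)$.

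\textbf{Step 3 (matching the stated expression).} Because $\var(d) = \rho(1-\rho)$, we have $\cov(d,y)/\var(d) = \Delta y$ and $\cov(d,x)/\var(d) = \Delta x$, so $\theta_j = \Delta y - \Delta x\cdot[\cov(x,y\mid d=j)/\var(x\mid d=j)]$ for $j=0,1$. With the stated weights $w_1 = \rho\var(x\mid d=1)/W$ and $w_0 = (1-\rho)\var(x\mid d=0)/W$, a one-line check gives $w_1 + w_0 = 1$ and $w_1\cdot[\cov(x,y\mid d=1)/\var(x\mid d=1)] + w_0\cdot[\cov(x,y\mid d=0)/\var(x\mid d=0)] = C/W$, so $w_1\theta_1 + w_0\theta_0 = \Delta y - \Delta x\cdot(C/W) = \tau_e$, which is exactly the displayed identity. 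The only genuine obstacle is the bookkeeping in Step 2 — applying the total-variance and total-covariance decompositions consistently and confirming that the between-group cross terms in the numerator cancel; once the binary structure of $d$ is exploited there is no conceptual difficulty, and no assumptions beyond those stated are needed.
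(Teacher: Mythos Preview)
Your proof is correct. The algebra in Step~2 checks out: the between-group terms in both numerator and denominator cancel exactly as you claim, leaving $\tau_e = \Delta y - \Delta x\cdot(C/W)$, and Step~3 is a straightforward verification that the stated convex combination reproduces this expression.

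Regarding comparison with the paper: the paper does not actually prove this lemma. It is stated as a result of \cite{EGH2010} and used as a black box in the proof of Theorem~\ref{the:ols}; no argument is given in the appendix beyond the citation. Your derivation therefore supplies a self-contained proof where the paper relies on an external reference. The route you take---closed-form partial regression coefficient, then law-of-total-variance/covariance decomposition exploiting the binary structure of $d$---is essentially the natural one and is presumably close to what the original EGH2010 paper does, though I cannot confirm that from the present source. One minor observation: the intermediate identity $\tau_e = \Delta y - \Delta x\cdot(C/W)$ that you obtain in Step~2 is exactly the content of Lemma~\ref{lem:deaton} (attributed to Deaton and Solon--Haider--Wooldridge) applied to the coefficient on $d$ rather than on $x$, so your argument in effect proves both lemmas simultaneously.
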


\bigskip\bigskip
\noindent
Combining the two pieces gives
\begingroup
\jot=10pt
\begin{eqnarray}
\tau & = & \frac{\rho \cdot \var \left[ p \left( X \right) \mid d=1 \right]}{\rho \cdot \var \left[ p \left( X \right) \mid d=1 \right] + \left( 1 - \rho \right) \cdot \var \left[ p \left( X \right) \mid d=0 \right]} \cdot \theta^*_1
\nonumber\\
& + & \frac{\left( 1 - \rho \right) \cdot \var \left[ p \left( X \right) \mid d=0 \right]}{\rho \cdot \var \left[ p \left( X \right) \mid d=1 \right] + \left( 1 - \rho \right) \cdot \var \left[ p \left( X \right) \mid d=0 \right]} \cdot \theta^*_0,
\end{eqnarray}
\endgroup
where
\begin{equation}
\theta^*_1 = \frac{\cov \left( d, y \right)}{\var \left( d \right)} - \frac{\cov \left[ d, p \left( X \right) \right]}{\var \left( d \right)} \cdot \frac{\cov \left[ p \left( X \right), y \mid d=1 \right]}{\var \left[ p \left( X \right) \mid d=1 \right]}
\end{equation}
and
\begin{equation}
\theta^*_0 = \frac{\cov \left( d, y \right)}{\var \left( d \right)} - \frac{\cov \left[ d, p \left( X \right) \right]}{\var \left( d \right)} \cdot \frac{\cov \left[ p \left( X \right), y \mid d=0 \right]}{\var \left[ p \left( X \right) \mid d=0 \right]}.
\end{equation}
Third, notice that $\theta^*_1 = \tau_{APLE, 0}$ and $\theta^*_0 = \tau_{APLE, 1}$, as defined in equation (\ref{tau_apej}) in the main text. Indeed,
\begin{equation}
\frac{\cov \left( d, y \right)}{\var \left( d \right)} = \e \left( y \mid d=1 \right) - \e \left( y \mid d=0 \right)
\end{equation}
and also
\begin{equation}
\frac{\cov \left[ d, p \left( X \right) \right]}{\var \left( d \right)} = \e \left[ p \left( X \right) \mid d=1 \right] - \e \left[ p \left( X \right) \mid d=0 \right].
\end{equation}
Moreover, for $j=0,1$,
\begin{equation}
\frac{\cov \left[ p \left( X \right), y \mid d=j \right]}{\var \left[ p \left( X \right) \mid d=j \right]} = \gamma_j,
\end{equation}
where $\gamma_1$ and $\gamma_0$ are defined in equations (\ref{lp_y1}) and (\ref{lp_y0}) in the main text, respectively. Because
\begin{eqnarray}
\e \left( y \mid d=1 \right) - \e \left( y \mid d=0 \right) & = & \left\lbrace \e \left[ p \left( X \right) \mid d=1 \right] - \e \left[ p \left( X \right) \mid d=0 \right] \right\rbrace \cdot \gamma_1
\nonumber\\
& + & \left( \alpha_1 - \alpha_0 \right) + \left( \gamma_1 - \gamma_0 \right) \cdot \e \left[ p \left( X \right) \mid d=0 \right]
\label{ob1}
\end{eqnarray}
and also
\begin{eqnarray}
\e \left( y \mid d=1 \right) - \e \left( y \mid d=0 \right) & = & \left\lbrace \e \left[ p \left( X \right) \mid d=1 \right] - \e \left[ p \left( X \right) \mid d=0 \right] \right\rbrace \cdot \gamma_0
\nonumber\\
& + & \left( \alpha_1 - \alpha_0 \right) + \left( \gamma_1 - \gamma_0 \right) \cdot \e \left[ p \left( X \right) \mid d=1 \right],
\label{ob2}
\end{eqnarray}
where again $\alpha_1$ and $\alpha_0$ are defined in equations (\ref{lp_y1}) and (\ref{lp_y0}) in the main text, we get the result that $\theta^*_1 = \tau_{APLE, 0}$ and $\theta^*_0 = \tau_{APLE, 1}$. Note that equations (\ref{ob1}) and (\ref{ob2}) correspond to special cases of the Oaxaca--Blinder decomposition \citep{Blinder1973, Oaxaca1973, FLF2011}, which is also the focus of \cite{EGH2010}. Finally, combining the three pieces gives
\begingroup
\jot=10pt
\begin{eqnarray}
\tau & = & \frac{\rho \cdot \var \left[ p \left( X \right) \mid d=1 \right]}{\rho \cdot \var \left[ p \left( X \right) \mid d=1 \right] + \left( 1 - \rho \right) \cdot \var \left[ p \left( X \right) \mid d=0 \right]} \cdot \tau_{APLE, 0}
\nonumber\\
& + & \frac{\left( 1 - \rho \right) \cdot \var \left[ p \left( X \right) \mid d=0 \right]}{\rho \cdot \var \left[ p \left( X \right) \mid d=1 \right] + \left( 1 - \rho \right) \cdot \var \left[ p \left( X \right) \mid d=0 \right]} \cdot \tau_{APLE, 1},
\end{eqnarray}
\endgroup
which completes the proof.

\pagebreak

\section{Extensions}

\subsection{Proportion of Treated Units and OLS Weights}
\label{app:monotonic}

\setcounter{equation}{0}
\renewcommand{\theequation}{\ref{app:monotonic}.\arabic{equation}}
\setcounter{assumption}{0}
\renewcommand{\theassumption}{\ref{app:monotonic}.\arabic{assumption}}
\setcounter{proposition}{0}
\renewcommand{\theproposition}{\ref{app:monotonic}.\arabic{proposition}}

To show formally that $w_1$ is decreasing in $\rho$ and that $w_0$ is increasing in $\rho$, it is convenient to additionally assume that $\e \left( d \mid X \right)$ is linear in $X$.

\begin{assumption}
\label{ass:linpx}
$\e \left( d \mid X \right) = p \left( X \right) = \alpha_p + X \beta_p$.
\end{assumption}

\noindent
This restriction is satisfied automatically in saturated models, as studied by \cite{Angrist1998} and \cite{Humphreys2009}. It is also used by \cite{AS2016} and \cite{AAIW2020}. In the present context there are two reasons why Assumption \ref{ass:linpx} is useful. First, it allows us to rewrite $w_0$ and $w_1$ solely in terms of unconditional expectations of $p \left( X \right)$ and its powers. Second, it simplifies the calculation of the derivatives of $w_0$ and $w_1$ with respect to the intercept of the propensity score model. Imposing a shift on this intercept is equivalent to changing $\rho$ by a small amount. It turns out that Theorem \ref{the:ols} and Assumption \ref{ass:linpx} imply the following result.

\begin{proposition}
\label{prop:monotonic}
Under Assumptions \ref{ass:ols}, \ref{ass:px}, and \ref{ass:linpx},
\begin{eqnarray}
\frac{d w_1}{d \alpha_p} < 0 \text{\phantom{text} and \phantom{text}} \frac{d w_0}{d \alpha_p} > 0.
\nonumber
\end{eqnarray}
\end{proposition}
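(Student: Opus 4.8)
The plan is to lean on Assumption~\ref{ass:linpx}, under which $p(X)=\e(d\mid X)$ is the \emph{true} propensity score, so that for any integrable $g$ the balancing identities $\e[g(X)\mid d=1]=\e[g(X)p(X)]/\rho$ and $\e[g(X)\mid d=0]=\e[g(X)(1-p(X))]/(1-\rho)$ hold. Writing $m_k=\e[p(X)^k]$ (so $m_1=\rho$), these identities re-express both conditional variances in Theorem~\ref{the:ols}, and hence $w_0$ and $w_1$, purely in terms of $m_1,m_2,m_3$ --- the first use of Assumption~\ref{ass:linpx} advertised in the text.

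Next I operationalize ``changing $\rho$ by a small amount'' as an intercept shift: replacing $\alpha_p$ by $\alpha_p+t$ replaces $p(X)$ by $p(X)+t$ and $\rho$ by $\rho+t$, while the distribution of $X$ (hence of $X\beta_p$) is held fixed, so each $m_k$ becomes $m_k(t)=\e[(p(X)+t)^k]$. Using the balancing identities at the perturbed score together with shift-invariance of conditional variances, I write $w_1$ as $A(t)/\bigl(A(t)+B(t)\bigr)$, where $A(t)$ and $B(t)$ are the $t$-perturbed analogues of $(1-\rho)\var[p(X)\mid d=0]$ and $\rho\var[p(X)\mid d=1]$ from Theorem~\ref{the:ols}, each an explicit quadratic-over-linear rational function of $t$ with coefficients polynomial in $m_1,m_2,m_3$. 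Since $\frac{dw_1}{d\alpha_p}\big|_{t=0}=\bigl(A'(0)B(0)-A(0)B'(0)\bigr)/\bigl(A(0)+B(0)\bigr)^2$ and $\frac{dw_0}{d\alpha_p}=-\frac{dw_1}{d\alpha_p}$, everything reduces to signing $A'(0)B(0)-A(0)B'(0)$.

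Carrying out the differentiation, I expect $B'(0)=m_2 V/\rho^2$ and $A'(0)=V(2\rho-1-m_2)/(1-\rho)^2$ with $V=\var[p(X)]=m_2-\rho^2>0$, while $B(0)=(\rho m_3-m_2^2)/\rho$ and $A(0)=\bigl[(1-\rho)(m_2-m_3)-(\rho-m_2)^2\bigr]/(1-\rho)$. Assembling and simplifying the numerator --- the terms cubic in $m_2$ collapse once one notes $m_2^2-(1-\rho+\rho^2)m_2+\rho^2(1-\rho)=(m_2-\rho^2)(m_2-1+\rho)$ --- it should reduce to a positive multiple of $(1-2\rho)m_3+m_2^2+(\rho-1)m_2$. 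The sign of this bracket follows from the decomposition $(1-2\rho)m_3+m_2^2+(\rho-1)m_2=(1-\rho)(m_3-m_2)+(m_2^2-\rho m_3)$: the first summand is $\le 0$ because $m_3-m_2=-\e[p(X)^2(1-p(X))]\le 0$ (as $0\le p(X)\le 1$), and the second is $\le 0$ by the Cauchy--Schwarz/Lyapunov inequality $(\e[p(X)^2])^2\le\e[p(X)]\,\e[p(X)^3]$. Strictness is forced by Assumption~\ref{ass:px}: if $m_3-m_2=0$ then $p(X)\in\{0,1\}$ almost surely, which makes both $\var[p(X)\mid d=1]$ and $\var[p(X)\mid d=0]$ equal to zero. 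Hence $A'(0)B(0)-A(0)B'(0)<0$, giving $\frac{dw_1}{d\alpha_p}<0$ and $\frac{dw_0}{d\alpha_p}>0$.

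The main obstacle is the middle bookkeeping: pushing the perturbed conditional variances through the quotient rule and collecting a degree-three polynomial in $(m_1,m_2,m_3)$ into the form above, where a single moment inequality then finishes the job. The surrounding steps are either definitional (the balancing identities, shift-invariance) or one-line inequalities, so the only real risk is arithmetic slippage in that collapse; I would guard against it by checking the final reduction on a tractable case such as $p(X)\sim\mathrm{Uniform}[0,1]$, for which $\rho=1/2$, $m_2=1/3$, $m_3=1/4$ and the bracket equals $-1/18<0$.
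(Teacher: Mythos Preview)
Your setup is the same as the paper's: use Assumption~\ref{ass:linpx} to rewrite $a_1=(1-\rho)\var[p(X)\mid d=0]$ and $a_0=\rho\var[p(X)\mid d=1]$ (your $A$ and $B$) as rational functions of the raw moments $m_k=\e[p(X)^k]$, then differentiate with respect to an intercept shift. Your formulas $B'(0)=m_2V/\rho^2$ and $A'(0)=V(2\rho-1-m_2)/(1-\rho)^2$ are exactly the paper's.

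The divergence is only at the last step, and the paper's route is considerably shorter. You apply the quotient rule, assemble $A'(0)B(0)-A(0)B'(0)$, collapse a cubic polynomial in $(m_1,m_2,m_3)$, and then sign the result using two moment inequalities ($m_3\le m_2$ from $0\le p(X)\le1$, and $m_2^2\le m_1 m_3$ from Cauchy--Schwarz). The paper instead notices that your derivatives are already manifestly signed once rewritten: $B'(0)=\e[p(X)^2]\,\var[p(X)]/\rho^2>0$ and $A'(0)=-\var[p(X)]\,\e[(1-p(X))^2]/(1-\rho)^2<0$, since $2\rho-1-m_2=-\e[(1-p(X))^2]$. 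With $A(0),B(0)>0$ (Assumption~\ref{ass:px}), $A'(0)<0$, $B'(0)>0$, the sign of $dw_1/d\alpha_p$ follows immediately from $w_1=A/(A+B)$ without ever expanding $A'(0)B(0)-A(0)B'(0)$. In other words, the ``main obstacle'' you flag --- the degree-three bookkeeping and the moment inequalities --- is entirely avoidable; the paper's argument needs neither Cauchy--Schwarz nor the constraint $p(X)\in[0,1]$ beyond what is already implicit in $\var[p(X)]>0$ and $\e[(1-p(X))^2]>0$.
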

\begin{proof}
For simplicity, we first focus on $a_0$ and $a_1$, which we define as $a_0 = \rho \cdot \var \left[ p \left( X \right) \mid d=1 \right]$ and $a_1 = \left( 1 - \rho \right) \cdot \var \left[ p \left( X \right) \mid d=0 \right]$. As a result, $w_0 = \frac{a_0}{a_0 + a_1}$ and $w_1 = \frac{a_1}{a_0 + a_1}$. It turns out that we can rewrite $a_0$ as
\begin{eqnarray}
a_0 & = & \e \left( d \right) \cdot \e \left( \left\lbrace p \left( X \right) - \e \left[ p \left( X \right) \mid d=1 \right] \right\rbrace ^{2} \mid d=1 \right) \nonumber\\
& = & \e \left( d \right) \cdot \left( \e \left[ p \left( X \right) ^{2} \mid d=1 \right] - \left\lbrace \e \left[ p \left( X \right) \mid d=1 \right] \right\rbrace ^{2} \right) \nonumber\\
& = & \e \left( d \right) \cdot \left( \frac{\e \left[ p \left( X \right) ^{2} d \right]}{\e \left( d \right)} - \left\lbrace \frac{\e \left[ p \left( X \right) d \right]}{\e \left( d \right)} \right\rbrace ^{2} \right) \nonumber\\
& = & \e \left[ p \left( X \right) ^{2} d \right] - \frac{\left\lbrace \e \left[ p \left( X \right) d \right] \right\rbrace ^{2}}{\e \left( d \right)} \nonumber\\
& = & \e \left[ p \left( X \right) ^{2} \e \left( d \mid X \right) \right] - \frac{\left\lbrace \e \left[ p \left( X \right) \e \left( d \mid X \right) \right] \right\rbrace ^{2}}{\e \left[ \e \left( d \mid X \right) \right]} \nonumber\\
& = & \e \left[ p \left( X \right) ^{3} \right] - \frac{\left\lbrace \e \left[ p \left( X \right) ^{2} \right] \right\rbrace ^{2}}{\e \left[ p \left( X \right) \right]}.
\end{eqnarray}
Then, taking the derivative of $a_0$ with respect to $\alpha_p$ gives
\begin{eqnarray}
\frac{d a_0}{d \alpha_p} & = & 3 \e \left[ p \left( X \right) ^{2} \right] - \frac{4 \e \left[ p \left( X \right) ^{2} \right] \e \left[ p \left( X \right) \right]}{\e \left[ p \left( X \right) \right]} + \frac{\left\lbrace \e \left[ p \left( X \right) ^{2} \right] \right\rbrace ^{2}}{\e \left[ p \left( X \right) \right] ^{2}} \nonumber\\
& = & - \e \left[ p \left( X \right) ^{2} \right] + \frac{\left\lbrace \e \left[ p \left( X \right) ^{2} \right] \right\rbrace ^{2}}{\e \left[ p \left( X \right) \right] ^{2}} \nonumber\\
& = & \frac{\left\lbrace \e \left[ p \left( X \right) ^{2} \right] \right\rbrace ^{2} - \e \left[ p \left( X \right) ^{2} \right] \e \left[ p \left( X \right) \right] ^{2}}{\e \left[ p \left( X \right) \right] ^{2}} \nonumber\\
& = & \frac{\e \left[ p \left( X \right) ^{2} \right] \left\lbrace \e \left[ p \left( X \right) ^{2} \right] - \e \left[ p \left( X \right) \right] ^{2} \right\rbrace}{\e \left[ p \left( X \right) \right] ^{2}} \nonumber\\
& = & \frac{\e \left[ p \left( X \right) ^{2} \right] \var \left[ p \left( X \right) \right]}{\e \left[ p \left( X \right) \right] ^{2}} > 0.
\end{eqnarray}
Similarly,
\begin{eqnarray}
a_1 & = & \left[ 1 - \e \left( d \right) \right] \cdot \e \left( \left\lbrace p \left( X \right) - \e \left[ p \left( X \right) \mid d=0 \right] \right\rbrace ^{2} \mid d=0 \right) \nonumber\\
& = & \left[ 1 - \e \left( d \right) \right] \cdot \left( \e \left[ p \left( X \right) ^{2} \mid d=0 \right] - \left\lbrace \e \left[ p \left( X \right) \mid d=0 \right] \right\rbrace ^{2} \right) \nonumber\\
& = & \left[ 1 - \e \left( d \right) \right] \cdot \left( \frac{\e \left[ p \left( X \right) ^{2} \right] - \e \left[ p \left( X \right) ^{2} d \right]}{1 - \e \left( d \right)} - \left\lbrace \frac{\e \left[ p \left( X \right) \right] - \e \left[ p \left( X \right) d \right]}{1 - \e \left( d \right)} \right\rbrace ^{2} \right) \nonumber\\
& = & \e \left[ p \left( X \right) ^{2} \right] - \e \left[ p \left( X \right) ^{2} d \right] - \frac{\left\lbrace \e \left[ p \left( X \right) \right] - \e \left[ p \left( X \right) d \right] \right\rbrace ^{2}}{1 - \e \left( d \right)} \nonumber\\
& = & \e \left[ p \left( X \right) ^{2} \right] - \e \left[ p \left( X \right) ^{2} \e \left( d \mid X \right) \right] - \frac{\left\lbrace \e \left[ p \left( X \right) \right] - \e \left[ p \left( X \right) \e \left( d \mid X \right) \right] \right\rbrace ^{2}}{1 - \e \left[ \e \left( d \mid X \right) \right]} \nonumber\\
& = & \e \left[ p \left( X \right) ^{2} \right] - \e \left[ p \left( X \right) ^{3} \right] - \frac{\left\lbrace \e \left[ p \left( X \right) \right] - \e \left[ p \left( X \right) ^{2} \right] \right\rbrace ^{2}}{1 - \e \left[ p \left( X \right) \right]}
\end{eqnarray}
and
\begin{eqnarray}
\frac{d a_1}{d \alpha_p} & = & 2 \e \left[ p \left( X \right) \right] - 3 \e \left[ p \left( X \right) ^{2} \right] - \frac{\left\lbrace \e \left[ p \left( X \right) \right] - \e \left[ p \left( X \right) ^{2} \right] \right\rbrace ^{2}}{\left\lbrace 1 - \e \left[ p \left( X \right) \right] \right\rbrace ^{2}} \nonumber\\
& - & \frac{2 \cdot \left\lbrace 1 - \e \left[ p \left( X \right) \right] \right\rbrace \cdot \left\lbrace 1 - 2 \e \left[ p \left( X \right) \right] \right\rbrace \cdot \left\lbrace \e \left[ p \left( X \right) \right] - \e \left[ p \left( X \right) ^{2} \right] \right\rbrace}{\left\lbrace 1 - \e \left[ p \left( X \right) \right] \right\rbrace ^{2}} \nonumber\\
& = & \frac{\e \left[ p \left( X \right) \right] ^{2} - \e \left[ p \left( X \right) ^{2} \right]}{\left\lbrace 1 - \e \left[ p \left( X \right) \right] \right\rbrace ^{2}} \nonumber\\
& + & \frac{2 \e \left[ p \left( X \right) \right] \e \left[ p \left( X \right) ^{2} \right] - 2 \e \left[ p \left( X \right) \right] ^{3}}{\left\lbrace 1 - \e \left[ p \left( X \right) \right] \right\rbrace ^{2}} \nonumber\\
& + & \frac{\e \left[ p \left( X \right) ^{2} \right] \e \left[ p \left( X \right) \right] ^{2} - \left\lbrace \e \left[ p \left( X \right) ^{2} \right] \right\rbrace ^{2}}{\left\lbrace 1 - \e \left[ p \left( X \right) \right] \right\rbrace ^{2}} \nonumber\\
& = & \frac{- \var \left[ p \left( X \right) \right] \cdot \left\lbrace 1 - 2 \e \left[ p \left( X \right) \right] + \e \left[ p \left( X \right) ^{2} \right] \right\rbrace}{\left\lbrace 1 - \e \left[ p \left( X \right) \right] \right\rbrace ^{2}} \nonumber\\
& = & \frac{- \var \left[ p \left( X \right) \right] \cdot \e \left\lbrace \left[ 1 - p \left( X \right) \right] ^{2} \right\rbrace}{\left\lbrace 1 - \e \left[ p \left( X \right) \right] \right\rbrace ^{2}} < 0.
\end{eqnarray}
Finally, it follows that
\begin{eqnarray}
\frac{d w_1}{d \alpha_p} < 0 & \mathrm{and} & \frac{d w_0}{d \alpha_p} > 0,
\end{eqnarray}
since $w_0 = \frac{a_0}{a_0 + a_1}$, $w_1 = \frac{a_1}{a_0 + a_1}$, $a_0 > 0$, $a_1 > 0$, $\frac{d a_0}{d \alpha_p} > 0$, and $\frac{d a_1}{d \alpha_p} < 0$.
\end{proof}

\pagebreak

\subsection{Further Intuition for Theorem \ref{the:ols}}
\label{app:resid}

\setcounter{equation}{0}
\renewcommand{\theequation}{\ref{app:resid}.\arabic{equation}}
\setcounter{lemma}{0}
\renewcommand{\thelemma}{\ref{app:resid}.\arabic{lemma}}

We begin by noting that because the linear projection passes through the point of means of all variables, which implies, for example, that $\mathrm{E} \left( y \mid d=1 \right) = \alpha_1 + \gamma_1 \cdot \mathrm{E} \left[ p \left( X \right) \mid d=1 \right]$ and $\mathrm{E} \left( y \mid d=0 \right) = \alpha_0 + \gamma_0 \cdot \mathrm{E} \left[ p \left( X \right) \mid d=0 \right]$, the average partial linear effects of $d$ on both groups of interest can also be expressed as
\begin{equation}
\label{tau_ape1}
\tau_{APLE, 1} = \mathrm{E} \left( y \mid d=1 \right) - \left\lbrace \alpha_0 + \gamma_0 \cdot \mathrm{E} \left[ p \left( X \right) \mid d=1 \right] \right\rbrace
\end{equation}
and
\begin{equation}
\label{tau_ape0}
\tau_{APLE, 0} = \left\lbrace \alpha_1 + \gamma_1 \cdot \mathrm{E} \left[ p \left( X \right) \mid d=0 \right] \right\rbrace - \mathrm{E} \left( y \mid d=0 \right).
\end{equation}
In other words, we only need the linear projection of $y$ on $p \left( X \right)$ in group zero, and not in group one, to define $\tau_{APLE, 1}$. Similarly, we need the linear projection of $y$ on $p \left( X \right)$ in group one, but not in group zero, to define $\tau_{APLE, 0}$. When all objects are well defined, $\tau_{APLE, j}$ is also equivalent to the coefficient on $d$ in the linear projection of $y$ on $d$, $p \left( X \right)$, and $d \cdot \left\lbrace p \left( X \right) - \mathrm{E} \left[ p \left( X \right) \mid d=j \right] \right\rbrace$.

Then, an alternative intuition for the OLS weights in Theorem \ref{the:ols} follows from partial residualization that is implicit in least squares estimation. The first thing to note is that $\tau$, the OLS estimand, is equal to the coefficient on $d$ in the linear projection of $y - \gamma_a \cdot p \left( X \right)$ on $d$, where $\gamma_a$ is defined in equation (\ref{aux}). An implication of \cite{Deaton1997} and \cite{SHW2015} is that $\gamma_a$ is also a convex combination of $\gamma_1$ and $\gamma_0$, where the weight on $\gamma_1$ is \textit{increasing} in $\rho$. It follows that $\tau$ is a weighted average as well; it combines the coefficients on $d$ in the linear projections of $y - \gamma_1 \cdot p \left( X \right)$ and $y - \gamma_0 \cdot p \left( X \right)$ on $d$ in group zero and one, respectively. While the weight on the former (latter) is increasing (decreasing) in $\rho$, this parameter corresponds to $\tau_{APLE, 0}$ ($\tau_{APLE, 1}$), as can be seen from equations (\ref{tau_ape1}) and (\ref{tau_ape0}). Indeed, as noted above, it is $\gamma_1$ (and not $\gamma_0$) that is necessary to define $\tau_{APLE, 0}$. The bottom line is that when there are more treated than untreated units, $\gamma_1$ is likely to be better estimated than $\gamma_0$ and OLS gives more weight to the contrast of $y - \gamma_1 \cdot p \left( X \right)$, which in turn corresponds to $\tau_{APLE, 0}$. Interestingly, this parallels the intuition in \cite{Angrist1998} and \cite{AP2009} that OLS gives more weight to treatment effects that are better estimated in finite samples. Also, this discussion leads to an alternative proof of Theorem \ref{the:ols}.

\begin{proof}
As in online appendix \ref{app:proofs}, consider equation (\ref{lp_y}) in the main text, $\lp \left( y \mid 1, d, X \right) = \alpha + \tau d + X \beta$, and note that $\tau = \tau_a$, where $\tau_a$ is defined by $\lp \left[ y \mid 1, d, p \left( X \right) \right] = \alpha_a + \tau_a d + \gamma_a \cdot p \left( X \right)$. We can write this linear projection in error form as
\begin{equation}
\label{lp_jann}
y = \alpha_a + \tau_a d + \gamma_a \cdot p \left( X \right) + \upsilon.
\end{equation}
As in the main text, we also consider separate linear projections for $d=1$ and $d=0$, namely
\begin{equation}
\label{lp_y1_bis}
\lp \left[ y \mid 1, p \left( X \right), d=1 \right] = \alpha_1 + \gamma_1 \cdot p \left( X \right)
\end{equation}
and
\begin{equation}
\label{lp_y0_bis}
\lp \left[ y \mid 1, p \left( X \right), d=0 \right] = \alpha_0 + \gamma_0 \cdot p \left( X \right).
\end{equation}
Henceforth, to simplify notation I will use $l_1(X)$ to denote $\alpha_1 + \gamma_1 \cdot p \left( X \right)$ and $l_0(X)$ to denote $\alpha_0 + \gamma_0 \cdot p \left( X \right)$. To understand the relationship between $\gamma_a$, $\gamma_1$, and $\gamma_0$, we can use the following result from \cite{Deaton1997} and \cite{SHW2015}.

\bigskip
\begin{lemma}[\citealp{Deaton1997,SHW2015}]
\label{lem:deaton}
Let $\lp \left( y \mid 1, d, x \right) = \alpha_e + \tau_e d + \beta_e x$ denote the linear projection of $y$ on $d$ (a binary variable) and $x$ (a single, possibly continuous variable). Then, assuming all objects are well defined,
\begingroup
\jot=10pt
\begin{eqnarray}
\beta_e & = & \frac{\rho \cdot \var \left( x \mid d=1 \right)}{\rho \cdot \var \left( x \mid d=1 \right) + \left( 1 - \rho \right) \cdot \var \left( x \mid d=0 \right)} \cdot \beta_{1,e}
\nonumber\\
& + & \frac{\left( 1 - \rho \right) \cdot \var \left( x \mid d=0 \right)}{\rho \cdot \var \left( x \mid d=1 \right) + \left( 1 - \rho \right) \cdot \var \left( x \mid d=0 \right)} \cdot \beta_{0,e},
\nonumber
\end{eqnarray}
\endgroup
where $\beta_{1,e}$ and $\beta_{0,e}$ are defined by
\begin{displaymath}
\lp \left( y \mid 1, x, d=1 \right) = \alpha_{1,e} + \beta_{1,e} x
\end{displaymath}
and
\begin{displaymath}
\lp \left( y \mid 1, x, d=0 \right) = \alpha_{0,e} + \beta_{0,e} x.
\end{displaymath}
\end{lemma}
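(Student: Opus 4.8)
The plan is to reduce the statement to a single-regressor linear projection via the Frisch--Waugh--Lovell (FWL) theorem and then evaluate that projection using iterated expectations. The key first observation is that, because $d$ is binary, the linear projection of $x$ on $(1,d)$ coincides with the conditional expectation $\e(x \mid d)$; hence the FWL residual $\tilde{x} \equiv x - \lp(x \mid 1, d) = x - \e(x \mid d)$ is simply $x$ demeaned within each treatment group, equal to $x - \e(x \mid d=1)$ on $\{d=1\}$ and to $x - \e(x \mid d=0)$ on $\{d=0\}$. By FWL, $\beta_e$ equals the slope coefficient in the linear projection of $y$ on $\tilde{x}$, so $\beta_e = \cov(y, \tilde{x}) / \var(\tilde{x})$, and by construction $\e(\tilde{x}) = 0$ and $\e(\tilde{x} \mid d) = 0$.

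I would then compute the two pieces separately. For the denominator, the law of total variance gives $\var(\tilde{x}) = \e[\var(x \mid d)] = \rho \cdot \var(x \mid d=1) + (1-\rho) \cdot \var(x \mid d=0)$, the between-group term vanishing because $\tilde{x}$ has conditional mean zero. For the numerator, $\cov(y, \tilde{x}) = \e(y \tilde{x}) = \e[\e(y \tilde{x} \mid d)]$, and within group $j$ one has $\e(y \tilde{x} \mid d=j) = \e[\,y\,(x - \e(x \mid d=j)) \mid d=j\,] = \cov(y, x \mid d=j)$, so $\cov(y, \tilde{x}) = \rho \cdot \cov(y, x \mid d=1) + (1-\rho) \cdot \cov(y, x \mid d=0)$. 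Finally, the definitions of $\beta_{1,e}$ and $\beta_{0,e}$ give $\cov(y, x \mid d=j) = \beta_{j,e} \cdot \var(x \mid d=j)$; substituting this into $\cov(y, \tilde{x}) / \var(\tilde{x})$ and collecting terms yields exactly the stated convex combination, whose weights are visibly nonnegative and sum to one.

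The ``assuming all objects are well defined'' proviso is what keeps every denominator away from zero: it forces $\var(x \mid d=j) > 0$ for $j = 0, 1$, which makes the within-group projections (and hence $\beta_{1,e}$ and $\beta_{0,e}$) exist and be unique, and in turn makes $\var(\tilde{x}) > 0$, so the pooled projection exists. I do not expect a serious obstacle here; the one step that deserves care is the identification of $\lp(x \mid 1, d)$ with $\e(x \mid d)$ for binary $d$, together with the correct FWL bookkeeping (residualizing both $y$ and $x$ against $(1, d)$ gives the same slope on $\tilde{x}$ as residualizing $x$ alone, by linearity of the projection). Everything after that is a routine application of the law of iterated expectations. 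I would also note that this lemma is the exact analogue, for the slope on $x$, of Lemma \ref{lem:egh} for the coefficient on $d$, so an alternative route would be to mimic whatever decomposition argument underlies that earlier result.
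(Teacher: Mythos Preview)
Your argument is correct. The FWL reduction is exactly the right move: because $(1,d)$ is saturated for binary $d$, the projection residual $\tilde{x}$ is the within-group demeaned $x$, and from there the law of total variance and iterated expectations deliver the numerator and denominator cleanly. The identification $\cov(y,x\mid d=j)=\beta_{j,e}\cdot\var(x\mid d=j)$ is just the definition of the within-group slope, so the final substitution goes through without issue.

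As for comparison: the paper does not actually supply a proof of this lemma. It is stated as a result imported from \cite{Deaton1997} and \cite{SHW2015} and then used as an input to the alternative proof of Theorem~\ref{the:ols}. So there is no ``paper's own proof'' to compare against; your FWL-plus-iterated-expectations route is a clean self-contained verification of the cited fact. Your closing observation that this is the $x$-coefficient analogue of Lemma~\ref{lem:egh} is also on point, and indeed the paper pairs the two lemmas in exactly that way across the two proofs of Theorem~\ref{the:ols}.
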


\bigskip\bigskip
\noindent
An implication of Lemma \ref{lem:deaton} is that
\begin{equation}
\gamma_a = w_0 \cdot \gamma_1 + w_1 \cdot \gamma_0.
\end{equation}
Next, we can rewrite equation (\ref{lp_jann}) as
\begin{eqnarray}
y - w_0 \cdot \gamma_1 \cdot p \left( X \right) - w_1 \cdot \gamma_0 \cdot p \left( X \right) &=& \alpha_a + \tau_a d + \upsilon
\nonumber\\
 &=& \e \left( y \right) - \tau_a \cdot \e \left( d \right) - \gamma_a \cdot \e \left[ p \left( X \right) \right] + \tau_a d + \upsilon.
\end{eqnarray}
Moreover, it turns out that
\begin{equation}
\alpha_1 = \e \left( y \mid d=1 \right) - \gamma_1 \cdot \e \left[ p \left( X \right) \mid d=1 \right]
\end{equation}
and also
\begin{equation}
\alpha_0 = \e \left( y \mid d=0 \right) - \gamma_0 \cdot \e \left[ p \left( X \right) \mid d=0 \right].
\end{equation}
It follows that
\begin{eqnarray}
y - w_0 \cdot l_1(X) - w_1 \cdot l_0(X) &=& \e \left( y \right) - w_0 \cdot \e \left( y \mid d=1 \right) - w_1 \cdot \e \left( y \mid d=0 \right)
\nonumber\\
 &+& w_0 \cdot \gamma_1 \cdot \left\lbrace \e \left[ p \left( X \right) \mid d=1 \right] - \e \left[ p \left( X \right) \right] \right\rbrace
\nonumber\\
 &+& w_1 \cdot \gamma_0 \cdot \left\lbrace \e \left[ p \left( X \right) \mid d=0 \right] - \e \left[ p \left( X \right) \right] \right\rbrace
\nonumber\\
 &-& \tau_a \cdot \e \left( d \right) + \tau_a d + \upsilon.
\end{eqnarray}
In other words, in a linear projection of $y - w_0 \cdot l_1(X) - w_1 \cdot l_0(X)$ on $d$, the coefficient on $d$ is equal to $\tau_a$ and the intercept is equal to $\e \left( y \right) - w_0 \cdot \e \left( y \mid d=1 \right) - w_1 \cdot \e \left( y \mid d=0 \right) + w_0 \cdot \gamma_1 \cdot \left\lbrace \e \left[ p \left( X \right) \mid d=1 \right] - \e \left[ p \left( X \right) \right] \right\rbrace + w_1 \cdot \gamma_0 \cdot \left\lbrace \e \left[ p \left( X \right) \mid d=0 \right] - \e \left[ p \left( X \right) \right] \right\rbrace - \tau_a \cdot \e \left( d \right)$. However, $\tau_a$ must also be equal to the difference in expected values of the dependent variable for $d=1$ and $d=0$. Using equations (\ref{tau_ape1}) and (\ref{tau_ape0}), we can write these expected values as
\begin{equation}
\e \left[ y - w_0 \cdot l_1(X) - w_1 \cdot l_0(X) \mid d=1 \right] = w_1 \cdot \tau_{APLE, 1}
\end{equation}
and
\begin{equation}
\e \left[ y - w_0 \cdot l_1(X) - w_1 \cdot l_0(X) \mid d=0 \right] = - w_0 \cdot \tau_{APLE, 0}.
\end{equation}
Thus,
\begin{equation}
\tau = \tau_a = w_1 \cdot \tau_{APLE, 1} + w_0 \cdot \tau_{APLE, 0},
\end{equation}
which completes the proof.
\end{proof}

\pagebreak

\subsection{A Weighted Least Squares Correction}
\label{app:wls}

\setcounter{equation}{0}
\renewcommand{\theequation}{\ref{app:wls}.\arabic{equation}}
\setcounter{proposition}{0}
\renewcommand{\theproposition}{\ref{app:wls}.\arabic{proposition}}

Suppose we use weighted least squares (WLS) to estimate the model with $d$ and $p \left( X \right)$ as the only independent variables. In this case we would like to obtain a set of weights, $w$, such that $\tau_w$ in
\begin{equation}
\mathrm{L} \left( \sqrt{w} \cdot y \mid \sqrt{w}, \sqrt{w} \cdot d, \sqrt{w} \cdot p \left( X \right) \right) = \alpha_w \sqrt{w} + \tau_w \sqrt{w} \cdot d + \gamma_w \sqrt{w} \cdot p \left( X \right)
\end{equation}
has a useful interpretation. An appropriate set of weights is provided in Proposition \ref{prop:wls}.

\medskip
\begin{proposition}[Weighted Least Squares Correction]
\label{prop:wls}
Suppose that Assumptions \ref{ass:ols} and \ref{ass:px} are satisfied. Also, $w = \frac{1 - \rho}{w_0} \cdot d + \frac{\rho}{w_1} \cdot \left( 1-d \right)$. Then,
\begin{eqnarray}
\tau_w = \tau_{APLE}.
\nonumber
\end{eqnarray}
Suppose that Assumptions \ref{ass:ols}, \ref{ass:px}, \ref{ass:uncon}, and \ref{ass:lin} are satisfied. Also, $w = \frac{1 - \rho}{w_0} \cdot d + \frac{\rho}{w_1} \cdot \left( 1-d \right)$. Then,
\begin{eqnarray}
\tau_w = \tau_{ATE}.
\nonumber
\end{eqnarray}
\end{proposition}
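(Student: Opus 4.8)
The plan is to exploit the elementary fact that weighted least squares with weight $w$ coincides with ordinary least squares computed under the distribution reweighted by $w$, and then to invoke Theorem~\ref{the:ols} (together with the identity and Corollary~\ref{cor:causal} stated around it) in that reweighted population. Write $\widetilde{\e}(g)=\e(wg)/\e(w)$ for the reweighted expectation and put a tilde on $\rho$ and on the various estimands to denote their counterparts computed with respect to $\widetilde{\e}$; then $\tau_w$ is exactly the coefficient on $d$ in the linear projection of $y$ on $1$, $d$, $p(X)$ taken under $\widetilde{\e}$, i.e.\ it is the object that Theorem~\ref{the:ols} describes in the reweighted population.

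First I would check that Assumptions~\ref{ass:ols} and~\ref{ass:px} transfer to the reweighted population, so that Theorem~\ref{the:ols} may be applied there. Since $w=\frac{1-\rho}{w_0}\,d+\frac{\rho}{w_1}\,(1-d)$ is a bounded function of $d$ alone (with $w_0,w_1\in(0,1)$ under Assumptions~\ref{ass:ols} and~\ref{ass:px}), $\widetilde{\e}(y^2)$ and $\widetilde{\e}(\|X\|^2)$ are finite. Moreover, conditioning on $d=j$ neutralizes the reweighting, so the conditional law of $p(X)$ given $d=j$ is the same under $\e$ and $\widetilde{\e}$; consequently $V_j:=\var[p(X)\mid d=j]$, the within-group slope $\gamma_j$ of the projection of $y$ on $p(X)$, and $\e[p(X)\mid d=j]$ are all unaffected by the reweighting, and hence so are $\tau_{APLE,1}$ and $\tau_{APLE,0}$. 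The unchanged $V_j>0$ also delivers nonsingularity of the reweighted covariance matrix of $(d,p(X))$.

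The core computation is that of the reweighted treatment probability and the resulting weights. Using $d^2=d$ and $(1-d)^2=1-d$, one obtains $\e(wd)=\rho(1-\rho)/w_0$ and $\e(w(1-d))=\rho(1-\rho)/w_1$, hence $\e(w)=\rho(1-\rho)/(w_0w_1)$ and $\widetilde{\rho}=\e(wd)/\e(w)=w_1$. Substituting $\widetilde{\rho}=w_1$ together with the unchanged $V_0,V_1$ into the weight formula of Theorem~\ref{the:ols} and simplifying with the definitions $w_1=\frac{(1-\rho)V_0}{\rho V_1+(1-\rho)V_0}$ and $w_0=\frac{\rho V_1}{\rho V_1+(1-\rho)V_0}$, the reweighted weight on $\tau_{APLE,1}$ collapses to $\rho$ and that on $\tau_{APLE,0}$ to $1-\rho$. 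Theorem~\ref{the:ols} in the reweighted population thus gives $\tau_w=\rho\cdot\tau_{APLE,1}+(1-\rho)\cdot\tau_{APLE,0}$, which equals $\tau_{APLE}$ by the identity $\tau_{APLE}=\rho\,\tau_{APLE,1}+(1-\rho)\,\tau_{APLE,0}$ recorded after Theorem~\ref{the:ols}. Finally, under Assumptions~\ref{ass:uncon} and~\ref{ass:lin} the proof of Corollary~\ref{cor:causal} establishes $\tau_{APLE,1}=\tau_{ATT}$ and $\tau_{APLE,0}=\tau_{ATU}$ already in the original population, so $\tau_w=\rho\,\tau_{ATT}+(1-\rho)\,\tau_{ATU}=\tau_{ATE}$, which is the claim; note that this last step requires no reweighted version of Assumptions~\ref{ass:uncon} and~\ref{ass:lin}.

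The main obstacle I anticipate is the bookkeeping in the core computation: verifying that $\widetilde{\rho}=w_1$ and that the reweighted Theorem~\ref{the:ols} weights simplify \emph{exactly} to $\rho$ and $1-\rho$, while keeping track of the fact that, because $w$ depends only on $d$, every object conditioned on $d=j$ (the conditional variances of $p(X)$, the slopes $\gamma_j$, the within-group means of $p(X)$, and ultimately $\tau_{APLE,1},\tau_{APLE,0}$, hence $\tau_{ATT},\tau_{ATU}$) is left untouched by the reweighting. Everything else is a routine transfer of Theorem~\ref{the:ols}/Corollary~\ref{cor:causal} to a reweighted population.
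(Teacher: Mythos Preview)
Your argument is correct and is a natural way to fill in the details that the paper omits (the paper simply states that the result ``follows directly from the proofs of Theorem~\ref{the:ols} and Corollary~\ref{cor:causal}'' and gives no further argument). The reweighting computation $\widetilde{\rho}=w_1$ and the simplification of the reweighted Theorem~\ref{the:ols} weights to $\rho$ and $1-\rho$ are both right, and your observation that conditioning on $d=j$ neutralizes the reweighting---so that $V_j$, $\gamma_j$, $\e[p(X)\mid d=j]$, and hence $\tau_{APLE,j}$ are unchanged---is exactly the point.

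One small labeling issue worth tightening: the WLS regression in Proposition~\ref{prop:wls} uses the \emph{original} $p(X)$ as a regressor, so what you are invoking in the reweighted population is Lemma~\ref{lem:egh} (and the subsequent identification $\theta_j^{*}=\tau_{APLE,1-j}$ from the proof of Theorem~\ref{the:ols}) applied with the scalar covariate $x=p(X)$, rather than Theorem~\ref{the:ols} itself. The distinction matters in principle because the reweighted linear projection of $d$ on $(1,X)$ need not equal the original $p(X)$, so the Frisch--Waugh step that opens the proof of Theorem~\ref{the:ols} does not transfer directly. Your argument never actually relies on that step, however, so this is purely a matter of citing the right ingredient.
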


\medskip\medskip
\noindent
The proof of Proposition \ref{prop:wls} follows directly from the proofs of Theorem \ref{the:ols} and Corollary \ref{cor:causal}, and is omitted. Proposition \ref{prop:wls} establishes that the average effect of $d$ can be recovered from a weighted least squares procedure, with weights of $\frac{1 - \rho}{w_0}$ for units with $d=1$ and weights of $\frac{\rho}{w_1}$ for units with $d=0$. These weights consist of two parts: either $\frac{1}{w_1}$ or $\frac{1}{w_0}$; and either $\rho$ or $1 - \rho$. The role of the first part is always to undo the OLS weights ($w_1$ and $w_0$ in Theorem \ref{the:ols}); the role of the second part is to impose the correct weights of $\rho$ on the average effect of $d$ on group one and $1 - \rho$ on the average effect of $d$ on group zero. Finally, it is useful to note that there is no similar procedure to recover the average effects of $d$ on group zero and one; both of these objects, however, are easily obtained from equation (\ref{tau_apej}) in the main text.

Interestingly, the structure of the weighted least squares procedure in Proposition \ref{prop:wls} resembles the ``tyranny of the minority'' estimator in \cite{Lin2013}. This method uses weights of $\frac{1 - \rho}{\rho}$ for units with $d=1$ and weights of $\frac{\rho}{1 - \rho}$ for units with $d=0$; it also controls for $X$ instead of $p \left( X \right)$. It is important to note, however, that this method is designed to solve a different problem than Proposition \ref{prop:wls}. In particular, \cite{Freedman2008_Advances,Freedman2008_Annals} demonstrates that regression adjustments to experimental data can lead to a loss in precision. On the other hand, \cite{Lin2013} shows that this is no longer possible if we additionally interact $d$ with $X$ \citep[see also][]{NW2019}. Then, \cite{Lin2013} derives the ``tyranny of the minority'' estimator as an alternative procedure, based on a single conditional mean, which does not suffer from this loss in precision. In the context of observational data, however, the weights in \cite{Lin2013} guarantee that $\tau_w = \tau_{APLE}$ only in a special case, namely under Assumption \ref{ass:var}, $\var \left[p \left( X \right) \mid d=1 \right] = \var \left[ p \left( X \right) \mid d=0 \right]$.

\pagebreak

\subsection{Comparison with \cite{Angrist1998} and \cite{AS2016}}
\label{app:angrist}

\setcounter{equation}{0}
\renewcommand{\theequation}{\ref{app:angrist}.\arabic{equation}}

The result in \cite{Angrist1998} states that if $\lp \left( y \mid d, X \right) = \tau_n d + \sum_{s=1}^{S} \beta_{n,s} x_s$, where $X = \left( x_1, \ldots, x_S \right)$ is a vector of exhaustive and mutually exclusive ``stratum'' indicators, then
\begin{equation}
\tau_n = \sum_{s=1}^S \frac{\pr \left( x_s=1 \right) \cdot \pr \left( d=1 \mid x_s=1 \right) \cdot \pr \left( d=0 \mid x_s=1 \right)}{\sum_{t=1}^S \pr \left( x_t=1 \right) \cdot \pr \left( d=1 \mid x_t=1 \right) \cdot \pr \left( d=0 \mid x_t=1 \right)} \cdot \tau_s,
\label{angrist}
\end{equation}
where $\tau_s = \e \left( y \mid d=1, x_s=1 \right) - \e \left( y \mid d=0, x_s=1 \right)$. Further, under standard assumptions, $\tau_s = \e \left[ y(1)-y(0) \mid X \right]$. In this appendix I show that equation (\ref{angrist}) follows from Corollary \ref{cor:causal} when the model for $y$ is saturated.

The starting point is to note that, because the model for $y$ is saturated, $\e \left( d \mid X \right) = p \left( X \right) = \sum_{s=1}^{S} \beta_{p,s} x_s$. Additionally, Assumptions \ref{ass:uncon} and \ref{ass:lin} allow us to write $\e \left[ y(1)-y(0) \mid X \right] = \left( \alpha_1 - \alpha_0 \right) + \left( \gamma_1 - \gamma_0 \right) \cdot p \left( X \right)$. It follows that equation (\ref{angrist}) can alternatively be expressed as
\begingroup
\jot=10pt
\begin{eqnarray}
\tau_n &=& \frac{\e \left\lbrace p \left( X \right) \cdot \left[ 1-p \left( X \right) \right] \cdot \left[ \left( \alpha_1 - \alpha_0 \right) + \left( \gamma_1 - \gamma_0 \right) \cdot p \left( X \right) \right] \right\rbrace}{\e \left\lbrace p \left( X \right) \cdot \left[ 1-p \left( X \right) \right] \right\rbrace}
\nonumber\\
 &=& \left( \alpha_1 - \alpha_0 \right) + \left( \gamma_1 - \gamma_0 \right) \cdot \frac{\e \left[ p \left( X \right) ^{2} \right] - \e \left[ p \left( X \right) ^{3} \right]}{\e \left[ p \left( X \right) \right] - \e \left[ p \left( X \right) ^{2} \right]}.
\label{angrist_new}
\end{eqnarray}
\endgroup
The same representation of the OLS estimand under Assumptions \ref{ass:uncon} and \ref{ass:lin} follows from \cite{AS2016}, who generalize the result in \cite{Angrist1998} to any model, saturated or not, where $\e \left( d \mid X \right)$ is linear in $X$\@.

To demonstrate that the results in \cite{Angrist1998} and \cite{AS2016} follow from Corollary \ref{cor:causal}, we need to show that equation (\ref{angrist_new}) can be obtained by rearranging the expression in Corollary \ref{cor:causal}. To see this note that, under Assumptions \ref{ass:uncon} and \ref{ass:lin}, $\tau_{ATT} = \left( \alpha_1 - \alpha_0 \right) + \left( \gamma_1 - \gamma_0 \right) \cdot \e \left[ p \left( X \right) \mid d=1 \right]$ and $\tau_{ATU} = \left( \alpha_1 - \alpha_0 \right) + \left( \gamma_1 - \gamma_0 \right) \cdot \e \left[ p \left( X \right) \mid d=0 \right]$. Upon rearrangement,
\begingroup
\jot=10pt
\begin{eqnarray}
\tau_{ATT} &=& \left( \alpha_1 - \alpha_0 \right) + \left( \gamma_1 - \gamma_0 \right) \cdot \frac{\e \left[ p \left( X \right) d \right]}{\e \left( d \right)}
\nonumber\\
 &=& \left( \alpha_1 - \alpha_0 \right) + \left( \gamma_1 - \gamma_0 \right) \cdot \frac{\e \left[ p \left( X \right) ^{2} \right]}{\e \left[ p \left( X \right) \right]}
\end{eqnarray}
\endgroup
and
\begingroup
\jot=10pt
\begin{eqnarray}
\tau_{ATU} &=& \left( \alpha_1 - \alpha_0 \right) + \left( \gamma_1 - \gamma_0 \right) \cdot \frac{\e \left[ p \left( X \right) \cdot \left( 1-d \right) \right]}{1 - \e \left( d \right)}
\nonumber\\
 &=& \left( \alpha_1 - \alpha_0 \right) + \left( \gamma_1 - \gamma_0 \right) \cdot \frac{\e \left[ p \left( X \right) \right] - \e \left[ p \left( X \right) ^{2} \right]}{1 - \e \left[ p \left( X \right) \right]}.
\end{eqnarray}
\endgroup
Also, because $\e \left( d \mid X \right)$ is linear in $X$ and hence equal to $p \left( X \right)$, we can use the results from online appendix \ref{app:monotonic}, which state that $\rho \cdot \var \left[ p \left( X \right) \mid d=1 \right] = \e \left[ p \left( X \right) ^{3} \right] - \frac{\left\lbrace \e \left[ p \left( X \right) ^{2} \right] \right\rbrace ^{2}}{\e \left[ p \left( X \right) \right]}$ and $\left( 1 - \rho \right) \cdot \var \left[ p \left( X \right) \mid d=0 \right] = \e \left[ p \left( X \right) ^{2} \right] - \e \left[ p \left( X \right) ^{3} \right] - \frac{\left\lbrace \e \left[ p \left( X \right) \right] - \e \left[ p \left( X \right) ^{2} \right] \right\rbrace ^{2}}{1 - \e \left[ p \left( X \right) \right]}$. It follows that
\begingroup
\jot=10pt
\begin{eqnarray}
w_0 &=& \frac{\rho \cdot \var \left[ p \left( X \right) \mid d=1 \right]}{\rho \cdot \var \left[ p \left( X \right) \mid d=1 \right] + \left( 1 - \rho \right) \cdot \var \left[ p \left( X \right) \mid d=0 \right]}
\nonumber\\
 &=& \frac{\e \left[ p \left( X \right) ^{3} \right] - \frac{\left\lbrace \e \left[ p \left( X \right) ^{2} \right] \right\rbrace ^{2}}{\e \left[ p \left( X \right) \right]}}{\e \left[ p \left( X \right) ^{2} \right] - \frac{\left\lbrace \e \left[ p \left( X \right) ^{2} \right] \right\rbrace ^{2}}{\e \left[ p \left( X \right) \right]} - \frac{\left\lbrace \e \left[ p \left( X \right) \right] - \e \left[ p \left( X \right) ^{2} \right] \right\rbrace ^{2}}{1 - \e \left[ p \left( X \right) \right]}}
\end{eqnarray}
\endgroup
and
\begingroup
\jot=10pt
\begin{eqnarray}
w_1 &=& \frac{\left( 1 - \rho \right) \cdot \var \left[ p \left( X \right) \mid d=0 \right]}{\rho \cdot \var \left[ p \left( X \right) \mid d=1 \right] + \left( 1 - \rho \right) \cdot \var \left[ p \left( X \right) \mid d=0 \right]}
\nonumber\\
 &=& \frac{\e \left[ p \left( X \right) ^{2} \right] - \e \left[ p \left( X \right) ^{3} \right] - \frac{\left\lbrace \e \left[ p \left( X \right) \right] - \e \left[ p \left( X \right) ^{2} \right] \right\rbrace ^{2}}{1 - \e \left[ p \left( X \right) \right]}}{\e \left[ p \left( X \right) ^{2} \right] - \frac{\left\lbrace \e \left[ p \left( X \right) ^{2} \right] \right\rbrace ^{2}}{\e \left[ p \left( X \right) \right]} - \frac{\left\lbrace \e \left[ p \left( X \right) \right] - \e \left[ p \left( X \right) ^{2} \right] \right\rbrace ^{2}}{1 - \e \left[ p \left( X \right) \right]}}.
\end{eqnarray}
\endgroup
Consequently, an implication of Corollary \ref{cor:causal} is that
\begingroup
\jot=10pt
\begin{eqnarray}
\tau_n &=& w_1 \cdot \tau_{ATT} + w_0 \cdot \tau_{ATU}
\nonumber\\
 &=& \left( \alpha_1 - \alpha_0 \right) + \left( \gamma_1 - \gamma_0 \right) \cdot \frac{\left\lbrace \e \left[ p \left( X \right) ^{2} \right] - \e \left[ p \left( X \right) ^{3} \right] - \frac{\left\lbrace \e \left[ p \left( X \right) \right] - \e \left[ p \left( X \right) ^{2} \right] \right\rbrace ^{2}}{1 - \e \left[ p \left( X \right) \right]} \right\rbrace \cdot \e \left[ p \left( X \right) ^{2} \right]}{\left\lbrace \e \left[ p \left( X \right) ^{2} \right] - \frac{\left\lbrace \e \left[ p \left( X \right) ^{2} \right] \right\rbrace ^{2}}{\e \left[ p \left( X \right) \right]} - \frac{\left\lbrace \e \left[ p \left( X \right) \right] - \e \left[ p \left( X \right) ^{2} \right] \right\rbrace ^{2}}{1 - \e \left[ p \left( X \right) \right]} \right\rbrace \cdot \e \left[ p \left( X \right) \right]}
\nonumber\\
 &+& \left( \gamma_1 - \gamma_0 \right) \cdot \frac{\left\lbrace \e \left[ p \left( X \right) ^{3} \right] - \frac{\left\lbrace \e \left[ p \left( X \right) ^{2} \right] \right\rbrace ^{2}}{\e \left[ p \left( X \right) \right]} \right\rbrace \cdot \left\lbrace \e \left[ p \left( X \right) \right] - \e \left[ p \left( X \right) ^{2} \right] \right\rbrace}{\left\lbrace \e \left[ p \left( X \right) ^{2} \right] - \frac{\left\lbrace \e \left[ p \left( X \right) ^{2} \right] \right\rbrace ^{2}}{\e \left[ p \left( X \right) \right]} - \frac{\left\lbrace \e \left[ p \left( X \right) \right] - \e \left[ p \left( X \right) ^{2} \right] \right\rbrace ^{2}}{1 - \e \left[ p \left( X \right) \right]} \right\rbrace \cdot \left\lbrace 1 - \e \left[ p \left( X \right) \right] \right\rbrace}
\end{eqnarray}
\endgroup
or, equivalently,
\begin{equation}
\tau_n = \left( \alpha_1 - \alpha_0 \right) + \left( \gamma_1 - \gamma_0 \right) \cdot \frac{\lambda_n}{\lambda_d},
\label{enter_lambdas}
\end{equation}
where
\begin{equation}
\lambda_d = \left\lbrace \e \left[ p \left( X \right) ^{2} \right] - \frac{\left\lbrace \e \left[ p \left( X \right) ^{2} \right] \right\rbrace ^{2}}{\e \left[ p \left( X \right) \right]} - \frac{\left\lbrace \e \left[ p \left( X \right) \right] - \e \left[ p \left( X \right) ^{2} \right] \right\rbrace ^{2}}{1 - \e \left[ p \left( X \right) \right]} \right\rbrace \cdot \e \left[ p \left( X \right) \right] \cdot \left\lbrace 1 - \e \left[ p \left( X \right) \right] \right\rbrace
\end{equation}
and
\begin{eqnarray}
\lambda_n &=& \left\lbrace \e \left[ p \left( X \right) ^{2} \right] - \e \left[ p \left( X \right) ^{3} \right] - \frac{\left\lbrace \e \left[ p \left( X \right) \right] - \e \left[ p \left( X \right) ^{2} \right] \right\rbrace ^{2}}{1 - \e \left[ p \left( X \right) \right]} \right\rbrace \cdot \e \left[ p \left( X \right) ^{2} \right] \cdot \left\lbrace 1 - \e \left[ p \left( X \right) \right] \right\rbrace
\nonumber\\
 &+& \left\lbrace \e \left[ p \left( X \right) ^{3} \right] - \frac{\left\lbrace \e \left[ p \left( X \right) ^{2} \right] \right\rbrace ^{2}}{\e \left[ p \left( X \right) \right]} \right\rbrace \cdot \left\lbrace \e \left[ p \left( X \right) \right] - \e \left[ p \left( X \right) ^{2} \right] \right\rbrace \cdot \e \left[ p \left( X \right) \right].
\end{eqnarray}
Upon further rearrangement,
\begingroup
\jot=10pt
\begin{eqnarray}
\lambda_d &=& \e \left[ p \left( X \right) ^{2} \right] \cdot \e \left[ p \left( X \right) \right] + \e \left[ p \left( X \right) ^{2} \right] \cdot \left\lbrace \e \left[ p \left( X \right) \right] \right\rbrace ^{2} - \left\lbrace \e \left[ p \left( X \right) ^{2} \right] \right\rbrace ^{2} - \left\lbrace \e \left[ p \left( X \right) \right] \right\rbrace ^{3}
\nonumber\\
 &=& \left\lbrace \e \left[ p \left( X \right) \right] - \e \left[ p \left( X \right) ^{2} \right] \right\rbrace \cdot \left\lbrace \e \left[ p \left( X \right) ^{2} \right] - \left\lbrace \e \left[ p \left( X \right) \right] \right\rbrace ^{2} \right\rbrace
\label{lambda_d}
\end{eqnarray}
\endgroup
and
\begingroup
\jot=10pt
\begin{eqnarray}
\lambda_n &=& \left\lbrace \e \left[ p \left( X \right) ^{2} \right] \right\rbrace ^{2} + \e \left[ p \left( X \right) ^{3} \right] \cdot \left\lbrace \e \left[ p \left( X \right) \right] \right\rbrace ^{2} - \e \left[ p \left( X \right) ^{3} \right] \cdot \e \left[ p \left( X \right) ^{2} \right] - \e \left[ p \left( X \right) ^{2} \right] \cdot \left\lbrace \e \left[ p \left( X \right) \right] \right\rbrace ^{2}
\nonumber\\
 &=& \left\lbrace \e \left[ p \left( X \right) ^{2} \right] - \e \left[ p \left( X \right) ^{3} \right] \right\rbrace \cdot \left\lbrace \e \left[ p \left( X \right) ^{2} \right] - \left\lbrace \e \left[ p \left( X \right) \right] \right\rbrace ^{2} \right\rbrace.
\label{lambda_n}
\end{eqnarray}
\endgroup
Finally, plugging equations (\ref{lambda_d}) and (\ref{lambda_n}) into equation (\ref{enter_lambdas}) gives
\begingroup
\jot=10pt
\begin{eqnarray}
\tau_n &=& \left( \alpha_1 - \alpha_0 \right) + \left( \gamma_1 - \gamma_0 \right) \cdot \frac{\left\lbrace \e \left[ p \left( X \right) ^{2} \right] - \e \left[ p \left( X \right) ^{3} \right] \right\rbrace \cdot \left\lbrace \e \left[ p \left( X \right) ^{2} \right] - \left\lbrace \e \left[ p \left( X \right) \right] \right\rbrace ^{2} \right\rbrace}{\left\lbrace \e \left[ p \left( X \right) \right] - \e \left[ p \left( X \right) ^{2} \right] \right\rbrace \cdot \left\lbrace \e \left[ p \left( X \right) ^{2} \right] - \left\lbrace \e \left[ p \left( X \right) \right] \right\rbrace ^{2} \right\rbrace}
\nonumber\\
 &=& \left( \alpha_1 - \alpha_0 \right) + \left( \gamma_1 - \gamma_0 \right) \cdot \frac{\e \left[ p \left( X \right) ^{2} \right] - \e \left[ p \left( X \right) ^{3} \right]}{\e \left[ p \left( X \right) \right] - \e \left[ p \left( X \right) ^{2} \right]}.
\label{tau_final}
\end{eqnarray}
\endgroup
The equivalence between equations (\ref{angrist_new}) and (\ref{tau_final}) confirms that the result in \cite{Angrist1998} follows from Corollary \ref{cor:causal} when the model for $y$ is saturated. Similarly, the result in \cite{AS2016} follows from Corollary \ref{cor:causal} when $\e \left( d \mid X \right)$ is linear in $X$\@.

\pagebreak

\section{Implementation in Stata}
\label{app:stata}

This appendix discusses the implementation of my theoretical results using the Stata package \texttt{\small{hettreatreg}}. In particular, I show how to apply this package to obtain the estimates in column 4 of Table \ref{tab:nsw1} in the main text. To download this package and the NSW--CPS data from SSC, type
\begin{small}
\begin{verbatim}
. ssc install hettreatreg, all
\end{verbatim}
\end{small}
in the Command window. Then, type
\begin{small}
\begin{verbatim}
. use nswcps, clear
\end{verbatim}
\end{small}
to open the NSW--CPS data set. Then, the standard way to obtain the OLS estimate in column 4 of Table \ref{tab:nsw1} in the main text would be to type
\begin{small}
\begin{verbatim}
. regress re78 treated age-re75, vce(robust)

Linear regression                               Number of obs     =     16,177
                                                F(10, 16166)      =    1718.20
                                                Prob > F          =     0.0000
                                                R-squared         =     0.4762
                                                Root MSE          =     7001.7

------------------------------------------------------------------------------
             |               Robust
        re78 |      Coef.   Std. Err.      t    P>|t|     [95% Conf. Interval]
-------------+----------------------------------------------------------------
     treated |    793.587   618.6092     1.28   0.200    -418.9555     2006.13
         age |  -233.6775    40.7162    -5.74   0.000    -313.4857   -153.8692
        age2 |   1.814371   .5581946     3.25   0.001     .7202474    2.908494
        educ |   166.8492   28.70683     5.81   0.000     110.5807    223.1178
       black |  -790.6086   197.8149    -4.00   0.000    -1178.348   -402.8694
    hispanic |  -175.9751   218.3033    -0.81   0.420    -603.8738    251.9235
     married |    224.266   152.4363     1.47   0.141    -74.52594    523.0579
    nodegree |   311.8445    176.414     1.77   0.077     -33.9464    657.6355
        re74 |   .2953363   .0152084    19.42   0.000     .2655261    .3251466
        re75 |   .4706353   .0153101    30.74   0.000     .4406259    .5006447
       _cons |   7634.344   737.8143    10.35   0.000     6188.146    9080.542
------------------------------------------------------------------------------
\end{verbatim}
\end{small}
\bigskip\bigskip
It is also possible, however, to obtain the same output and several additional estimates---including those of my diagnostics and those of implicit estimates of ATE, ATT, and ATU---by typing
\begin{small}
\begin{verbatim}
. hettreatreg age-re75, o(re78) t(treated) noisily vce(robust)

Linear regression                               Number of obs     =     16,177
                                                F(10, 16166)      =    1718.20
                                                Prob > F          =     0.0000
                                                R-squared         =     0.4762
                                                Root MSE          =     7001.7

------------------------------------------------------------------------------
             |               Robust
        re78 |      Coef.   Std. Err.      t    P>|t|     [95% Conf. Interval]
-------------+----------------------------------------------------------------
     treated |    793.587   618.6092     1.28   0.200    -418.9555     2006.13
         age |  -233.6775    40.7162    -5.74   0.000    -313.4857   -153.8692
        age2 |   1.814371   .5581946     3.25   0.001     .7202474    2.908494
        educ |   166.8492   28.70683     5.81   0.000     110.5807    223.1178
       black |  -790.6086   197.8149    -4.00   0.000    -1178.348   -402.8694
    hispanic |  -175.9751   218.3033    -0.81   0.420    -603.8738    251.9235
     married |    224.266   152.4363     1.47   0.141    -74.52594    523.0579
    nodegree |   311.8445    176.414     1.77   0.077     -33.9464    657.6355
        re74 |   .2953363   .0152084    19.42   0.000     .2655261    .3251466
        re75 |   .4706353   .0153101    30.74   0.000     .4406259    .5006447
       _cons |   7634.344   737.8143    10.35   0.000     6188.146    9080.542
------------------------------------------------------------------------------

"OLS" is the estimated regression coefficient on treated.

   OLS  =  793.6    

P(d=1)  =  .011
P(d=0)  =  .989

    w1  =  .983
    w0  =  .017
 delta  =  -.971

   ATE  =  -6751    
   ATT  =  928.4    
   ATU  =  -6840    

OLS = w1*ATT + w0*ATU = 793.6
\end{verbatim}
\end{small}
\bigskip\bigskip
Alternatively, we may restrict our attention to this additional output by typing
\begin{small}
\begin{verbatim}
. hettreatreg age-re75, o(re78) t(treated)

"OLS" is the estimated regression coefficient on treated.

   OLS  =  793.6    

P(d=1)  =  .011
P(d=0)  =  .989

    w1  =  .983
    w0  =  .017
 delta  =  -.971

   ATE  =  -6751    
   ATT  =  928.4    
   ATU  =  -6840    

OLS = w1*ATT + w0*ATU = 793.6
\end{verbatim}
\end{small}
\bigskip\bigskip
In any case, \texttt{\small{OLS}} is the estimated regression coefficient on the variable designated as treatment. \texttt{\small{P(d=1)}} and \texttt{\small{P(d=0)}} correspond to $\hat{\rho}$ and $1-\hat{\rho}$, respectively. \texttt{\small{w1}}, \texttt{\small{w0}}, and \texttt{\small{delta}} correspond to $\hat{w}_1$, $\hat{w}_0$, and $\hat{\delta}$, respectively. Finally, \texttt{\small{ATE}}, \texttt{\small{ATT}}, and \texttt{\small{ATU}} correspond to $\hat{\tau}_{APLE}$, $\hat{\tau}_{APLE, 1}$, and $\hat{\tau}_{APLE, 0}$, respectively. \texttt{\small{hettreatreg}} stores all these estimates in \texttt{\small{e()}}. Type
\begin{small}
\begin{verbatim}
. help hettreatreg
\end{verbatim}
\end{small}
for more information and additional examples.

\pagebreak

\section{Implementation in R}
\label{app:r}

Similar to online appendix \ref{app:stata}, I now discuss the implementation of my theoretical results using the R package \texttt{\small{hettreatreg}}. As before, I show how to obtain the estimates reported in column 4 of Table \ref{tab:nsw1} in the main text. To download this package and the NSW--CPS data from CRAN, type
\begin{small}
\begin{verbatim}
> install.packages("hettreatreg")
\end{verbatim}
\end{small}
in the R/R Studio console. Next, type
\begin{small}
\begin{verbatim}
> library(hettreatreg)
\end{verbatim}
\end{small}
to load \texttt{\small{hettreatreg}} and
\begin{small}
\begin{verbatim}
> data("nswcps")
\end{verbatim}
\end{small}
to open the NSW--CPS data set. Then, the standard way to obtain the OLS estimate in column 4 of Table \ref{tab:nsw1} in the main text would be to type
\begin{small}
\begin{verbatim}
> lm(re78 ~ treated + age + age2 + educ + black + hispanic + married + nodegree
 + re74 + re75, data = nswcps)

Call:
lm(formula = re78 ~ treated + age + age2 + educ + black + hispanic + 
    married + nodegree + re74 + re75, data = nswcps)

Coefficients:
(Intercept)      treated          age         age2         educ        black  
  7634.3441     793.5870    -233.6775       1.8144     166.8492    -790.6086  
   hispanic      married     nodegree         re74         re75  
  -175.9751     224.2660     311.8445       0.2953       0.4706  
\end{verbatim}
\end{small}
\bigskip\bigskip
Using \texttt{\small{hettreatreg}}, it is possible to obtain several additional estimates, including those of my diagnostics and those of implicit estimates of ATE, ATT, and ATU\@. Before doing so, it is useful to designate an outcome variable, a treatment variable, and a list of control variables. To do this, type
\begin{small}
\begin{verbatim}
> outcome <- nswcps$re78
> treated <- nswcps$treated
> our_vars <- c("age", "age2", "educ", "black", "hispanic", "married", "nodegree",
 "re74", "re75")
> covariates <- subset(nswcps, select = our_vars)
\end{verbatim}
\end{small}
Then, type
\begin{small}
\begin{verbatim}
> hettreatreg(outcome, treated, covariates, verbose = TRUE)

"OLS" is the estimated regression coefficient on treated. 
 
   OLS = 793.6 
 
P(d=1) = 0.011 
P(d=0) = 0.989 
 
    w1 = 0.983 
    w0 = 0.017 
 delta = -0.971 
 
   ATE = -6751 
   ATT = 928.4 
   ATU = -6840 
 
OLS = w1*ATT + w0*ATU = 793.6 
\end{verbatim}
\end{small}
\bigskip\bigskip
To interpret these estimates, note that \texttt{\small{OLS}} is the estimated regression coefficient on the variable designated as treatment. \texttt{\small{P(d=1)}} and \texttt{\small{P(d=0)}} correspond to $\hat{\rho}$ and $1-\hat{\rho}$, respectively. \texttt{\small{w1}}, \texttt{\small{w0}}, and \texttt{\small{delta}} correspond to $\hat{w}_1$, $\hat{w}_0$, and $\hat{\delta}$, respectively. Finally, \texttt{\small{ATE}}, \texttt{\small{ATT}}, and \texttt{\small{ATU}} correspond to $\hat{\tau}_{APLE}$, $\hat{\tau}_{APLE, 1}$, and $\hat{\tau}_{APLE, 0}$, respectively. Type
\begin{small}
\begin{verbatim}
> ?hettreatreg
\end{verbatim}
\end{small}
for more information and an additional example. Further information is also available from CRAN at https://CRAN.R-project.org/package=hettreatreg.

\pagebreak

\section{Robustness Checks}
\label{app:robustness}

\subsection{The Effects of a Training Program on Earnings}
\label{app:mhe}

\setcounter{table}{0}
\renewcommand{\thetable}{\ref{app:mhe}.\arabic{table}}
\setcounter{figure}{0}
\renewcommand{\thefigure}{\ref{app:mhe}.\arabic{figure}}

\vspace{3.2cm}

\begin{figure}[!h]
\begin{adjustwidth}{-1in}{-1in}
\centering
\caption{WLS Estimates of the Effects of a Training Program on Earnings\label{fig:nswwls1}}
\includegraphics[width=12cm]{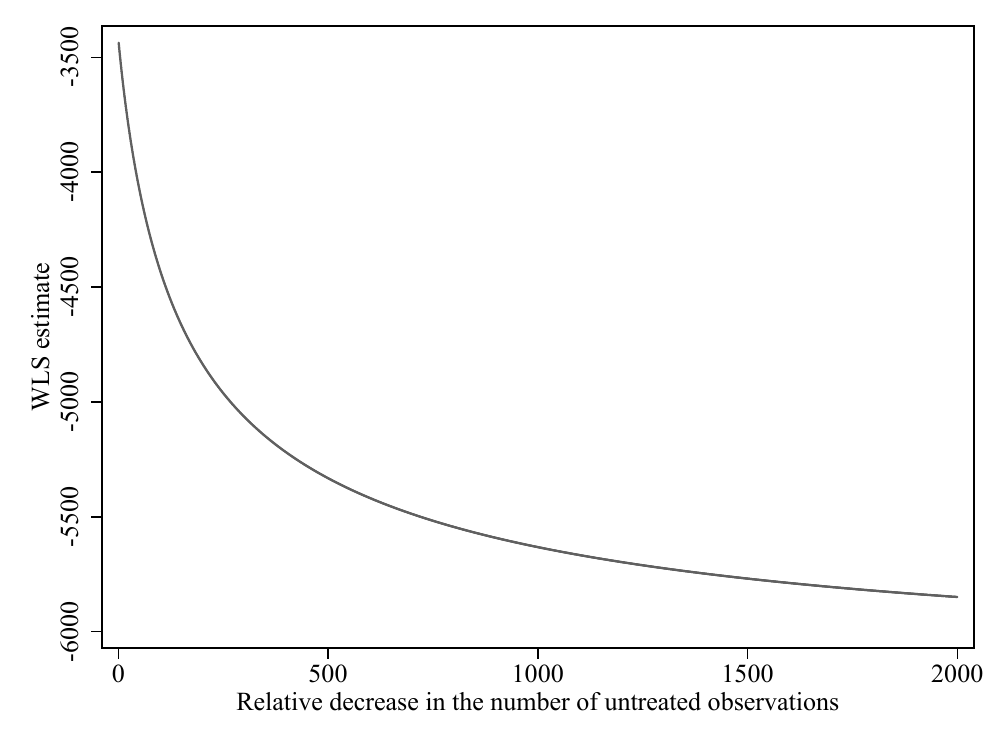}
\\
\vspace{-2mm}
\begin{footnotesize}
\begin{tabular}{p{12cm}}
\textit{Notes:} The vertical axis represents WLS estimates of the effect of NSW program on earnings in 1978 using the model in equation (\ref{ols}) in the main text and the specification in column 1 of Table \ref{tab:nsw1}, with weights of 1 for treated and $\frac{1}{k}$ for untreated units. The horizontal axis represents $k$.
\end{tabular}
\end{footnotesize}
\end{adjustwidth}
\end{figure}

\begin{figure}[!h]
\begin{adjustwidth}{-1in}{-1in}
\centering
\caption{WLS Estimates of the Effects of a Training Program on Earnings\label{fig:nswwls2}}
\includegraphics[width=12cm]{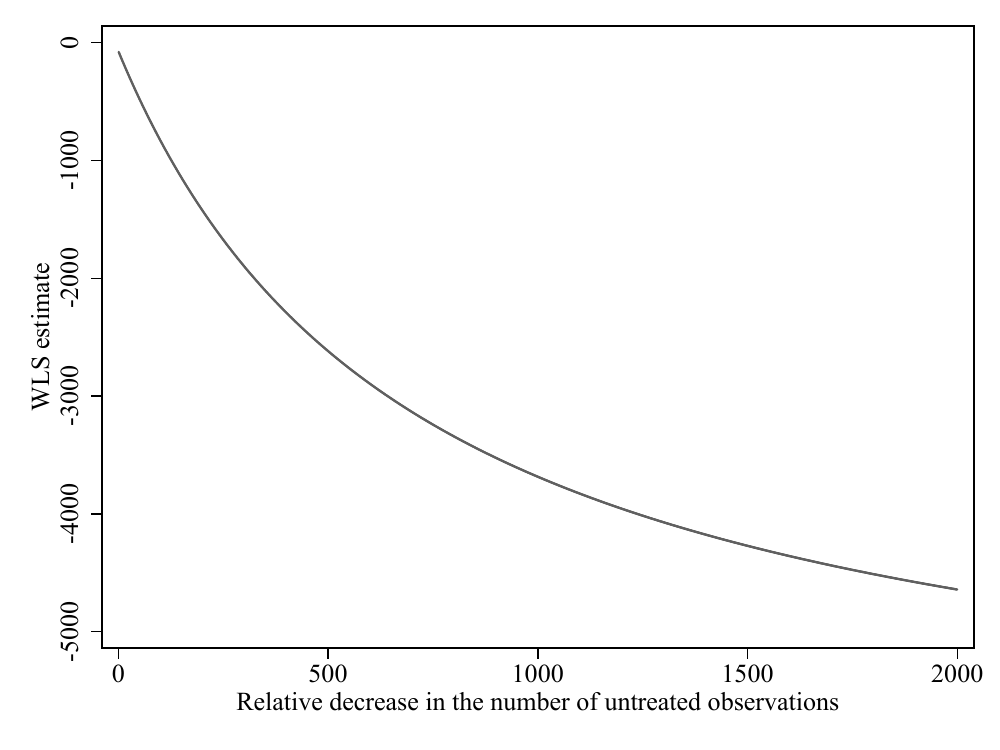}
\\
\vspace{-2mm}
\begin{footnotesize}
\begin{tabular}{p{12cm}}
\textit{Notes:} The vertical axis represents WLS estimates of the effect of NSW program on earnings in 1978 using the model in equation (\ref{ols}) in the main text and the specification in column 2 of Table \ref{tab:nsw1}, with weights of 1 for treated and $\frac{1}{k}$ for untreated units. The horizontal axis represents $k$.
\end{tabular}
\end{footnotesize}
\end{adjustwidth}
\end{figure}

\begin{figure}[!h]
\begin{adjustwidth}{-1in}{-1in}
\centering
\caption{WLS Estimates of the Effects of a Training Program on Earnings\label{fig:nswwls3}}
\includegraphics[width=12cm]{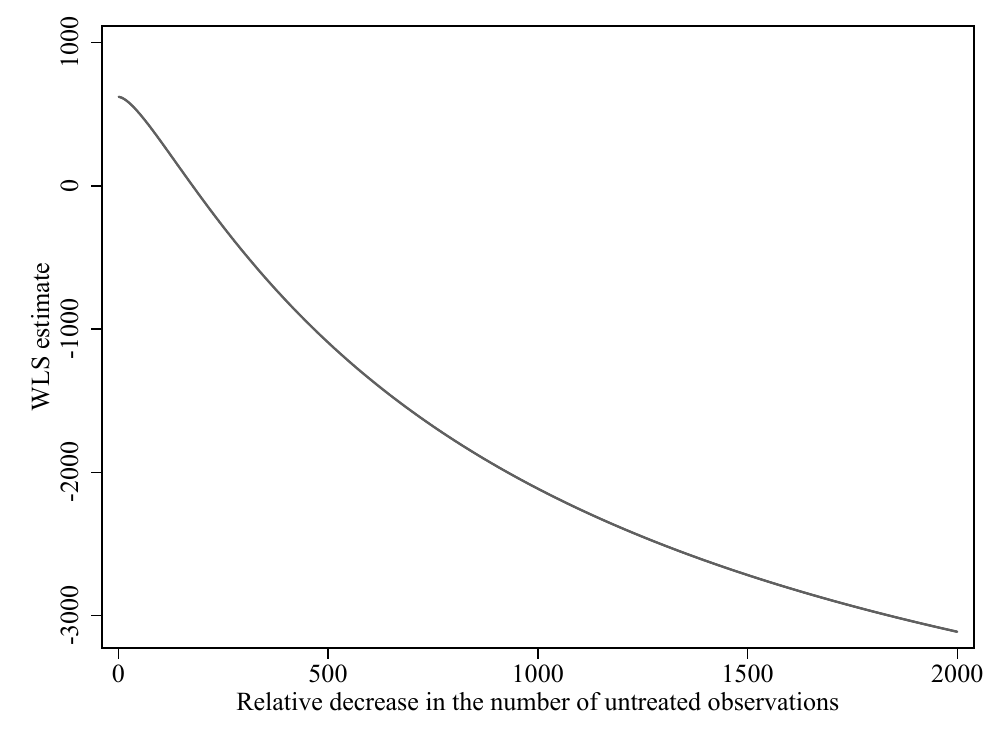}
\\
\vspace{-2mm}
\begin{footnotesize}
\begin{tabular}{p{12cm}}
\textit{Notes:} The vertical axis represents WLS estimates of the effect of NSW program on earnings in 1978 using the model in equation (\ref{ols}) in the main text and the specification in column 3 of Table \ref{tab:nsw1}, with weights of 1 for treated and $\frac{1}{k}$ for untreated units. The horizontal axis represents $k$.
\end{tabular}
\end{footnotesize}
\end{adjustwidth}
\end{figure}

\begin{figure}[!h]
\begin{adjustwidth}{-1in}{-1in}
\centering
\caption{WLS Estimates of the Effects of a Training Program on Earnings\label{fig:nswwls4}}
\includegraphics[width=12cm]{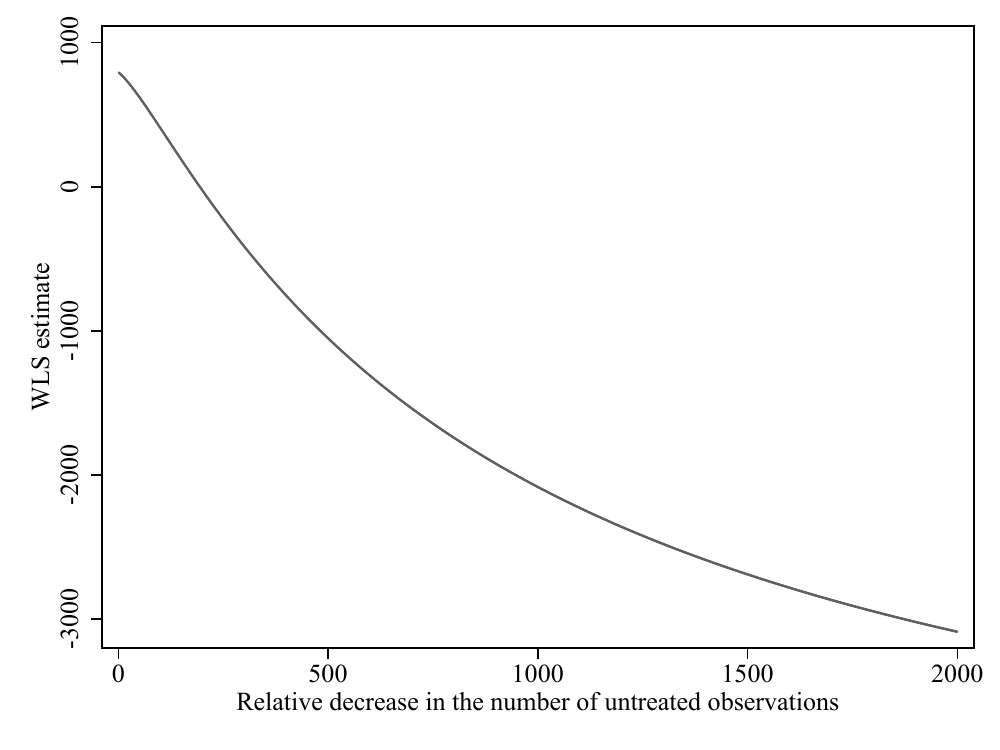}
\\
\vspace{-2mm}
\begin{footnotesize}
\begin{tabular}{p{12cm}}
\textit{Notes:} The vertical axis represents WLS estimates of the effect of NSW program on earnings in 1978 using the model in equation (\ref{ols}) in the main text and the specification in column 4 of Table \ref{tab:nsw1}, with weights of 1 for treated and $\frac{1}{k}$ for untreated units. The horizontal axis represents $k$.
\end{tabular}
\end{footnotesize}
\end{adjustwidth}
\end{figure}

\begin{figure}[!h]
\begin{adjustwidth}{-1in}{-1in}
\centering
\caption{Relationship Between Earnings and $p \left( X \right)$\label{fig:nswypx1}}
\includegraphics[width=12cm]{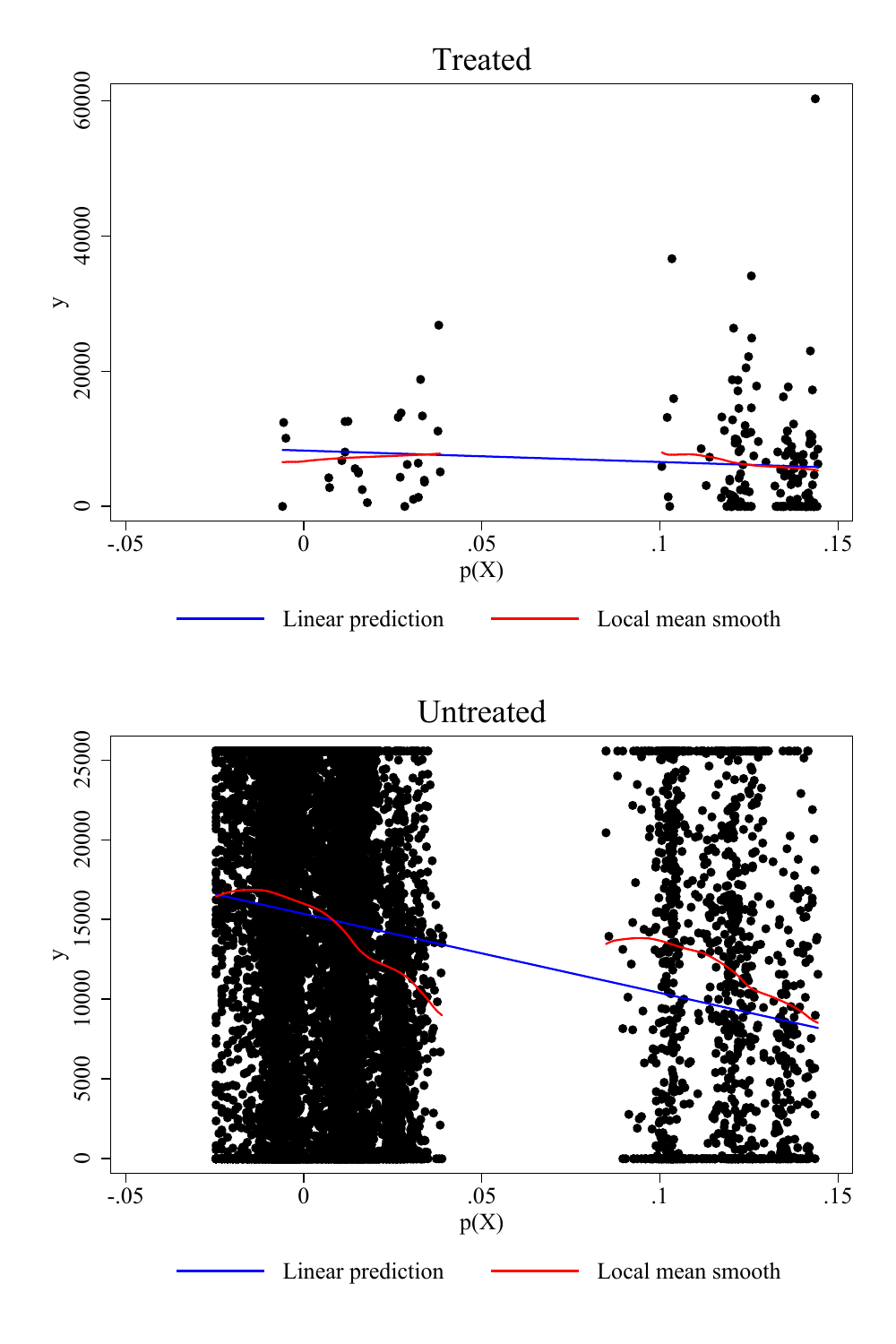}
\\
\vspace{-2mm}
\begin{footnotesize}
\begin{tabular}{p{12cm}}
\textit{Notes:} The vertical axis represents earnings in 1978. The horizontal axis represents the LPM propensity score. The propensity score is estimated using the specification in column 1 of Table \ref{tab:nsw1}. ``Local mean smooth'' is estimated using the Epanechnikov kernel and a rule-of-thumb bandwidth.
\end{tabular}
\end{footnotesize}
\end{adjustwidth}
\end{figure}

\begin{figure}[!h]
\begin{adjustwidth}{-1in}{-1in}
\centering
\caption{Relationship Between Earnings and $p \left( X \right)$\label{fig:nswypx2}}
\includegraphics[width=12cm]{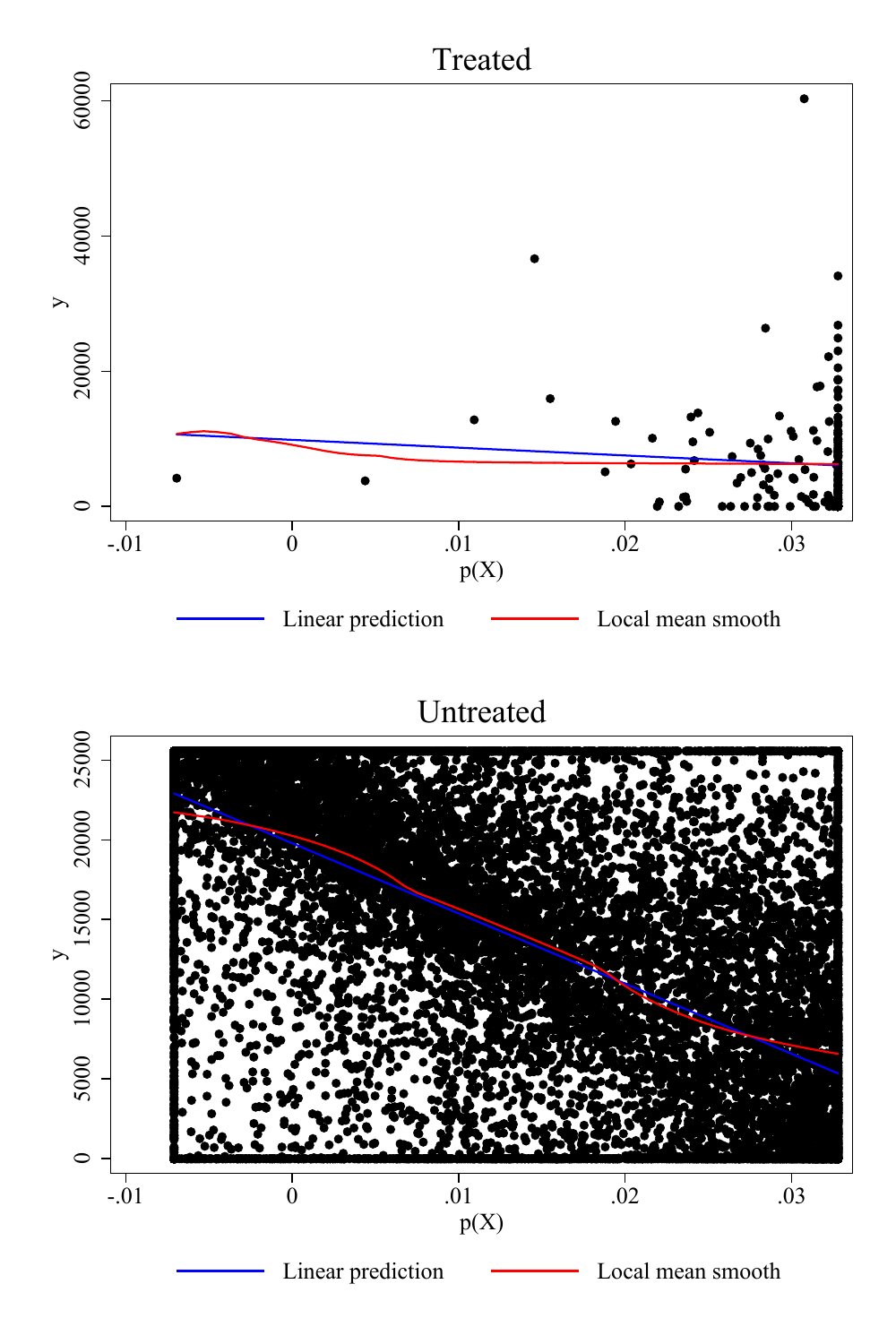}
\\
\vspace{-2mm}
\begin{footnotesize}
\begin{tabular}{p{12cm}}
\textit{Notes:} The vertical axis represents earnings in 1978. The horizontal axis represents the LPM propensity score. The propensity score is estimated using the specification in column 2 of Table \ref{tab:nsw1}. ``Local mean smooth'' is estimated using the Epanechnikov kernel and a rule-of-thumb bandwidth.
\end{tabular}
\end{footnotesize}
\end{adjustwidth}
\end{figure}

\begin{figure}[!h]
\begin{adjustwidth}{-1in}{-1in}
\centering
\caption{Relationship Between Earnings and $p \left( X \right)$\label{fig:nswypx3}}
\includegraphics[width=12cm]{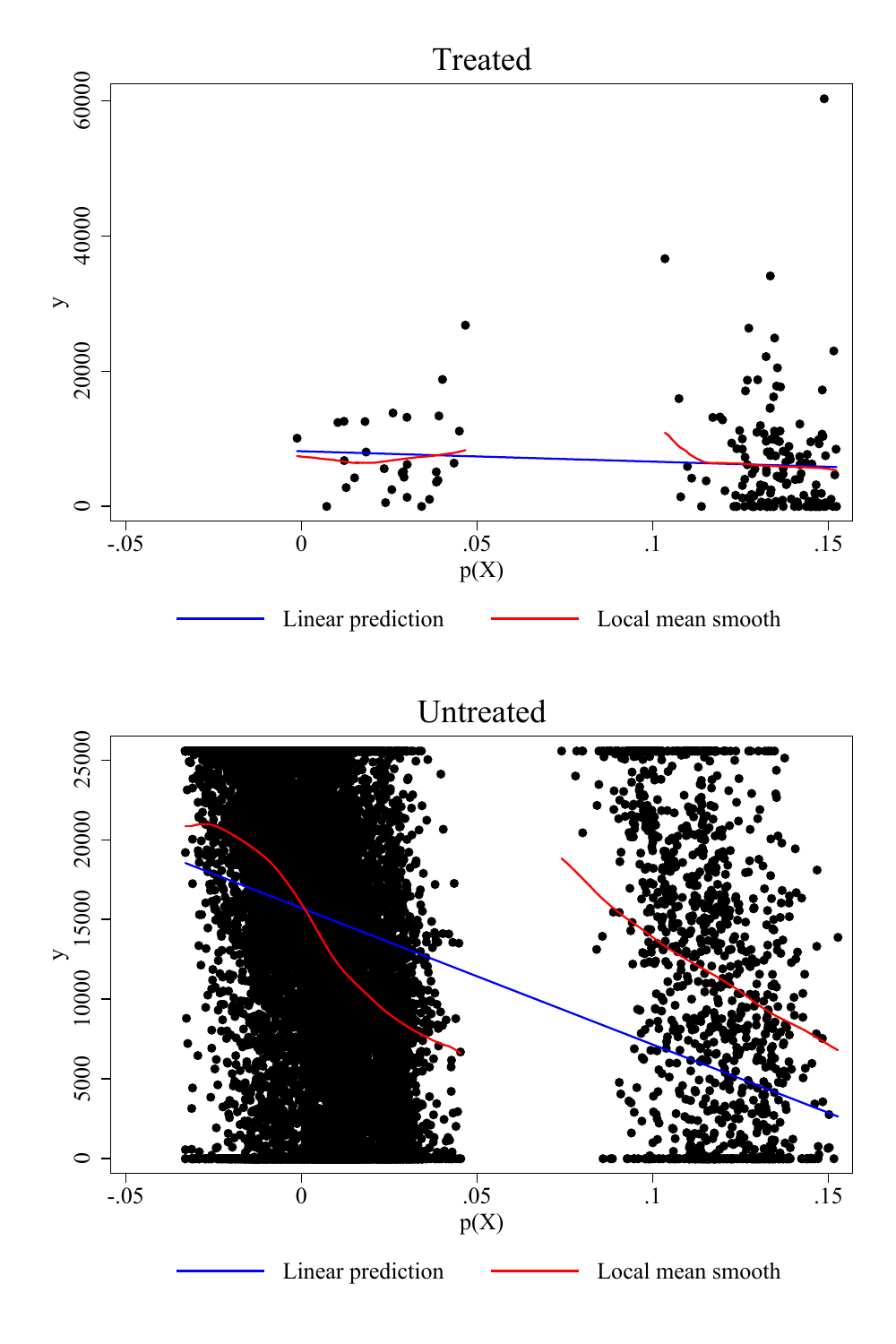}
\\
\vspace{-2mm}
\begin{footnotesize}
\begin{tabular}{p{12cm}}
\textit{Notes:} The vertical axis represents earnings in 1978. The horizontal axis represents the LPM propensity score. The propensity score is estimated using the specification in column 3 of Table \ref{tab:nsw1}. ``Local mean smooth'' is estimated using the Epanechnikov kernel and a rule-of-thumb bandwidth.
\end{tabular}
\end{footnotesize}
\end{adjustwidth}
\end{figure}

\begin{figure}[!h]
\begin{adjustwidth}{-1in}{-1in}
\centering
\caption{Relationship Between Earnings and $p \left( X \right)$\label{fig:nswypx4}}
\includegraphics[width=12cm]{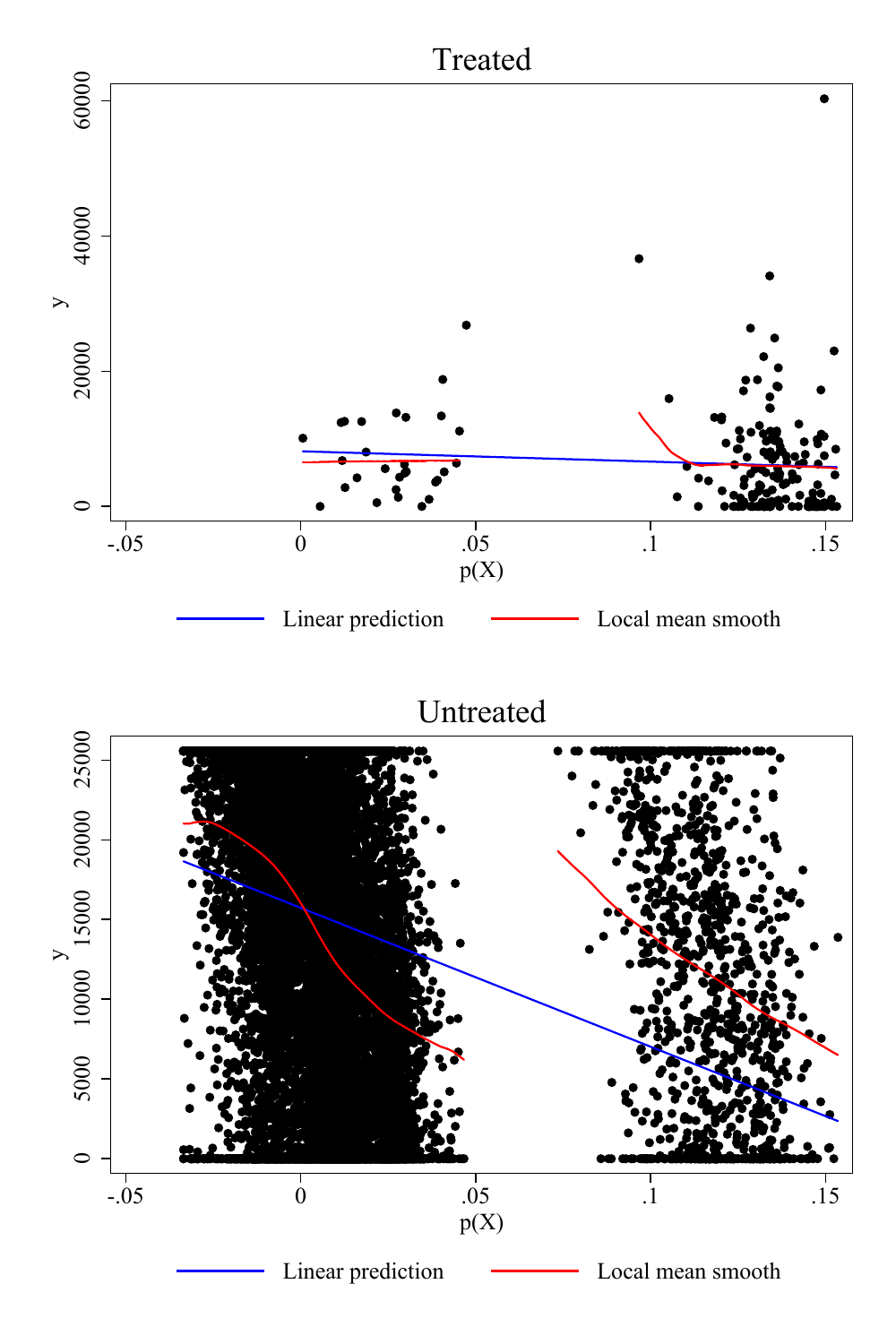}
\\
\vspace{-2mm}
\begin{footnotesize}
\begin{tabular}{p{12cm}}
\textit{Notes:} The vertical axis represents earnings in 1978. The horizontal axis represents the LPM propensity score. The propensity score is estimated using the specification in column 4 of Table \ref{tab:nsw1}. ``Local mean smooth'' is estimated using the Epanechnikov kernel and a rule-of-thumb bandwidth.
\end{tabular}
\end{footnotesize}
\end{adjustwidth}
\end{figure}

\begin{table}[!h]
\begin{adjustwidth}{-1in}{-1in}
\centering
\begin{threeparttable}
\begin{footnotesize}
\caption{Alternative Estimates of the Effects of a Training Program on Earnings\label{tab:nsw3}}
\begin{tabular}{l >{\centering\arraybackslash}m{2.5cm} >{\centering\arraybackslash}m{2.5cm} >{\centering\arraybackslash}m{2.5cm} >{\centering\arraybackslash}m{2.5cm}}
\hline\hline
    \multicolumn{1}{c}{} & (1) & (2) & (3) & (4) \\
\hline
    \multicolumn{1}{c}{} & \multicolumn{4}{c}{Matching on the LPM propensity score}  \\
\cline{2-5}
    $\widehat{\mathrm{ATE}}$   & --9,227*** & --7,504** & --6,245* & --6,581* \\
          & (2,388) & (3,518) & (3,382) & (3,370) \\
    $\widehat{\mathrm{ATT}}$   & --3,282*** & 257   & 975   & --892 \\
          & (863) & (694) & (813) & (906) \\
    $\widehat{\mathrm{ATU}}$   & --9,295*** & --7,594** & --6,328* & --6,646* \\
          & (2,415) & (3,556) & (3,420) & (3,409) \\
    \multicolumn{1}{l}{} & & & & \\
    \multicolumn{1}{c}{} & \multicolumn{4}{c}{Matching on the logit propensity score}  \\
\cline{2-5}
    $\widehat{\mathrm{ATE}}$   & --6,682** & --7,683*** & --4,187 & --2,961 \\
          & (2,773) & (2,421) & (3,012) & (11,900) \\
    $\widehat{\mathrm{ATT}}$   & --3,855*** & 265   & 2,117** & 2,032** \\
          & (854) & (695) & (856) & (860) \\
    $\widehat{\mathrm{ATU}}$   & --6,714** & --7,775*** & --4,260 & --3,018 \\
          & (2,804) & (2,448) & (3,046) & (12,037) \\
    \multicolumn{1}{l}{} & & & & \\
    \multicolumn{1}{c}{} & \multicolumn{4}{c}{Regression adjustment}  \\
\cline{2-5}
    $\widehat{\mathrm{ATE}}$   & --6,132*** & --6,218** & --4,952* & --4,930 \\
          & (1,644) & (2,534) & (2,996) & (3,073) \\
    $\widehat{\mathrm{ATT}}$   & --3,417*** & --69  & 623   & 796 \\
          & (628) & (598) & (628) & (639) \\
    $\widehat{\mathrm{ATU}}$   & --6,163*** & --6,289** & --5,017* & --4,996 \\
          & (1,662) & (2,561) & (3,030) & (3,108) \\
    \multicolumn{1}{l}{} & & & & \\
    \multicolumn{1}{l}{Demographic controls} & \checkmark & & \checkmark & \checkmark \\
    \multicolumn{1}{l}{Earnings in 1974} & & & & \checkmark \\
    \multicolumn{1}{l}{Earnings in 1975} & & \checkmark & \checkmark & \checkmark \\
    \multicolumn{1}{l}{} & & & & \\
    \multicolumn{1}{l}{$\hat{\rho} = \hat{\pr} \left( d=1 \right)$} & 0.011 & 0.011 & 0.011 & 0.011 \\
    \multicolumn{1}{l}{Observations} & 16,177 & 16,177 & 16,177 & 16,177 \\
\hline
\end{tabular}
\begin{tablenotes}[flushleft]
\item \textit{Notes:} The dependent variable is earnings in 1978. Demographic controls include age, age squared, years of schooling, and indicators for married, high school dropout, black, and Hispanic. For treated individuals, earnings in 1974 correspond to real earnings in months 13--24 prior to randomization, which overlaps with calendar year 1974 for a number of individuals. For ``matching on the LPM propensity score'' and ``matching on the logit propensity score,'' estimation is based on nearest-neighbor matching on the estimated propensity score (with a single match). The propensity score is estimated using a linear probability model (LPM) or a logit model. For ``regression adjustment,'' estimation is based on the estimator discussed in \cite{Kline2011}. Huber--White standard errors (regression adjustment) and Abadie--Imbens standard errors (matching) are in parentheses. Abadie--Imbens standard errors ignore that the propensity score is estimated.
\item *Statistically significant at the 10\% level; **at the 5\% level; ***at the 1\% level.
\end{tablenotes}
\end{footnotesize}
\end{threeparttable}
\end{adjustwidth}
\end{table}

\clearpage

\subsection{The Effects of Cash Transfers on Longevity}
\label{app:aizer}

\setcounter{table}{0}
\renewcommand{\thetable}{\ref{app:aizer}.\arabic{table}}
\setcounter{figure}{0}
\renewcommand{\thefigure}{\ref{app:aizer}.\arabic{figure}}

\vspace{4.425cm}

\begin{figure}[!h]
\begin{adjustwidth}{-1in}{-1in}
\centering
\caption{WLS Estimates of the Effects of Cash Transfers on Longevity\label{fig:mpwls1}}
\includegraphics[width=12cm]{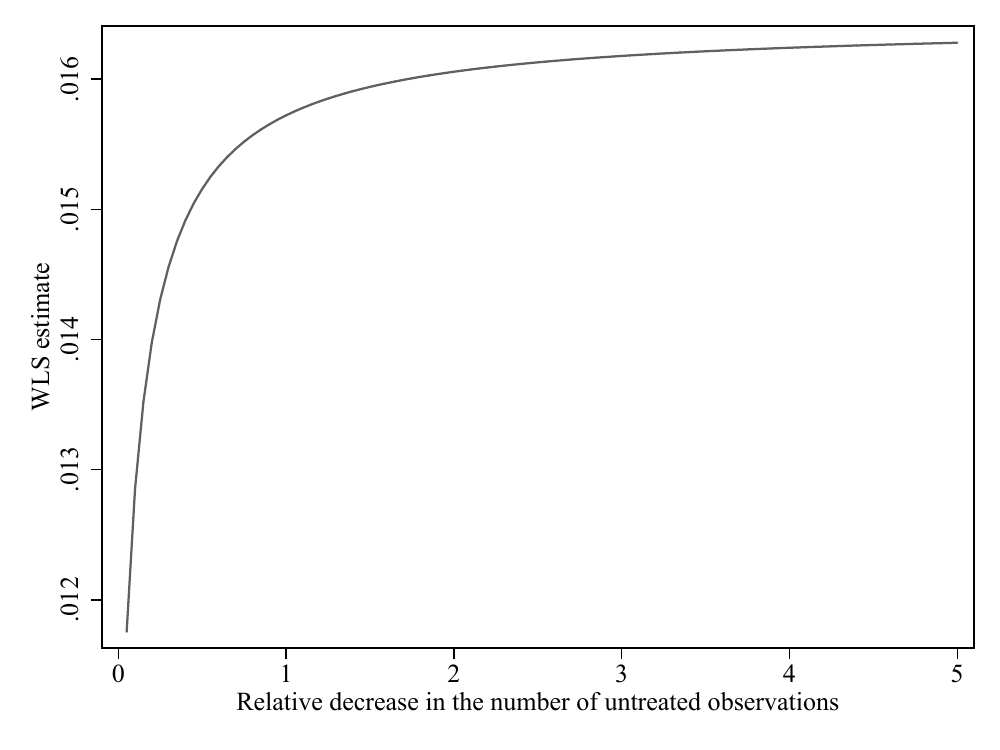}
\\
\vspace{-2mm}
\begin{footnotesize}
\begin{tabular}{p{12cm}}
\textit{Notes:} The vertical axis represents WLS estimates of the effect of cash transfers on log age at death using the model in equation (\ref{ols}) in the main text and the specification in column 1 of Table \ref{tab:mp1}, with weights of 1 for treated and $\frac{1}{k}$ for untreated units. The horizontal axis represents $k$.
\end{tabular}
\end{footnotesize}
\end{adjustwidth}
\end{figure}

\begin{figure}[!h]
\begin{adjustwidth}{-1in}{-1in}
\centering
\caption{WLS Estimates of the Effects of Cash Transfers on Longevity\label{fig:mpwls2}}
\includegraphics[width=12cm]{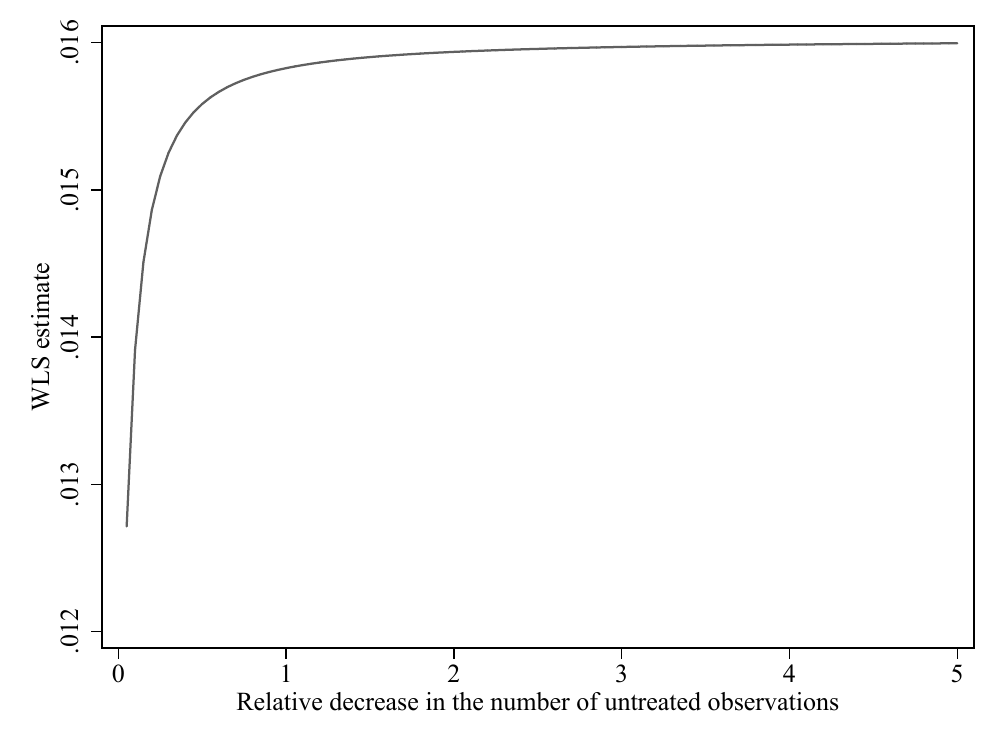}
\\
\vspace{-2mm}
\begin{footnotesize}
\begin{tabular}{p{12cm}}
\textit{Notes:} The vertical axis represents WLS estimates of the effect of cash transfers on log age at death using the model in equation (\ref{ols}) in the main text and the specification in column 2 of Table \ref{tab:mp1}, with weights of 1 for treated and $\frac{1}{k}$ for untreated units. The horizontal axis represents $k$.
\end{tabular}
\end{footnotesize}
\end{adjustwidth}
\end{figure}

\begin{figure}[!h]
\begin{adjustwidth}{-1in}{-1in}
\centering
\caption{WLS Estimates of the Effects of Cash Transfers on Longevity\label{fig:mpwls3}}
\includegraphics[width=12cm]{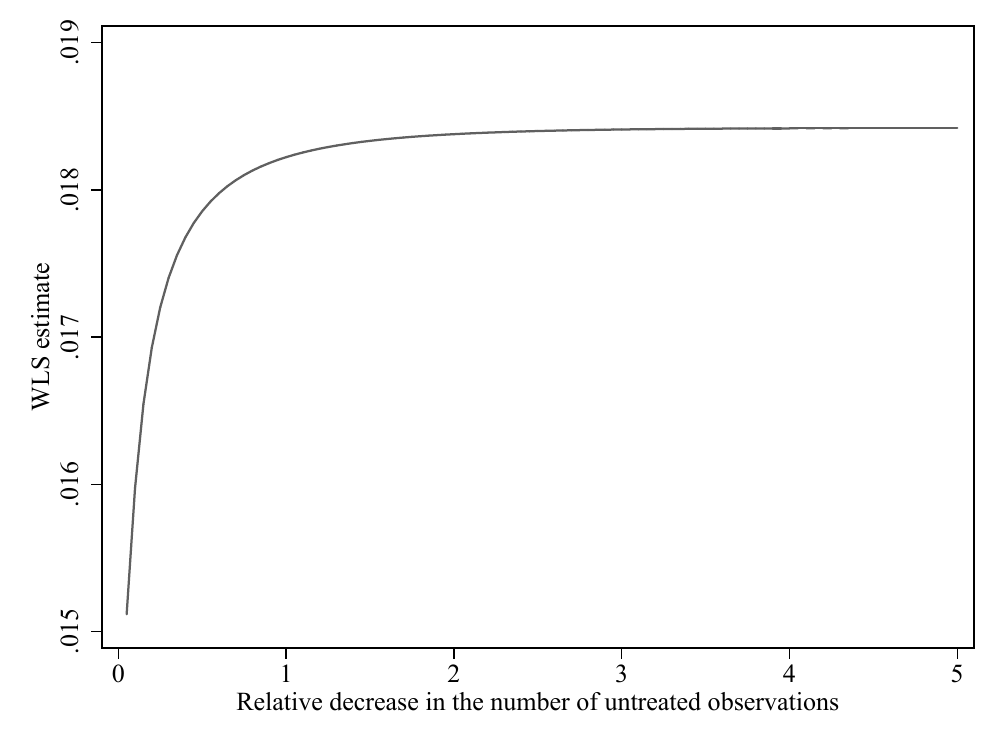}
\\
\vspace{-2mm}
\begin{footnotesize}
\begin{tabular}{p{12cm}}
\textit{Notes:} The vertical axis represents WLS estimates of the effect of cash transfers on log age at death using the model in equation (\ref{ols}) in the main text and the specification in column 3 of Table \ref{tab:mp1}, with weights of 1 for treated and $\frac{1}{k}$ for untreated units. The horizontal axis represents $k$.
\end{tabular}
\end{footnotesize}
\end{adjustwidth}
\end{figure}

\begin{figure}[!h]
\begin{adjustwidth}{-1in}{-1in}
\centering
\caption{WLS Estimates of the Effects of Cash Transfers on Longevity\label{fig:mpwls4}}
\includegraphics[width=12cm]{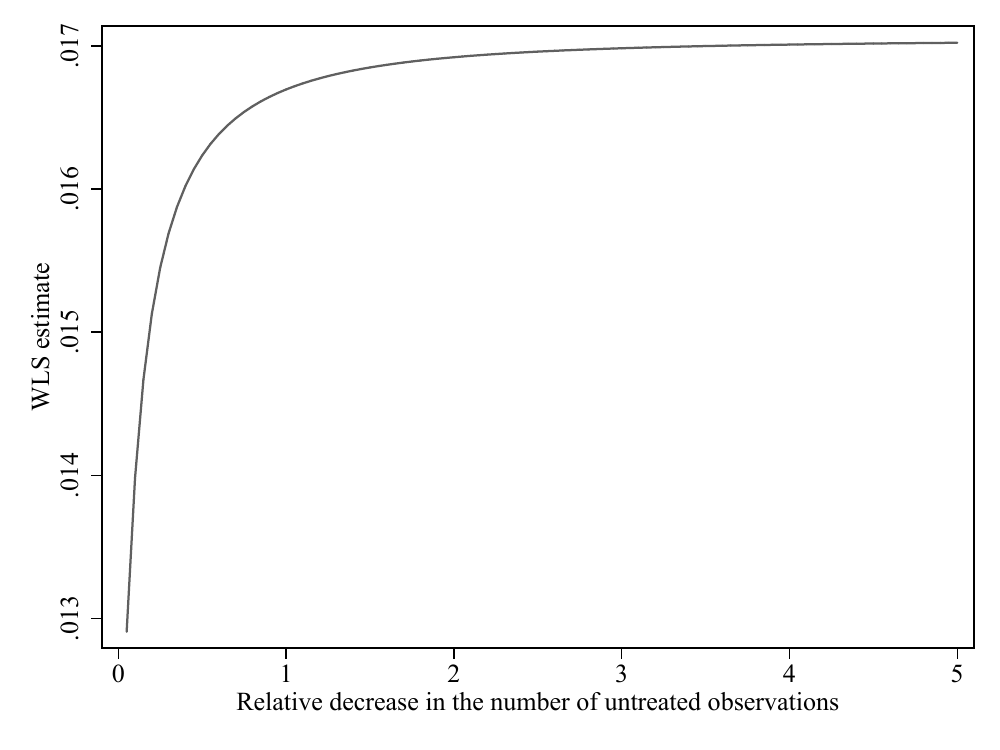}
\\
\vspace{-2mm}
\begin{footnotesize}
\begin{tabular}{p{12cm}}
\textit{Notes:} The vertical axis represents WLS estimates of the effect of cash transfers on log age at death using the model in equation (\ref{ols}) in the main text and the specification in column 4 of Table \ref{tab:mp1}, with weights of 1 for treated and $\frac{1}{k}$ for untreated units. The horizontal axis represents $k$.
\end{tabular}
\end{footnotesize}
\end{adjustwidth}
\end{figure}

\begin{figure}[!h]
\begin{adjustwidth}{-1in}{-1in}
\centering
\caption{Relationship Between Longevity and $p \left( X \right)$\label{fig:mpypx1}}
\includegraphics[width=12cm]{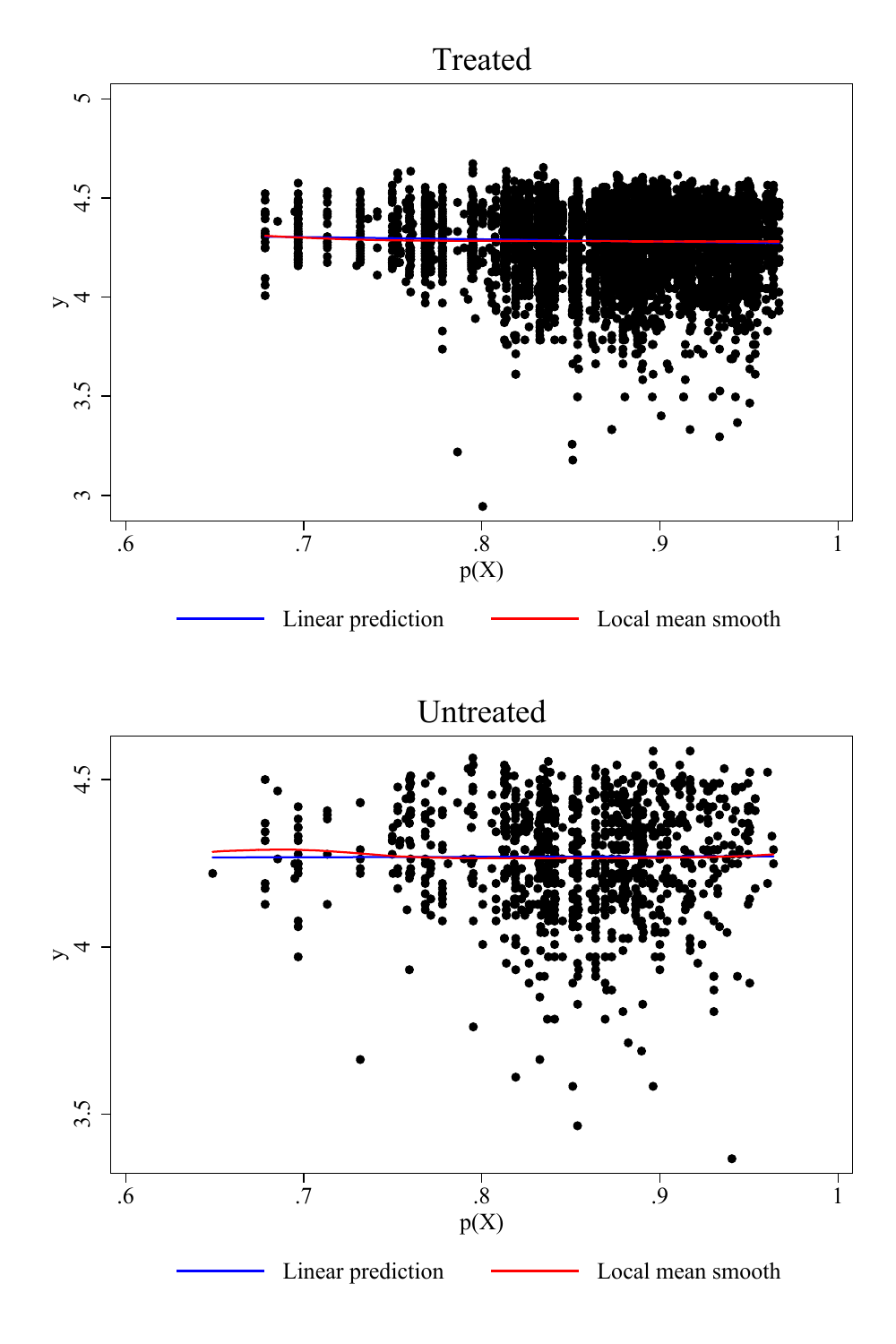}
\\
\vspace{-2mm}
\begin{footnotesize}
\begin{tabular}{p{12cm}}
\textit{Notes:} The vertical axis represents log age at death, as reported in the MP records. The horizontal axis represents the LPM propensity score. The propensity score is estimated using the specification in column 1 of Table \ref{tab:mp1}. ``Local mean smooth'' is estimated using the Epanechnikov kernel and a rule-of-thumb bandwidth.
\end{tabular}
\end{footnotesize}
\end{adjustwidth}
\end{figure}

\begin{figure}[!h]
\begin{adjustwidth}{-1in}{-1in}
\centering
\caption{Relationship Between Longevity and $p \left( X \right)$\label{fig:mpypx2}}
\includegraphics[width=12cm]{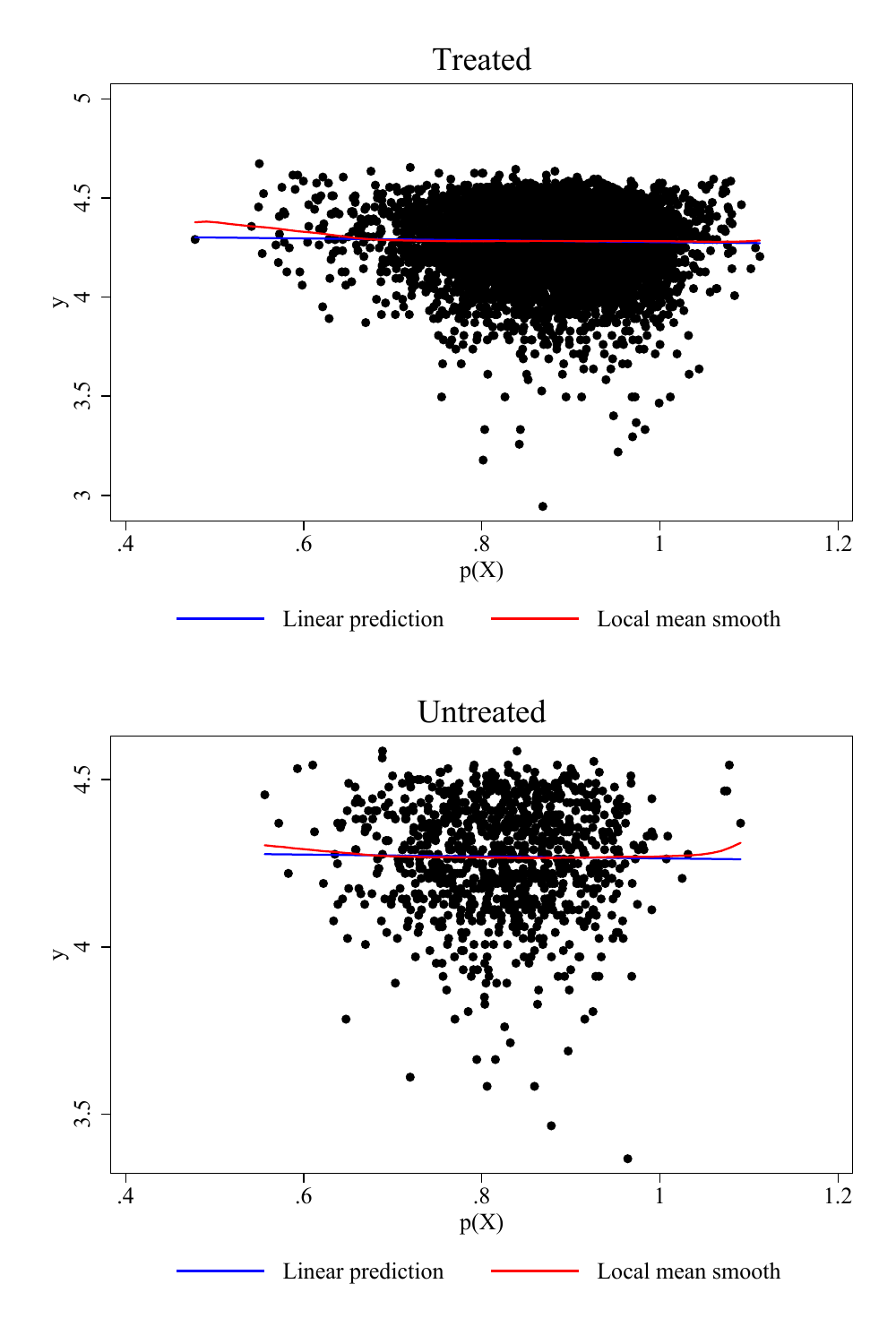}
\\
\vspace{-2mm}
\begin{footnotesize}
\begin{tabular}{p{12cm}}
\textit{Notes:} The vertical axis represents log age at death, as reported in the MP records. The horizontal axis represents the LPM propensity score. The propensity score is estimated using the specification in column 2 of Table \ref{tab:mp1}. ``Local mean smooth'' is estimated using the Epanechnikov kernel and a rule-of-thumb bandwidth.
\end{tabular}
\end{footnotesize}
\end{adjustwidth}
\end{figure}

\begin{figure}[!h]
\begin{adjustwidth}{-1in}{-1in}
\centering
\caption{Relationship Between Longevity and $p \left( X \right)$\label{fig:mpypx3}}
\includegraphics[width=12cm]{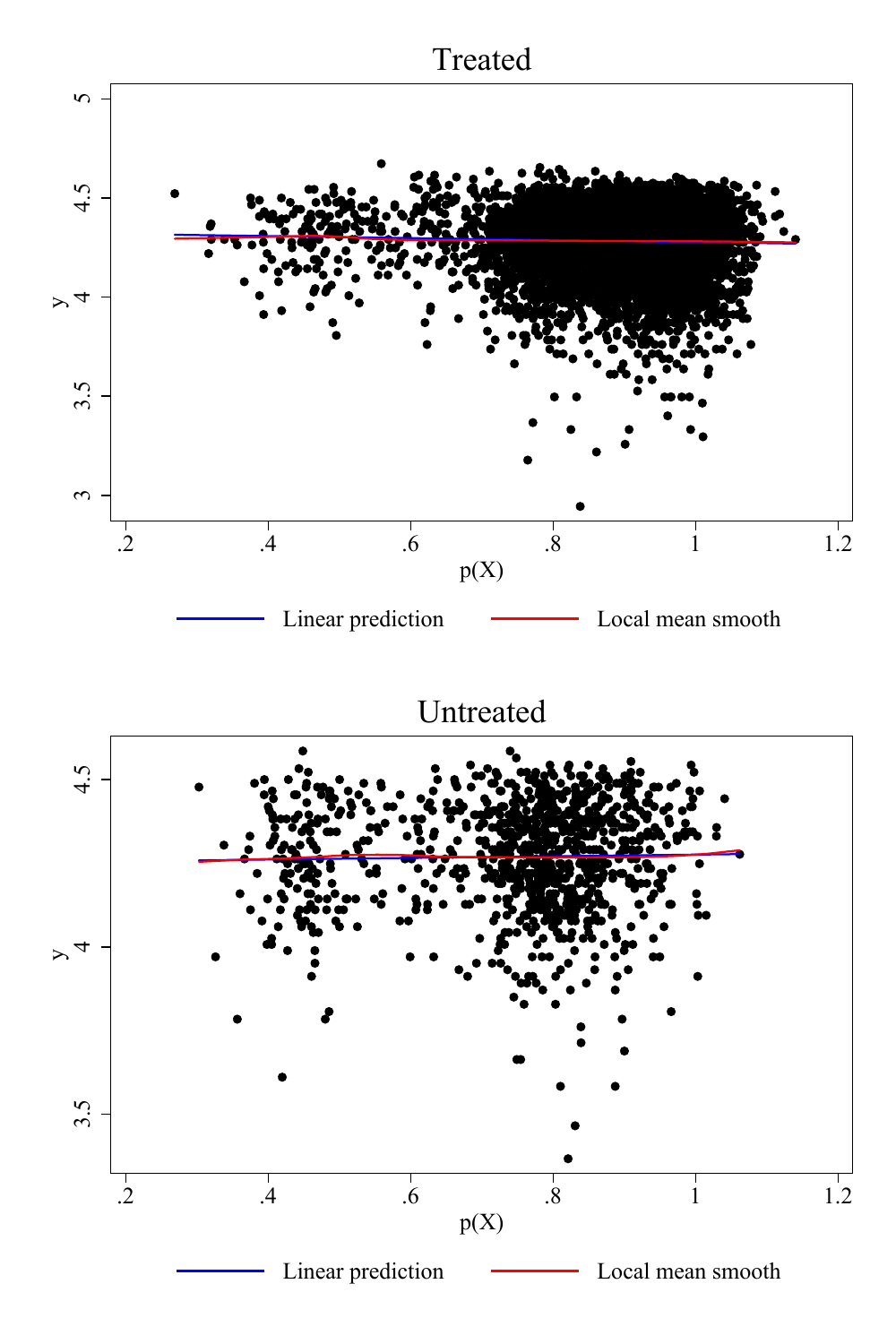}
\\
\vspace{-2mm}
\begin{footnotesize}
\begin{tabular}{p{12cm}}
\textit{Notes:} The vertical axis represents log age at death, as reported in the MP records. The horizontal axis represents the LPM propensity score. The propensity score is estimated using the specification in column 3 of Table \ref{tab:mp1}. ``Local mean smooth'' is estimated using the Epanechnikov kernel and a rule-of-thumb bandwidth.
\end{tabular}
\end{footnotesize}
\end{adjustwidth}
\end{figure}

\begin{figure}[!h]
\begin{adjustwidth}{-1in}{-1in}
\centering
\caption{Relationship Between Longevity and $p \left( X \right)$\label{fig:mpypx4}}
\includegraphics[width=12cm]{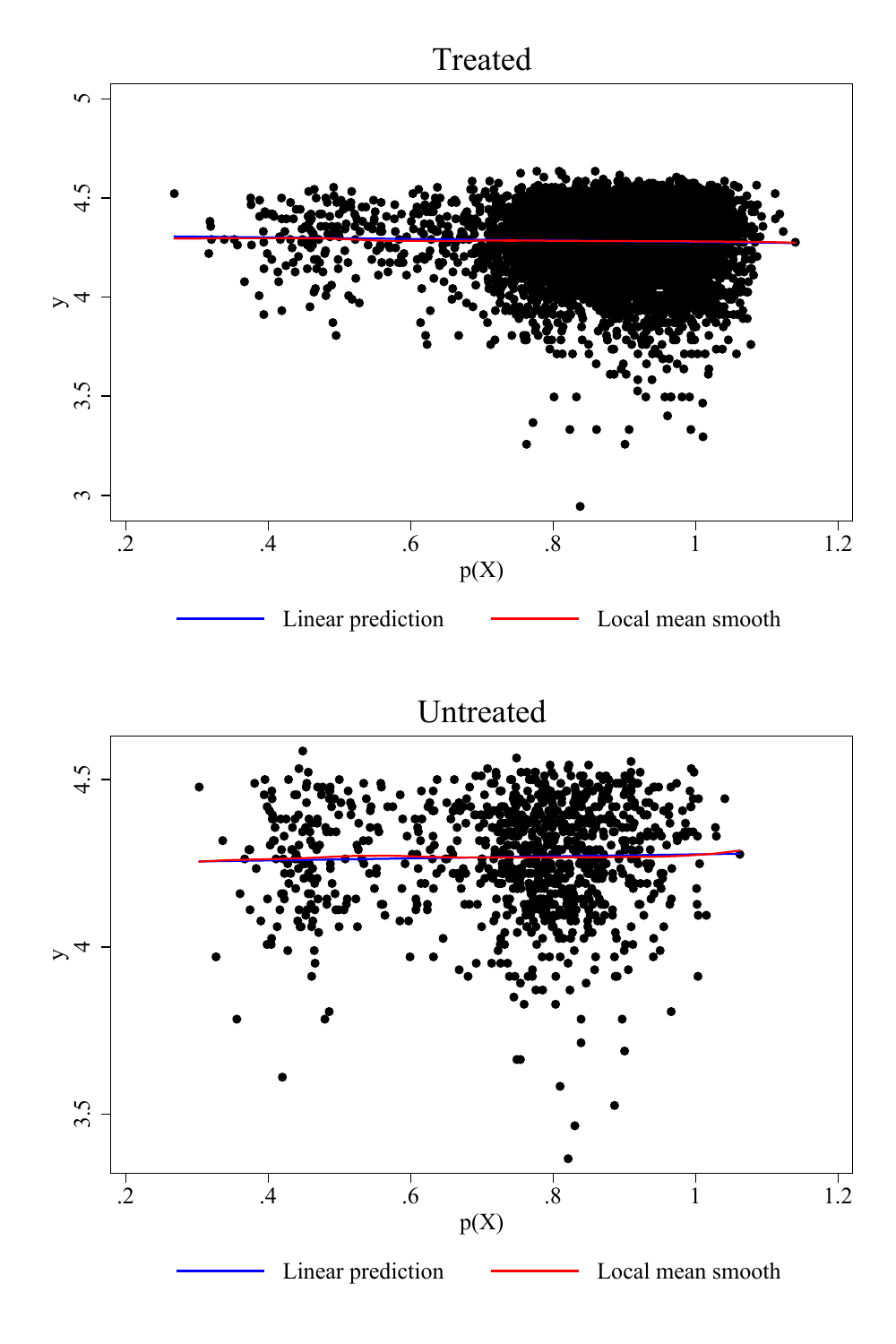}
\\
\vspace{-2mm}
\begin{footnotesize}
\begin{tabular}{p{12cm}}
\textit{Notes:} The vertical axis represents log age at death, as reported on the death certificate. The horizontal axis represents the LPM propensity score. The propensity score is estimated using the specification in column 4 of Table \ref{tab:mp1}. ``Local mean smooth'' is estimated using the Epanechnikov kernel and a rule-of-thumb bandwidth.
\end{tabular}
\end{footnotesize}
\end{adjustwidth}
\end{figure}

\begin{table}[!h]
\begin{adjustwidth}{-1in}{-1in}
\centering
\begin{threeparttable}
\begin{footnotesize}
\caption{Alternative Estimates of the Effects of Cash Transfers on Longevity\label{tab:mp3}}
\begin{tabular}{l >{\centering\arraybackslash}m{2.5cm} >{\centering\arraybackslash}m{2.5cm} >{\centering\arraybackslash}m{2.5cm} >{\centering\arraybackslash}m{2.5cm}}
\hline\hline
    \multicolumn{1}{c}{} & (1) & (2) & (3) & (4) \\
\hline
    \multicolumn{1}{c}{} & \multicolumn{4}{c}{Matching on the LPM propensity score}  \\
\cline{2-5}
    $\widehat{\mathrm{ATE}}$   & 0.0110 & 0.0147* & 0.0022 & 0.0011 \\
          & (0.0070) & (0.0089) & (0.0099) & (0.0098) \\
    $\widehat{\mathrm{ATT}}$   & 0.0106 & 0.0143 & --0.0002 & --0.0002 \\
          & (0.0073) & (0.0096) & (0.0109) & (0.0107) \\
    $\widehat{\mathrm{ATU}}$   & 0.0144** & 0.0179** & 0.0194** & 0.0100 \\
          & (0.0059) & (0.0082) & (0.0084) & (0.0085) \\
    \multicolumn{1}{l}{} & & & & \\
    \multicolumn{1}{c}{} & \multicolumn{4}{c}{Matching on the logit propensity score}  \\
\cline{2-5}
    $\widehat{\mathrm{ATE}}$   & 0.0111 & 0.0183** & --0.0019 & --0.0054 \\
          & (0.0073) & (0.0081) & (0.0166) & (0.0166) \\
    $\widehat{\mathrm{ATT}}$   & 0.0107 & 0.0181** & --0.0043 & --0.0105 \\
          & (0.0077) & (0.0087) & (0.0187) & (0.0186) \\
    $\widehat{\mathrm{ATU}}$   & 0.0145** & 0.0193** & 0.0152* & 0.0309*** \\
          & (0.0059) & (0.0083) & (0.0085) & (0.0083) \\
    \multicolumn{1}{l}{} & & & & \\
    \multicolumn{1}{c}{} & \multicolumn{4}{c}{Regression adjustment}  \\
\cline{2-5}
    $\widehat{\mathrm{ATE}}$   & 0.0105* & 0.0100 & 0.0140 & 0.0130 \\
          & (0.0063) & (0.0070) & (0.0110) & (0.0110) \\
    $\widehat{\mathrm{ATT}}$   & 0.0096 & 0.0092 & 0.0133 & 0.0124 \\
          & (0.0064) & (0.0073) & (0.0121) & (0.0121) \\
    $\widehat{\mathrm{ATU}}$   & 0.0164*** & 0.0160*** & 0.0184*** & 0.0170*** \\
          & (0.0058) & (0.0061) & (0.0065) & (0.0065) \\
    \multicolumn{1}{l}{} & & & & \\
    \multicolumn{1}{l}{State fixed effects} & \checkmark & & & \\
    \multicolumn{1}{l}{County fixed effects} & & & \checkmark & \checkmark \\
    \multicolumn{1}{l}{Cohort fixed effects} & \checkmark & \checkmark & \checkmark & \checkmark \\
    \multicolumn{1}{l}{State characteristics} & & \checkmark & \checkmark & \checkmark \\
    \multicolumn{1}{l}{County characteristics} & & \checkmark & & \\
    \multicolumn{1}{l}{Individual characteristics} & & \checkmark & \checkmark & \checkmark \\
    \multicolumn{1}{l}{} & & & & \\
    \multicolumn{1}{l}{$\hat{\rho} = \hat{\pr} \left( d=1 \right)$} & 0.875 & 0.875 & 0.875 & 0.875 \\
    \multicolumn{1}{l}{Observations} & 7,860 & 7,859 & 7,859 & 7,857 \\
\hline
\end{tabular}
\begin{tablenotes}[flushleft]
\item \textit{Notes:} The dependent variable is log age at death, as reported in the MP records (columns 1 to 3) or on the death certificate (column 4). State characteristics include manufacturing wages, age of school entry, minimum age for work permit, an indicator for a continuation school requirement, state laws concerning MP transfers (work requirement, reapplication requirement, and maximum amounts for first and second child), and log expenditures on education, charity, and social programs. County characteristics include average value of farm land, mean and SD of socio-economic index, poverty rate, female lfp rate, and shares of urban population, widowed women, children living with single mothers, and children working. Individual characteristics include child age at application, age of oldest and youngest child in family, number of letters in name, and indicators for the number of siblings, the marital status of the mother, and whether date of birth is incomplete. For ``matching on the LPM propensity score'' and ``matching on the logit propensity score,'' estimation is based on nearest-neighbor matching on the estimated propensity score (with a single match). The propensity score is estimated using a linear probability model (LPM) or a logit model. For ``regression adjustment,'' estimation is based on the estimator discussed in \cite{Kline2011}. Huber--White standard errors (regression adjustment) and Abadie--Imbens standard errors (matching) are in parentheses. Abadie--Imbens standard errors ignore that the propensity score is estimated.
\item *Statistically significant at the 10\% level; **at the 5\% level; ***at the 1\% level.
\end{tablenotes}
\end{footnotesize}
\end{threeparttable}
\end{adjustwidth}
\end{table}

\end{appendices}

\clearpage
\pagebreak

\setlength\bibsep{0pt}
\bibliographystyle{apalike}
\bibliography{Sloczynski_references}

\end{document}